\documentclass[10pt, journal]{IEEEtran}

\IEEEoverridecommandlockouts  
\usepackage{amsmath, amsthm, amssymb}
\usepackage{listings}
\usepackage{indentfirst}
\usepackage{bbm}
\usepackage{graphicx}
\usepackage{subfigure}
\usepackage{float}
\usepackage{extarrows}
\usepackage{epstopdf}
\usepackage{caption}
\usepackage{cite,color}
\usepackage{todonotes}
\usepackage{tikz,pgfplots}
\usepackage{xcolor}
\usetikzlibrary{automata,positioning}
\usetikzlibrary{arrows,shapes,chains}
\usepgflibrary{patterns}
\usepackage{algorithm}
\usepackage{algorithmic}
\usepackage{tikz,pgfplots}
\usepackage{subcaption}
\usepackage{caption}
\usetikzlibrary{arrows, intersections}
\pgfplotsset{plot coordinates/math parser=false}

\newtheorem{lemma}{Lemma}

\newtheorem{theorem}{Theorem}
\newtheorem{definition}{Definition}

\newtheorem{proposition}{Proposition}
\newtheorem{remark}{Remark}

\newtheorem{fact}{Fact}
\usepackage{enumitem}
\setlist[itemize]{leftmargin=*}

\newcommand{\set}[1]{\left\lbrace#1\right\rbrace}
\newcommand{\setIn}[1]{\mathbbm{1}_{\set{#1}}}

\begin{document}
	
\title{\huge Timely Requesting for Time-Critical Content Users in Decentralized F-RANs}
\date{}
\author{Xingran Chen, {\it Member}, IEEE, \IEEEmembership{}
	Kai Li,\IEEEmembership{}
	and Kun Yang, {\it Fellow}, IEEE\IEEEmembership{}
	\IEEEcompsocitemizethanks 
	{
	\IEEEcompsocthanksitem	This work was accepted for publication in the IEEE Transactions on Networking in August 2025.
		
	\IEEEcompsocthanksitem This work was supported by Young Scientists Fund of the National Natural Science Foundation of China (Grant No. 62401111), Natural Science Foundation of China (Grant No. 62132004), the Jiangsu Major Project on Basic Researches (Grant No. BK20243059), and Gusu Innovation Project for Talented People (Grant No. ZXL2024360), High-Tech District of Suzhou City (Grant No. RC2025001), and Quzhou Goveronment (Grant No. 2024D007, 2023D005). \textit{(Corresponding Author: Kun Yang)}
				
	\IEEEcompsocthanksitem Xingran Chen and Kai Li are with School of Information and Communication Engineering,  University of Electronic Science and Technology of China, Chengdu, 611731, China (E-mail: xingranc@ieee.org, 202221011122@std.uestc.edu.cn).
		\IEEEcompsocthanksitem  Kun Yang is with the State Key Laboratory of Novel Software Technology, Nanjing University, Nanjing, 210008, China,  School of Intelligent Software and Engineering, Nanjing University (Suzhou Campus), Suzhou, 215163, China, and School of Information and Communication Engineering, University of Electronic Science and Technology of China, Chengdu, 611731, China (E-mail: kunyang@nju.edu.cn).}
}

\maketitle
\begin{abstract}
With the rising demand for high-rate and timely communications, fog radio access networks (F-RANs) offer a promising solution. This work investigates age of information (AoI) performance in F-RANs, consisting of multiple content users (CUs), enhanced remote radio heads (eRRHs), and content providers (CPs). Time-critical CUs need rapid content updates from CPs but cannot communicate directly with them; instead, eRRHs act as intermediaries. CUs decide whether to request content from a CP and which eRRH to send the request to, while eRRHs decide whether to command CPs to update content or use cached content. We study two broad classes of policies: (i) oblivious policies, where decision-making is independent of historical information, and (ii) non-oblivious policies, where decisions are influenced by historical information. We first derive closed-form expressions for the average AoI of eRRHs under both policy types. Due to the complexity of calculating closed-form expressions for CUs, we then derive general upper bounds for their average AoI. Next, we identify optimal policies for both types. Under both optimal policies, each CU requests content from each CP at an equal rate. When demand is low or resources are limited, all requests are consolidated to a single eRRH; when demand is high and resources are ample, requests are evenly distributed among eRRHs. eRRHs command content from each CP at an equal rate under an optimal oblivious policy, while prioritize the CP with the highest age under an optimal non-oblivious policy. Our numerical results validate these theoretical findings.  We further extend our analytical framework to two generalized scenarios, and simulations confirm the validity of our conclusions.
\end{abstract}
{\bf\emph{Index Terms---}Age of information, multi-hop networks, fog radio access networks, decentralized policies, stationary randomized policies.\rm}

\section{Introduction}\label{sec: Introduction}

The proliferation of multimedia services has led to massive volumes of data-intensive and time-sensitive applications generated by end-users in wireless networks. This has led to a surge in the demand for high-rate and timely communications. Recently, fog radio access networks (F-RANs) have emerged as a promising solution to meet this growing demand. Fog computing, a new paradigm, provides a platform for local computing, distribution, and storage on end-user devices, as opposed to relying solely on centralized data centers   \cite{fogcomputing}. As an evolution of cloud radio access networks (C-RANs) \cite{C-RAN}, F-RANs feature fog access points to a centralized baseband signal processing unit via fronthaul links.  Each fog access point is equipped with caching capabilities to proactively store content \cite{FRAN}, enabling  end-users to access data locally without needing to reach the baseband unit (BBU) pool, thus reducing the load on both the fronthaul links and the BBU pool.

\begin{figure}[htbp]
	\centering
	\includegraphics[height=4.5cm, width=6cm]{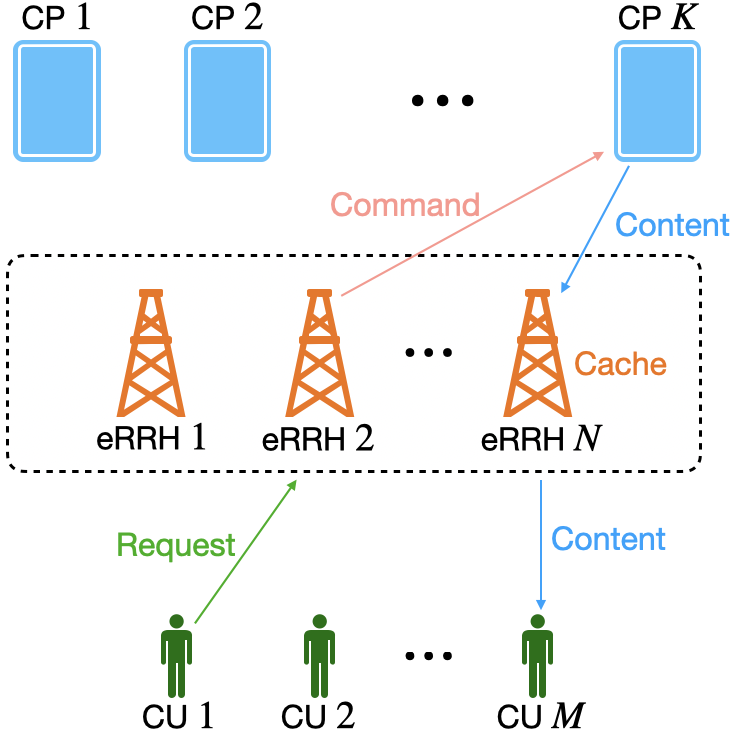}
	\caption{An example of a decentralized F-RAN.}
	\label{F-RAN0}
\end{figure}

This paper focuses on ensuring timely content delivery for latency-sensitive end-users, referred to content users (CUs), in decentralized F-RANs (see Fig.~\ref{F-RAN0}). In an F-RAN, multiple content providers (CPs), enhanced remote radio heads (eRRHs), and CUs interact in a network. CUs, which require timely content from CPs, cannot communicate directly with CPs. Instead, eRRHs act as gateways. Upon receiving requests from CUs, eRRHs can either command CPs to provide new content or retrieve cached content \cite{on-demandAoI}. Given that all CUs are time-critical,  the objective is to optimize the information freshness for CUs by proposing optimal policies for both CUs and eRRHs.  We quantify freshness using the {\it Age of Information (AoI)} metric \cite{AoIconcept}, which reflects the staleness of data at the receiver.  This metric is crucial in Internet of Things (IoT) applications where the timeliness of information is essential, such as in system status monitoring \cite{AoIconcept}. Addressing this problem involves two main sequential steps: (i) CUs making request decisions based on demand and resource availability, and (ii) eRRHs making command decisions based on available resources. 
Optimizing AoI in F-RANs has significantly practical implications, particularly for real-time applications that rely on timely data updates such as smart cities, autonomous vehicles, industrial automation, and health monitoring \cite{cxrTIT, cxrinfocom, Timelycache}. Improved AoI management ensures that these systems operate with up-to-date information, reducing latency and enhancing overall performance and reliability.

This problem presents the following key challenges:
\begin{itemize}
	\item [](i) Decentralization among CUs and eRRHs:  In a decentralized F-RAN, each CU and eRRH makes decisions based solely on local information, making coordination among them inherently difficult.
	\item [](ii)  Inter-dependence within each CU: During one time slot, a CU can send at most one request per CP to eRRHs,  leading to complex inter-dependencies that significantly complicate analysis, and the {\it age of version} metric introduced in previous works \cite{Freshcaching, Cacheupdating, Timelycache, cacherecommendation, timelyproactivecache} cannot be applied. 
	\item [] (iii) Complexity from the network topology: An F-RAN operates as a two-hop network with multiple receiver nodes at each hop.  Designing optimal policies is challenging due to the limited communication bandwidth.
	\item [] (iv) Historical information: Since CUs require the freshest content, any outdated content in the eRRHs' local cache must be discarded. This requirement leads to an objective function involving multiple nested {\it min} operations, which complicates the analytical derivation.
\end{itemize}
By addressing these challenges, our objective becomes clearer: to develop and analyze an optimal decentralized strategy that effectively utilizes historical information to optimize freshness for time-critical CUs in F-RANs.

\subsection{Related Work}\label{subsec: Related Work}
The AoI has been proposed as a metric to quantify the freshness of information at the receiver side \cite{AoIconcept}. By definition, AoI measures the time elapsed since the generation of the most recently received update. It depends jointly on the update frequency and the transmission delay experienced in the communication network \cite{cxrTIT}. This work intersects three major research directions: (i) AoI in multi-hop networks, (ii) AoI in edge/fog computing networks, and (iii) AoI in cache updating systems. These areas are not mutually exclusive and share conceptual overlaps. We review relevant studies in each of these directions below.

\subsubsection{AoI in Multi-hop Networks} 
A significant body of research has focused on analyzing AoI in single-hop networks. However, studies on AoI in multi-hop networks are relatively limited due to the added complexity from network topologies and inter-node interference.

One of the earliest works addressing AoI optimization in general networks is \cite{MultisourceMultihop}. This study considered multi-source multi-hop wireless networks where nodes function as both senders and receivers. It derived lower and upper bounds on average and peak AoI and proposed an optimal scheduling policy to minimize these metrics. However, the framework is limited to connected networks with nine or fewer nodes. In contrast, \cite{FundamentalScheduling} proposed near-optimal periodic status update scheduling policies using graph properties and obtained lower bounds for average and peak AoI. These policies can be applied to any connected topology. Another early work, \cite{AoIMN}, investigated the minimization of AoI for a single information flow in interference-free multi-hop networks. It proved that a preemptive last-generated, first-served policy results in smaller age processes across all nodes in the network compared to any other causal policy (in a stochastic ordering sense) if the packet transmission times are exponentially distributed.

Subsequent studies have explored optimal policies in general multi-hop networks. \cite{Multihopwirelessnetworks} proposed a near-optimal network control policy, achieving an average AoI close to a theoretical lower bound. \cite{Modiano-1, Modiano-2} investigated optimal stationary randomized policies and optimal scheduling policies, termed age difference policies and age debt policies. The optimal age debt policy can be applied to general multi-hop networks with unicast, multicast, and broadcast flows. Additionally, \cite{BoostingHindering} explored the trade-off between AoI and throughput in general multi-hop wireless networks, identified Pareto-optimal points and provided insights into balancing AoI and throughput. \cite{ODFMmultihop} examined AoI optimization using OFDM for multi-channel spectrum access, focusing on the impacts of real-world factors such as orthogonal channel access and wireless interference on AoI.
\cite{Non-Poissonmultihop} investigated AoI in multi-hop multicast cache-enabled networks with inter-update times that are not necessarily exponentially distributed. The study demonstrated that the expected AoI has an additive structure and is directly proportional to the variance of inter-update times across all links. Finally, \cite{cxrgnn} proposed transferable decentralized policies for minimizing AoI and estimation error, making them applicable to networks of arbitrary size.

\subsubsection{AoI in Edge/Fog Computing Networks}
In mobile edge computing (MEC) and fog computing networks, information  typically undergoes  two phases: transmission and processing. Consequently, mathematical models for these networks are often established as two-hop networks and tandem queues.

The first study to focus on the AoI for edge computing applications is \cite{AoIedge1}, which primarily calculated the average AoI. \cite{MPeakAoI} investigated the average AoI in a network with multiple sensors and one destination. The study derived analytical expressions for the average peak AoI for different sensors and designed a derivative-free algorithm to obtain the optimal updating frequency. A generic tandem model with a first-come-first-serve discipline was considered in \cite{PAoIedge}. The authors obtained the distribution of peak AoI for both M/M/1-M/D/1 and M/M/1-M/M/1 tandems. Building on this, \cite{ComputationTransmission} took a step further by considering both average and peak AoI in general tandems with packet management. This work illustrated the tradeoff between computation and transmission when optimizing AoI. In \cite{AoIedgecomputing}, both local computing and edge computing schemes under a first-come-first-serve discipline were considered in an MEC system with multiple users and a single base station (BS). The average AoI of these two computing schemes was derived.

\cite{MultiaccessEdge} and \cite{NOMAedge} investigated information freshness in MEC networks from a multi-access perspective.
\cite{MultiaccessEdge} studied a MEC scenario in which a base station serves multiple IoT traffic streams, proposing an optimal scheduling algorithm via integer linear programming. Along the same lines, \cite{NOMAedge} investigated a NOMA-based offloading system where devices share uplink channels to offload tasks to an MEC-enabled access point (AP), and derived a closed-form AoI solution using queuing theory.

\cite{decentralizedMEC} and \cite{RLEdge} explored the average AoI in MEC networks using other mathematical tools. \cite{decentralizedMEC} considered MEC-enabled IoT networks with multiple source-destination pairs and heterogeneous edge servers. Using game-theoretical analysis, an age-optimal computation-intensive update scheduling strategy was proposed based on Nash equilibrium. Reinforcement learning is also a powerful tool in this context. \cite{RLEdge} proposed a computation offloading method based on a directed acyclic graph task model, which models task dependencies. The algorithm integrates deep Q-learning techniques—including standard, double, and dueling architectures—to effectively minimize AoI in edge computing environments.

\subsubsection{AoI in Cache Updating Systems}
In cache updating systems, maintaining up-to-date cached data is essential to avoid serving stale information to users,  which may lead to incorrect decisions. In \cite{CacheYates}, a system is examined where a local cache maintains multiple content items. The authors introduced a popularity-weighted AoI metric for updating dynamic content in a local cache, demonstrating that the optimal policy involves updating the items in the cache at a rate proportional to the square root of their popularity.

Several studies, including \cite{Freshcaching, Cacheupdating, cacherecommendation, timelyproactivecache, Timelycache}, introduced a new metric called the age of version to measure information freshness in cache updating systems. In \cite{Freshcaching}, a caching scenario for dynamically changing content was considered, and the authors showed that the optimal caching strategy allocates cache space based solely on item popularity.  In \cite{Cacheupdating} examined three types of cache updating systems: (i) one source, one cache, and one user; (ii) one source, a sequence of cascading caches, and one user; and (iii) one user, one cache, and multiple sources. In all cases, the optimal rate allocation policies are threshold-based. \cite{Timelycache} studied timely cache updating policies in parallel multi-relay networks, deriving an upper bound on the average AoI for users and proposing a sub-optimal policy using a stochastic hybrid system approach. \cite{cacherecommendation} investigated the optimal scheduling of cache updates while accounting for content recommendation and AoI. Although the problem was proven NP-hard, efficient algorithms were proposed for the sub-optimization problem. The framework was extended in \cite{timelyproactivecache} to a cache updating system with randomly located caches following a Poisson point process, where the distribution of user-perceived version age of files was derived.

Another key direction in cache updating systems is optimizing AoI for energy-harvesting sensors. The average AoI for an energy-harvesting sensor with caching capability was studied in \cite{CachedEH}, where a probabilistic model was used to determine decisions, and the closed form was derived. A request-based scenario was considered in \cite{requestoriented}, where a cache-enabled base station stores the most recent status observed by energy-harvesting sensors and delivers the cached status to applications upon request. The proposed optimal scheduling policy achieved a 16\% performance gain compared to the traditional greedy policy.

The work most closely related to ours is \cite{Timelycache}, which investigated timely cache updating policies in parallel multi-relay networks. Although both works consider similar network topologies, our setting exhibits several critical differences:  (i) Applications: \cite{Timelycache} addressed cache updating systems for $N$ files in a time-continuous setting, while our work focused on decentralized F-RANs in a time-discrete setting. (ii) Number of senders: \cite{Timelycache} involved only one sender, whereas our work included multiple senders, making our scenario more general since the decisions of one sender affect others. (iii) Decisions within senders: In \cite{Timelycache}, decisions within the sender are independent, whereas in our work, decisions are inter-dependent, significantly increasing the analysis complexity and making our scenario more general. Our theoretical results encompass those of \cite{Timelycache} (see the discussions below Lemma~\ref{lem: optimal H h OS}). (iv) Policies: \cite{Timelycache} used oblivious policies, where decisions are independent of historical information. In contrast, our work includes both oblivious and non-oblivious policies, with the latter depending on historical information. Therefore, our work investigated a more general case and achieved more comprehensive results.

\subsection{Contributions}\label{subsec: contributions}
This paper addresses the issue of optimizing the information freshness for latency-sensitive CUs in F-RANs.  At each time slot, every CU independently makes decisions based on its own information, specifically: (i) whether to send a request to eRRHs for each CP, and (ii) which eRRH to send it to. Upon receiving requests, eRRHs decide whether to command CPs to send new content or to retrieve content from their local cache. The objective is to minimize the average AoI for CUs by designing decentralized stationary randomized policies. 
We explore two policy classes: (i) oblivious policies, where decisions are made independently of historical information, and (ii) non-oblivious policies, where decisions are influenced by historical information \cite{cxrinfocom, cxrgnn}. 
We begin by focusing on an F-RAN with $2$ CUs, $2$ eRRHs, and $2$ CPs (see Section~\ref{sec: optimal OSR policies} and Section~\ref{sec: optimal NSR policies}). Building on this foundation, we extend the framework in Appendix~\ref{App: Generalizations} to accommodate two more general scenarios: (i) collisions occurring over CU-to-eRRH links, and (ii) large-scale networks comprising $M$ CUs, $N$ eRRHs, and $K$ CPs.

In an F-RAN, CUs receive content from CPs via eRRHs, making their AoI directly dependent on that of the eRRHs. To understand this relationship, we establish recursive formulas for the AoI of both CUs and eRRHs and examine their average AoI. Under oblivious policies, where decisions are made without using historical context, the AoI at an eRRH (associated with a CP) follows a geometric distribution. We derive closed-form expressions for the average AoI of eRRHs in this scenario (see Theorem~\ref{thm: closed form Je}). For non-oblivious policies, where eRRHs utilize historical information, the analysis becomes more complex. To address this, we introduce an ergodic two-dimensional Markov chain to derive closed-form expressions for the average AoI (see Theorem~\ref{thm: expectation of Xnk}).

Deriving closed-form expressions for the average AoI of CUs is challenging due to the discarding of outdated content from eRRHs by the CUs. This introduces a set of {\it min} functions into the AoI calculations for CUs. As a result, we derive two upper bounds for the average AoI of CUs: see Theorem~\ref{thm: upper bounds of Jc} for the oblivious case and Theorem~\ref{thm: upper bounds of Jc N2S} for the non-oblivious case.  The first upper bound corresponds to an extreme scenario where eRRHs only send new packets and do not transmit previously cached ones. The second corresponds to an extreme case where CUs replace the most recently received packets with the currently delivered ones from eRRHs, even if they are older. The first bound tends to be more accurate in oblivious settings, while the second is more accurate under non-oblivious policies.

Even in the absence of closed-form expressions for AoI, we are able to theoretically characterize the optimal policies in both oblivious and non-oblivious scenarios. 
An optimal oblivious policy has the following characteristics (see Theorem~\ref{thm: b r optimal}):  (i) Each eRRH commands content from each CP at an equal rate.  (ii) Each CU sends requests to eRRHs for each CP at equal rates.  (iii) To request content from a CP, each CU consolidates all rates to a single eRRH when the demand is low or communication resources are limited, while requests are distributed evenly among eRRHs when the demand is high and communication resources are ample.  An optimal non-oblivious policy shares similar features with the optimal oblivious one, except for the first (see Theorem~\ref{thm: b r optimal NS}). In non-oblivious policies, each eRRH consolidates all rates to the CP with the highest current age in each time slot. This is because the command rates depend on the instantaneous age values of the CPs, making it optimal to focus all resources on the CP with the highest age.
These theoretical results are validated by our numerical simulations.

In the general scenario where collisions occur over CU-to-eRRH links (see Appendix~\ref{subApp: Collision Channels}), we impose a simplifying assumption for analytical tractability: each CU sends a request to every CP in every time slot. Under this assumption, we derive closed-form expressions for the average AoI at the eRRHs under both oblivious and non-oblivious policies (see Proposition~\ref{pro: closed form Je, oblivious, C} and Proposition~\ref{pro: expectation of Xnk, non-oblivious, C}).  Given the request constraint, the optimal policy for each CU—regardless of policy type—is twofold: (i) send requests to all CPs at equal rates; and (ii) for each CP, distribute requests evenly among the available eRRHs. On the eRRH side, an optimal choice under oblivious setting is to distribute requests uniformly across CPs, while an optimal choice under non-oblivious policy is to consolidate all rates to the CP with the highest age in a time slot (see Proposition~\ref{pro: b r optimal, Oblivious, C} and Proposition~\ref{thm: b r optimal NS, non-oblivious, c}). 
Next, we consider a large-scale network consisting of $M$ CUs, $N$ eRRHs, and $K$ CPs (see Appendix~\ref{subApp: Larger Networks}). For analytical tractability, we restrict attention to the oblivious policy regime. We derive closed-form expressions for the average AoI at the eRRHs (see Proposition~\ref{pro: closed form Je, oblivious, L}) and characterize the optimal strategy, which generalizes Theorem~\ref{thm: b r optimal} (see Proposition~\ref{pro: b r optimal, Oblivious, L}). The result reveals that the optimal request distribution adapts dynamically to the system conditions. Specifically, when demand is high and communication resources are abundant, requests should be evenly distributed across all eRRHs. However, as communication resources or demand decrease, the optimal approach shifts towards consolidating requests to a subset of eRRHs. Moreover, the greater the reduction in resources or demand, the smaller the subset of eRRHs that should be utilized. 

The paper is organized as follows. Section~\ref{sec: systemModel} introduces the system model. Sections~\ref{sec: optimal OSR policies} and~\ref{sec: optimal NSR policies} present the theoretical derivation of optimal policies under oblivious and non-oblivious scenarios, respectively. Section~\ref{sec: numerical results} presents simulation results, which validate our theoretical findings. Finally, Section~\ref{sec: Conclusions and Future Directions} concludes the paper.

\section{System Model}\label{sec: systemModel}

We consider a F-RAN consisting of $M$ CUs, $N$  eRRHs, and $K$ CPs. CUs, eRRHs and CPs are assumed to be statistically identical within their respective groups. We denote the sets of CUs, eRRHs, and CPs as $[M] = \{1,2,\cdots, M\}$, $ [N]= \{1,2,\cdots, N\}$, and $[K] = \{1,2,\cdots, K\}$, respectively. An example of a F-RAN is depicted in  Fig.~\ref{outline}.  CUs require time-critical content from CPs but cannot communicate with them directly. Instead, eRRHs in the network act as gateways \cite{on-demandAoI}. Content from CPs is transmitted through packets; therefore, we use the terms \textit{content} and \textit{packets} interchangeably. These eRRHs have the capability to cache and transmit packets. If CU $m$ needs the latest time-critical packet from CP $k$, it sends a request to eRRHs. The eRRH that receives the request is denoted as eRRH $n$. After receiving the request, eRRH $n$ can either retrieve a cached packet locally or command CP $k$ to update a new packet to serve the request. In this network, all CUs are time-sensitive and require packet updates as quickly as possible. 
\begin{figure}[htbp]
	\centering
	\includegraphics[height=4.5cm, width=7cm]{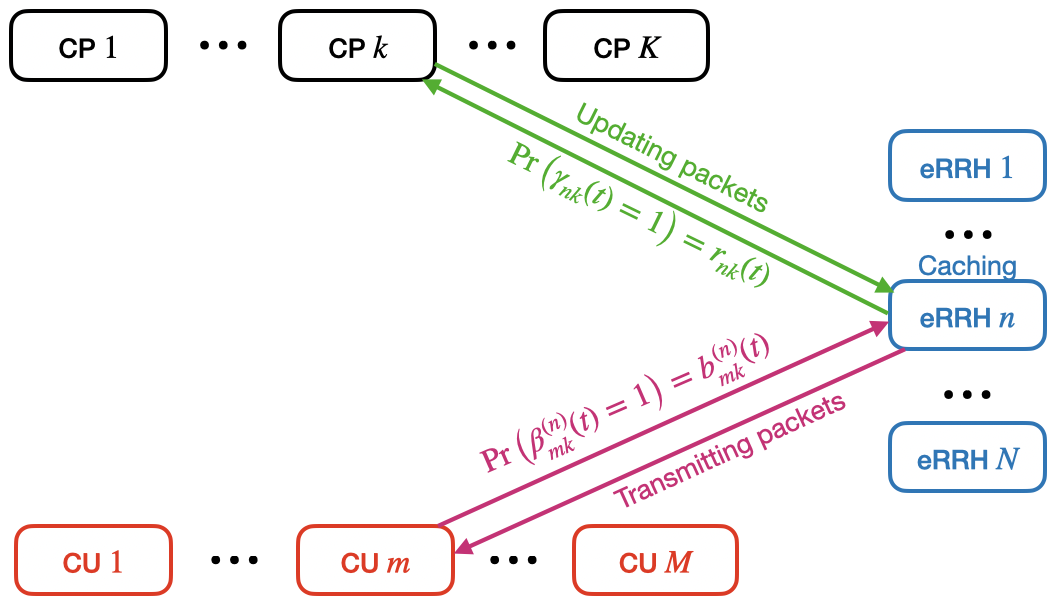}
	\caption{An example of a fog computing-based radio access network.}
	\label{outline}
\end{figure}

We consider a slotted-time system, where each time slot may span seconds, minutes, or longer depending on practical settings.
 At the beginning of every time slot, CU $m$ decides whether to send a request for packets from CP $k$ to eRRH $n$ or not. Denote $\beta_{mk}^{(n)}(t)\in\{0, 1\}$, $m\in [M]$, $n\in [N]$, $k\in [K]$; $\beta_{mk}^{(n)}(t)=1$ indicates that CU $m$ sends a request for CP $k$ to eRRH $n$ and $\beta_{mk}^{(n)}(t)=0$ otherwise. Furthermore, for every $k$,  CU $m$ sends at most one request, i.e., for all $m, k, t$,
\begin{align}\label{eq: at most one request}
\sum_{n\in[N]}\beta_{mk}^{(n)}(t)\leq1.
\end{align}
It is worth noting that $\{\beta_{mk}^{(n)}(t)\}_n$ are {\it not independent over $n$}. This setting differs from those in cache updating systems as discussed in \cite{Freshcaching, Cacheupdating, Timelycache, cacherecommendation, timelyproactivecache}, where the counterpart variables are independent over $n$. This inter-dependence in our model introduces additional complexity compared to the aforementioned studies. Denote the probability/rate of each action as 
\begin{align}\label{eq: indicator user m sensor k edge n}
\Pr\big(\beta_{mk}^{(n)}(t)=1\big)\triangleq b_{mk}^{(n)}(t)\in[0, 1].
\end{align} 
From \eqref{eq: at most one request} and \eqref{eq: indicator user m sensor k edge n}, we have 
\begin{align}\label{eq: at most one request probability}
\sum_{n\in[N]}b_{mk}^{(n)}(t)\leq 1,
\end{align}
for all $m, k, t$. CU $m$ sends a request in time slot $t$ with rate $\sum_{n\in[N]}b_{mk}^{(n)}(t)$.

Upon receiving requests from CUs, eRRH $n$ can either command CP $k$ to update with a new packet or retrieve a previously cached packet. Let $\gamma_{nk}(t) \in \{0,1\}$ represent the command of eRRH $n$ for CP $k$ at time slot $t$, given that $\sum_{m \in [M]} \beta_{mk}^{(n)}(t) \geq 1$. If $\gamma_{nk}(t) = 1$, eRRH $n$ commands CP $k$ to send a new packet; otherwise, $\gamma_{nk}(t) = 0$, indicating that the eRRH uses the cached packet. $\{\gamma_{nk}(t)\}_k$ are {\it independent over $k$}. Let  \begin{align}\label{eq: indicator edge n sensor k}
\Pr(\gamma_{nk}(t)=1) \triangleq r_{nk}(t)\in[0, 1]. 
\end{align}
Since we consider decentralized policies, the decision variables ${\beta_{mk}^{(n)}(t)}$ are {\it independent across $m$}.
Then, eRRH $n$ gets a new packet from CP $k$ in time slot $t$ with probability 
\begin{align}\label{eq: indicator edge n sensor k-1}
\big(1-\prod_{m\in[M]}(1-b_{mk}^{(n)}(t))\big) r_{nk}(t),
\end{align}
which reflects the decentralized nature of the decision-making process in the network.

Next, we display several key assumptions in our setting: (i) Requests from CUs and commands from eRRHs are small in size, rendering the transmission delays negligible, (ii) The transmission delay from CPs to eRRHs is fixed at {\it one time} slot, while the transmission delays of packets from eRRHs to CUs are negligible\footnote{For example, in a dense urban deployment of an F-RAN, multiple eRRHs are connected to the centralized cloud via a shared millimeter-wave wireless fronthaul link. During peak traffic hours, such as in the evening when many users stream high-definition videos or engage in online gaming, the fronthaul link experiences significant congestion. The limited bandwidth and variable link quality of the mmWave connection result in increased queuing delays. As a consequence, the CP-to-eRRH transmission delay becomes a bottleneck in the system, while the eRRH-to-CU link remains fast and stable due to the short-range high-quality wireless access.}.  (iii) We do not assume interference in transmissions, following the precedents  \cite{Freshcaching, Cacheupdating, cacherecommendation, timelyproactivecache, Timelycache}. In fact, interference can be avoided utilizing advanced techniques, such as power-domain non-orthogonal multiple access. 
Let $B$ represent the constraints on the actions of each CU, which include the limitations of radio resources (e.g., bandwidth) and the demand from external environments.  Given that CUs are statistically identical within their groups, we impose the following constraint:
\begin{align}
\sum_{n\in[N]}\sum_{k\in[K]}b_{mk}^{(n)}(t)\leq B,\,\,\text{for } m\in [M]\label{eq: bounds for b}.
\end{align}
A small $B$ indicates low content demand or limited radio resources, whereas a large $B$ corresponds to high demand and ample resources. Similarly, let $R$ represent the constraints on the actions of each eRRH, reflecting the resources available on the eRRHs' side,
\begin{align}
\sum_{k\in[K]}r_{nk}(t)\leq R,\,\,\text{for }n\in [N]\label{eq: bounds for r}.
\end{align}

\subsection{Age of Information}\label{subsec: AoI}
Packets can be received by both CUs and eRRHs. Accordingly, we define two types of AoI: one at the eRRHs and the other at the CUs.  By convention, the AoI evolves at the {\it end} of each time slot \cite{cxrTIT, cxrinfocom, cxrgnn}.

At time slot $t$, the most recently received packet from CP $k$ at eRRH $n$ has generation time $\tau_{nk}$. The AoI of CP $k$ at eRRH $n$ is defined as
\begin{align*}
g_{nk}(t)=t-\tau_{nk}.
\end{align*}
Since the transmission delay from CPs to eRRHs is fixed at one time slot,  $g_{nk}(t)$ evolves as
\begin{align}\label{eq: recursion of edge node AoI0}
g_{nk}(t+1) = & \setIn{\gamma_{nk}(t)\sum_{m\in[M]}\beta_{mk}^{(n)}(t)>0}\nonumber\\ 
+&\left(g_{nk}(t)+1\right)\setIn{\gamma_{nk}(t)\sum_{m\in[M]}\beta_{mk}^{(n)}(t)=0} ,
\end{align}
with $g_{nk}(0)=1$.

In time slot $t$, the most recently received packet from CP $k$ has the generation time $\tau'_{mk}$. The AoI of CP $k$ at CU $m$ is defined as 
\begin{align*}
h_{mk}(t)=t-\tau_{mk}'.
\end{align*}
To ensure freshness, any delivered packet that is older than the most recently received one is discarded.  Based on the constraint in \eqref{eq: at most one request}, the evolution of $h_{mk}(t)$ is given by:

\begin{align}\label{eq: recursion of user AoI}
h_{mk}(t+1) =&  \sum_{n\in[N]}\Big(\setIn{\beta_{mk}^{(n)}(t)\gamma_{nk}(t)=1}\nonumber\\
+&\big(\tilde{h}_{mk}^{(n)}(t)+1\big)\setIn{\beta_{mk}^{(n)}(t)\big(1-\gamma_{nk}(t)\big)=1}\Big)\nonumber\\ 
 +& \left(h_{mk}(t)+1\right)\setIn{\sum_{n\in[N]}\beta_{mk}^{(n)}(t)=0}.
\end{align}
with $\tilde{h}_{mk}^{(n)}(t) = \min\{h_{mk}(t), g_{nk}(t)\}$ and $h_{mk}(0)=1$.  The interdependence of decisions within each CU, as captured by \eqref{eq: at most one request} and \eqref{eq: recursion of user AoI}, complicates the analysis. As a result, the age of version metric introduced in \cite{Freshcaching, Cacheupdating, Timelycache, cacherecommendation, timelyproactivecache} is no longer applicable.

We calculate the average AoI of eRRHs and CUs, respectively, using the recursive expressions in \eqref{eq: recursion of edge node AoI0} and  \eqref{eq: recursion of user AoI}. The average AoI of eRRHs is defined as
\begin{align}\label{eq: eRRHs average AoI}
L_e=\lim_{T\to\infty}\frac{1}{T}\sum_{t=1}^{T}\frac{1}{NK}\sum_{n\in[N]}\sum_{k\in[K]}\mathbb{E}[g_{nk}(t)].
\end{align}
Similarly, the average AoI of CUs is defined as
\begin{align}\label{eq: CUs average AoI}
J_c=\lim_{T\to\infty}\frac{1}{T}\sum_{t=1}^{T}\frac{1}{MK}\sum_{m\in[M]}\sum_{k\in[K]}\mathbb{E}[h_{mk}(t)].
\end{align}

\subsection{Stationary Randomized Policies}\label{subsec: Stationary Randomized Policies}
We consider decentralized policies for CUs and eRRHs, where each CU and eRRH makes decisions based on its local information.  This decentralization structure implies that $\big\{\{\beta_{mk}^{(n)}(t)\}_{n, k}\big\}_m$ are independent across $m$, and $\big\{\{\gamma_{nk}(t)\}_{k}\big\}_n$ are independent across $n$.  We define a decentralized policy $\pi$ as a sequence of actions over time:
\begin{align}\label{eq: sequential strategy}
\pi =\Big\{\big\{\{b_{mk}^{(n)}(t)\}_{n, k}\big\}_m, \big\{\{r_{nk}(t)\}_{k}\big\}_n\Big\}_t.
\end{align} 
For analysis tractability, this paper focuses on a widely used class of decentralized policies, known as stationary randomized policies, where each action is taken with a fixed probability \cite{Modiano-1, Modiano-2}. We first introduce {\it oblivious} stationary randomized (OSR) policies, where decisions are independent of AoI information. Then, we define {\it non-oblivious} stationary randomized (NSR) policies, where decisions by eRRHs depend on the AoI information \cite{cxrinfocom, cxrgnn}.

\subsubsection{Oblivious Stationary Randomized Policies}\label{subsubsec: Oblivious Stationary Randomized Policies}
The idea behind OSR policies is that each node, including CUs and eRRHs, makes decisions based on predetermined probabilities, independent of AoI information. Specifically, each CU's decision reduces to 
\begin{align}\label{eq: decisinos of users}
b_{mk}^{(n)}(t)=b_{k}^{(n)},\text{ for all }m, t.
\end{align}
Each eRRH $n$ adopts a rate vector $\{r_1, r_2, \cdots, r_K\}$, with $r_1\geq r_2\geq\cdots\geq r_K$, specifying the probabilities of commanding different CPs. When eRRH $n$ receives requests for CP $k$ in time slot $t$ (i.e., $\sum_{m\in[M]}\beta_{mk}^{(n)}(t)\geq1$), it then  commands CP $k$ with probability $r_k$:
\begin{align}\label{eq: oblivious r}
r_{nk}(t)=&r_k \setIn{\sum_{m\in[M]}\beta_{mk}^{(n)}(t) \geq 1}.
\end{align}

\begin{definition}\label{def: OSR}
In a network, an O2S policy is defined by \eqref{eq: sequential strategy}, where $\{b_{mk}^{(n)}(t)\}_{m,n,k}$ and $\{r_{nk}(t)\}_{n,k}$ are specified by \eqref{eq: decisinos of users} and  \eqref{eq: oblivious r}, respectively.
\end{definition}

\subsubsection{Non-oblivious Stationary Randomized Policies}\label{subsubsec: Non-Oblivious Stationary Randomized Policies}
Under an NSR policy, eRRH $n$ utilizes the AoI information to make commanding decisions. A CP with a higher age is given higher priority in decision-making. Specifically, in time slot $t$, eRRH $n$ arranges $\{g_{nk}(t)\}_{k\in[K]}$ in descending order: $g_{nl_1}(t)\geq g_{nl_2}(t)\geq\cdots\geq g_{nl_K}(t)$. If eRRH $n$ receives one or more requests for CP $l_k$, it commands CP $l_k$ with rate:
\begin{align*}
r_{nl_k}(t) =& r_{k} \setIn{\sum_{m\in[M]}\beta_{ml_k}^{(n)}(t)\geq1}.
\end{align*}

However, if all CPs have large ages at eRRH $n$, then prioritization among them is unnecessary. In such cases, a predefined threshold $z$ is introduced.  If $g_{nl_K}(t)\geq z$, then each CP will be commanded at an equal rate\footnote{One can extend to other cases, for example, if $K-1$ CPs have large ages at eRRH $n$, there is no need to set priority. Specifically, if $g_{nl_{K-1}}(t)\geq z$, then every CP will be commanded with an equal probability $r_0$. The analytical framework closely resembles that discussed in Section~\ref{sec: optimal NSR policies}.}. The expression of $r_{nl_k}(t)$ is then:
\begin{align}\label{eq: nonoblivious r}
r_{nl_k}(t) =\left\{
\begin{aligned}
r_{k} \setIn{\sum_{m\in[M]}\beta_{ml_k}^{(n)}(t)\geq1}&&g_{nl_K}(t) < z\\
r_0 \setIn{\sum_{m\in[M]}\beta_{ml_k}^{(n)}(t)\geq1}&&g_{nl_K}(t)\geq z,
\end{aligned}
\right.
\end{align}
where $r_0 = \frac{1}{K}\sum_{k\in[K]}r_k$.

\begin{definition}\label{def: non-oblivious stationary strategy for all}
In a network, a NSR policy is defined by \eqref{eq: sequential strategy}, where $\{b_{mk}^{(n)}(t)\}_{m,n,k}$ and $\{r_{nk}(t)\}_{n,k}$  are provided in \eqref{eq: decisinos of users}  and  \eqref{eq: nonoblivious r}, respectively.
\end{definition}
The policy in Definition~\ref{def: non-oblivious stationary strategy for all} allows eRRHs to dynamically adjust commanding rates based on the ages of CPs, promoting efficient resource allocation while maintaining flexibility in handling varying network conditions.

\subsubsection{Feasible Region}\label{subsubsec: Feasible Region}
We now characterize the feasible region of stationary randomized policies. Let  $b_k\triangleq\sum_{n\in[N]}b_{k}^{(n)}$ represent the rate of a CU sending requests for CP $k$ to eRRHs. Define $b\triangleq\sum_{n\in[N]}\sum_{k\in[K]}b_{k}$ and  $r\triangleq\sum_{k\in[K]}r_k$. 
Based on \eqref{eq: at most one request probability},  \eqref{eq: bounds for b},  \eqref{eq: bounds for r}, \eqref{eq: oblivious r}, and \eqref{eq: nonoblivious r}, 
the feasible region of a stationary randomized policy $\pi$ is given by
\begin{align*}
\mathcal{F}\triangleq\set{(b, r, \set{b_k}_{k\in[K]})| b\leq B,  r\leq R,b_k\leq1,k\in[K]}.
\end{align*}

\subsection{Optimization Problems}\label{subsec: Optimization Problems}
Our goal is to deliver new packets to CUs with minimal delay. Mathematically, this goal can be formulated as minimizing the average AoI of CUs under a stationary randomized policy, i.e., 
\begin{align}\label{eq: optimal pi}
\pi^* = \min_{\pi} J_c,
\end{align}	
where $\pi$ can be either an OSR or an NSR policy. Since CUs communicate with CPs through eRRHs, the value of $J_c$ depends on $L_e$. 
To solve the optimization \eqref{eq: optimal pi}, we follow these $3$ steps: (i) analyze $L_e$ theoretically, (ii) analyze $J_c$ theoretically, and (iii) determine an optimal strategy $\pi^*$ as defined in \eqref{eq: optimal pi}. Throughout the rest of the paper, we assume $(b, r, \{b_k\}_{k\in[K]})\in\mathcal{F}$.

We provide a detailed analysis of $L_e$ and $J_c$ for a representative system 
consisting of $2$ CUs, $2$ eRRHs, and $2$ CPs (see Section~\ref{sec: optimal OSR policies} and Section~\ref{sec: optimal NSR policies}). Although our primary analysis focuses on this specific setting, the developed analytical methods generalize to broader scenarios.  To further broaden the applicability of our framework, we present two generalizations in Appendix~\ref{App: Generalizations}.  First, we consider the case of collisions over CU-to-eRRH links, where multiple CUs may simultaneously attempt to access the same eRRH, leading to  transmission collisions.  Second, we extend our analysis to large-scale networks involving arbitrary numbers of CUs ($M$), eRRHs ($N$), and CPs ($K$).

\section{Optimal OSR Policies}\label{sec: optimal OSR policies}
In this section, we investigate an optimal OSR policy in a network with $N=M=K=2$. To achieve this, we start by calculating $L_e$ as defined in \eqref{eq: eRRHs average AoI} and derive its closed form for $L_e$ (see Theorem~\ref{thm: closed form Je} in Section~\ref{subsubsec: Closed From of edge node}).  Due to the complexity of calculating $J_c$, its closed-form expression is intractable. Instead, we derive two upper bounds for $J_c$ (see Theorem~\ref{thm: upper bounds of Jc} in Section~\ref{subsec: Upper bounds of Jc O2S}). Finally, we solve the optimization \eqref{eq: optimal pi}  and obtain an optimal O2S strategy $\pi^*$ (see Theorem~\ref{thm: b r optimal} in Section~\ref{subsec: optimal strategy pic in oblivious network}).\footnote{
One may wonder why the closed-form expression for $L_e$ and the upper bounds of $J_c$ are necessary. The key reason lies in the architecture of  two-hop F-RANs, where the AoI performance depends on coordinated decisions made at both the eRRHs and the CUs. The decisions at these two layers are inherently coupled, making the optimization of the overall AoI particularly challenging. To address this, we adopt a layered and constructive analytical approach. We begin by analyzing the eRRH layer in isolation, where we are able to derive closed-form expressions for $L_e$, offering  insights into how local parameters influence the edge AoI. However, due to the coupling between layers, deriving a closed-form solution for $J_c$ is highly intractable. As an alternative, we derive upper bounds on $J_c$ to enhance analytical tractability. These bounds also provide meaningful guidance for policy design.}

\subsection{Closed Form of $J_e$}\label{subsubsec: Closed From of edge node}
Based on \eqref{eq: indicator edge n sensor k-1}, \eqref{eq: recursion of edge node AoI0}, and \eqref{eq: oblivious r}, the value of $g_{nk}(t)$ resets to $1$ with probability $r_k  \left(1 - (1 - b_k^{(n)})^2\right)$. Denote
\begin{align}\label{eq: zeta}
\zeta_{nk} \triangleq  1-(1-b_k^{(n)})^2,
\end{align}
and $\zeta_{nk}$ is a concave function with respect to  $b_k^{(n)}$. The update rule for $g_{nk}(t)$ in \eqref{eq: recursion of edge node AoI0} is then given by
\begin{align}\label{eq: oblivious recursion g-1}
g_{nk}(t+1) = \left\{
\begin{aligned}
&1&& \text{w.p. }r_k\zeta_{nk}\\
&g_{nk}(t)+1&&\text{w.p. }1-r_k\zeta_{nk}.
\end{aligned}
\right.
\end{align}

We use a graphical argument to calculate the AoI $L_e$ \cite{AgeSKaul}, as illustrated in Fig.~\ref{gnk}. 
\begin{figure}[htbp]
	\centering
	\includegraphics[height=5.25cm, width=6.25cm]{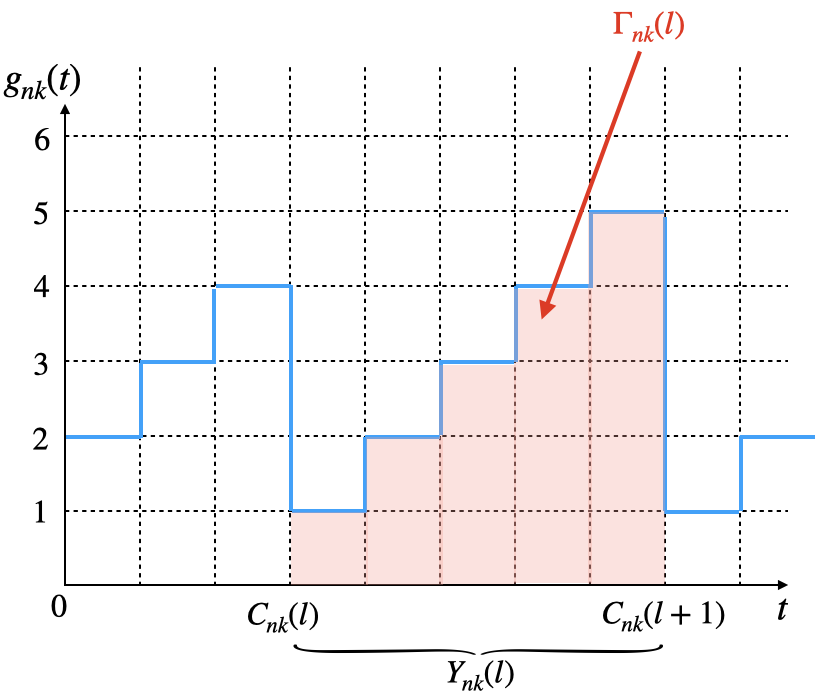}
	\caption{A trajectory of $g_{nk}(t)$.}
	\label{gnk}
\end{figure}
Let $C_{nk}(l)$ denote the timestamp when eRRH $n$ receives a new packet from CP $k$ for the $l$-th time. Let 
\begin{align*}
Y_{nk}(l) = C_{nk}(l+1) - C_{nk}(l).
\end{align*}
Since the transmission delays from CPs and eRRHs are fixed at $1$, we have $g_{nk}(C_{nk}(l))\equiv1$
for all $l$. Let $\Gamma_{nk}(l)$ denote the sum of $g_{nk}(t)$ over the interval $[C_{nk}(l), C_{nk}(l+1))$:
\begin{align*}
\Gamma_{nk}(l) = \sum_{t=C_{nk}(l)}^{C_{nk}(l+1)-1}g_{nk}(t) = \frac{1}{2}Y_{nk}^2(l) + \frac{1}{2}Y_{nk}(l).
\end{align*}

\begin{theorem}\label{thm: closed form Je}
Let $\zeta_{nk}$ be given in \eqref{eq: zeta}. The closed-form expression for the average AoI of eRRHs is provided by 
\begin{align}\label{eq: closed form Je}
L_e=\frac{1}{4}\sum_{n\in[2]}\sum_{k\in[k]}\frac{\mathbb{E}[\Gamma_{nk}(l)]}{\mathbb{E}[Y_{nk}(l)]}=\frac{1}{4}\sum_{n\in[2]}\sum_{k\in[k]}\frac{1}{r_k\zeta_{nk}}.
\end{align}
\end{theorem}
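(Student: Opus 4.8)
The plan is to exploit the renewal structure of the age process $\{g_{nk}(t)\}_t$ induced by the Bernoulli reset dynamics in \eqref{eq: oblivious recursion g-1}. First I would observe that, under an OSR policy, each slot independently delivers a fresh packet from CP $k$ to eRRH $n$ with the fixed probability $p := r_k\zeta_{nk}$, so the reception epochs $\{C_{nk}(l)\}_l$ form a renewal process and the inter-reception times $Y_{nk}(l)$ are i.i.d.\ geometric random variables with parameter $p$, i.e.\ $\Pr(Y_{nk}(l)=j)=(1-p)^{j-1}p$ for $j\geq1$. Since the age resets to $1$ at every epoch and increments by $1$ in every other slot, the trajectory over one cycle $[C_{nk}(l),C_{nk}(l+1))$ is the deterministic ramp $1,2,\dots,Y_{nk}(l)$, whose area is exactly $\Gamma_{nk}(l)=\frac12 Y_{nk}^2(l)+\frac12 Y_{nk}(l)$, as already recorded above.

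The key step is to invoke the renewal--reward theorem to identify the Ces\`aro limit in \eqref{eq: eRRHs average AoI} with the ratio of per-cycle quantities. Because the renewal process is positive recurrent (the geometric inter-renewal time has finite mean $1/p$) and the per-cycle reward $\Gamma_{nk}(l)$ is i.i.d.\ with finite mean, the renewal--reward theorem yields, almost surely,
\begin{align*}
\lim_{T\to\infty}\frac{1}{T}\sum_{t=1}^{T}g_{nk}(t)=\frac{\mathbb{E}[\Gamma_{nk}(l)]}{\mathbb{E}[Y_{nk}(l)]}.
\end{align*}
Together with uniform integrability of $\{g_{nk}(t)\}_t$ (the geometric reset gives the age a geometric tail, hence a uniformly bounded second moment), this transfers the almost-sure sample-path average to the averaged-expectation limit that actually appears in the definition of $J_e$. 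Summing over $n,k\in[2]$ and dividing by $NK=4$ then produces the first equality in \eqref{eq: closed form Je}.

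Finally I would evaluate the two moments of the geometric law: $\mathbb{E}[Y_{nk}(l)]=1/p$ and $\mathbb{E}[Y_{nk}^2(l)]=(2-p)/p^2$, so that
\begin{align*}
\mathbb{E}[\Gamma_{nk}(l)]=\tfrac12\cdot\frac{2-p}{p^2}+\tfrac12\cdot\frac1p=\frac{1}{p^2},
\end{align*}
whence $\mathbb{E}[\Gamma_{nk}(l)]/\mathbb{E}[Y_{nk}(l)]=1/p=1/(r_k\zeta_{nk})$, giving the second equality. I expect the main obstacle to be the justification of the renewal--reward step, specifically the interchange of limit and expectation needed to equate the sample-path time average with the limit of averaged expectations in \eqref{eq: eRRHs average AoI}; this is where uniform integrability of the age process must be argued, whereas the moment computation itself is entirely routine.
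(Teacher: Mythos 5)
Your proposal is correct and follows essentially the same route as the paper: both decompose the sawtooth age trajectory into per-cycle areas $\Gamma_{nk}(l)$ over i.i.d.\ geometric inter-reception times $Y_{nk}(l)$ with parameter $r_k\zeta_{nk}$, apply the renewal--reward theorem to get $\mathbb{E}[\Gamma_{nk}(l)]/\mathbb{E}[Y_{nk}(l)]$, and evaluate the geometric moments. Your explicit attention to the interchange of limit and expectation (via uniform integrability) is a point the paper's proof passes over silently, and your moment algebra $\mathbb{E}[\Gamma_{nk}(l)]=1/(r_k\zeta_{nk})^2$ is correct, matching the paper's final expression.
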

\begin{remark}	
It is straightforward to verify that \eqref{eq: closed form Je} is convex with respect to $r_1$ and $r_2$. Consequently, given decisions of CUs $\{b_k^{(n)}\}_{n, k}$, there exists a unique optimal  $\{r_k\}_{k\in[2]}$ \cite{convexoptimization}.
\end{remark}
\begin{proof}
The proof is given in Appendix~\ref{App: closed form Je}. 
\end{proof}
From the recursion of $g_{nk}(t)$ in \eqref{eq: oblivious recursion g-1}, it follows that $g_{nk}(t)$ has a geometric distribution with parameter $r_k\zeta_{nk}$. This parameter represents the probability of a new content sucessfully arriving at the eRRH.

\subsection{Upper Bounds of $J_c$}\label{subsec: Upper bounds of Jc O2S}

From \eqref{eq: recursion of user AoI}, \eqref{eq: decisinos of users}, and \eqref{eq: oblivious r}, we simplify the recursion of $h_{mk}(t)$ as follows:
\begin{equation}\label{eq: recursion of h-1}
\begin{aligned}
h_{mk}(t+1)= \left\{
\begin{aligned}
&1&&b_kr_k\\
&\min\{h_{mk}(t), g_{1k}(t)\}+1&& b_{k}^{(1)}(1-r_k)\\
&\min\{h_{mk}(t), g_{2k}(t)\}+1&& b_{k}^{(2)}(1-r_k)\\
&h_{mk}(t)+1&&1-b_k. 
\end{aligned}
\right.
\end{aligned}
\end{equation}
Due to the dependency between $\beta_{mk}^{(1)}(t)$ and $\beta_{mk}^{(1)}(t)$ (defined in Section~\ref{sec: systemModel}), the $\min\{\cdot\}$ function in \eqref{eq: recursion of h-1} complicates the analysis of $J_c$, making it difficult to derive a closed-form expression. Therefore, we aim to obtain two upper bounds for $J_c$. 

\begin{theorem}\label{thm: upper bounds of Jc}
Let $\zeta_{nk}$ be given in \eqref{eq: zeta}. Then, $J_c$ has two upper bounds,
\begin{align}
J_c\leq&\frac{1}{2}\sum_{k\in[2]}\frac{1}{b_kr_k}\label{eq: Jc O2S upper bound 1},\\
J_c\leq&\frac{1}{2}\sum_{k\in[2]}\Big(\frac{1-b_k+b_kr_k}{b_kr_k}+\sum_{n=1}^{2}\frac{b_k^{(n)}(1-r_k)}{r_k\zeta_{nk}}\Big)\label{eq: Jc O2S upper bound 2}.
	\end{align}
\end{theorem}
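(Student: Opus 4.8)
The plan is to start from the one-step recursion \eqref{eq: recursion of h-1} and remove the troublesome term $\min\{h_{mk}(t),g_{nk}(t)\}$ in the two opposite directions, each of which yields a tractable age recursion that stochastically dominates $h_{mk}(t)$. Since the branch taken in \eqref{eq: recursion of h-1} is selected by the oblivious rates $\{b_k^{(n)}\}$ and $r_k$, which are independent of the current ages, each relaxation is a simple reset-and-increment process that can be analyzed by the same sawtooth/renewal-reward argument used for Theorem~\ref{thm: closed form Je}. Throughout I would use $\mathbb{E}[g_{nk}(t)]=\tfrac{1}{r_k\zeta_{nk}}$ supplied by \eqref{eq: closed form Je}, and at the end average over $m\in[2]$, $k\in[2]$ in \eqref{eq: CUs average AoI}; because the CUs are statistically identical, the factor $\tfrac{1}{MK}=\tfrac14$ collapses to the common prefactor $\tfrac12\sum_{k\in[2]}$ seen in both bounds.

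For \eqref{eq: Jc O2S upper bound 1} I would bound $\min\{h_{mk}(t),g_{nk}(t)\}\le h_{mk}(t)$, the extreme regime in which a cached packet is never useful. Under this relaxation the two cached branches and the silent branch all reduce to $h_{mk}(t)+1$, so the dominating process resets to $1$ only on a fresh delivery, i.e. with probability $b_kr_k$, and otherwise increments by one. This has the same geometric-reset form as \eqref{eq: oblivious recursion g-1} with success probability $b_kr_k$, whose stationary mean equals $\tfrac{1}{b_kr_k}$; a monotone coupling keeps the relaxed process pointwise above $h_{mk}(t)$, so $\mathbb{E}[h_{mk}]\le\tfrac{1}{b_kr_k}$, and averaging delivers \eqref{eq: Jc O2S upper bound 1}.

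For \eqref{eq: Jc O2S upper bound 2} I would instead use $\min\{h_{mk}(t),g_{nk}(t)\}\le g_{nk}(t)$, the extreme regime in which the CU always adopts the packet just delivered by an eRRH even when it is staler. The natural device is a renewal-reward decomposition whose renewal epochs are the slots carrying a fresh delivery (probability $b_kr_k$, geometric inter-renewal time), with the age driven upward by the silent/increment slots but reset to $g_{nk}(t)+1$ at each cached delivery from eRRH $n$ (probability $b_k^{(n)}(1-r_k)$). The increment dynamics between fresh deliveries produce the clock term $\tfrac{1-b_k+b_kr_k}{b_kr_k}$, while each cached delivery contributes its expected staleness through $\mathbb{E}[g_{nk}(t)]=\tfrac{1}{r_k\zeta_{nk}}$, producing the additive penalty $\sum_{n}\tfrac{b_k^{(n)}(1-r_k)}{r_k\zeta_{nk}}$; summing and averaging yields \eqref{eq: Jc O2S upper bound 2}.

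The main obstacle is the second bound. Unlike Theorem~\ref{thm: closed form Je}, a cached branch resets $h_{mk}$ to the random level $g_{nk}(t)+1$ rather than to $1$, so the sawtooth no longer returns to a fixed base and the within-cycle area couples $h_{mk}$ to the two eRRH age processes $g_{1k},g_{2k}$. The delicate point is to split this area cleanly into a clock part and an additive cached-staleness part, and to justify replacing the value of $g_{nk}(t)$ at a cached delivery by its stationary mean $\tfrac{1}{r_k\zeta_{nk}}$ — legitimate precisely because, under an OSR policy, the event that eRRH $n$ serves a cached packet is independent of $g_{nk}(t)$. Confirming that this separation produces a genuine upper bound, rather than merely an approximation of the coupled process, is the step I expect to require the most care.
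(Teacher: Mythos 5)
Your plan for \eqref{eq: Jc O2S upper bound 1} is essentially the paper's proof: the paper defines the auxiliary process $u_{mk}(t)$ that resets to $1$ w.p. $b_kr_k$ and otherwise increments, shows $\mathbb{E}[h_{mk}(t)]\le\mathbb{E}[u_{mk}(t)]$ by induction (rather than a pathwise coupling, but that is a cosmetic difference), and uses $\lim_t\mathbb{E}[u_{mk}(t)]=\tfrac{1}{b_kr_k}$. No issue there.

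For \eqref{eq: Jc O2S upper bound 2} there is a genuine gap: the renewal--reward computation you describe does not produce the stated bound. If you relax $\min\{h_{mk},g_{nk}\}\le g_{nk}$ in the cached branches and keep the silent branch self-referential (age continues from its own previous value w.p. $1-b_k$), the stationary mean of the dominating process is
\begin{align*}
\frac{1}{b_k}+\frac{1-r_k}{b_k}\sum_{n\in[2]}\frac{b_k^{(n)}}{r_k\zeta_{nk}},
\end{align*}
and the same answer comes out of an honest cycle-area accounting: after a cached delivery the age sits at level $g_{nk}+1$ and keeps incrementing for a Geom$(b_k)$ number of slots, so each cached delivery contributes $\mathbb{E}[g_{nk}]/b_k$ to the cycle area, not $\mathbb{E}[g_{nk}]$ once. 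This quantity is in general \emph{strictly larger} than \eqref{eq: Jc O2S upper bound 2} (using $b_k^{(n)}/\zeta_{nk}=1/(2-b_k^{(n)})\ge 1/2$ one checks $\sum_n b_k^{(n)}/(r_k\zeta_{nk})\ge 1/r_k$, which makes the difference nonnegative; e.g.\ $b_k^{(1)}=b_k^{(2)}=0.4$, $r_k=0.5$ gives $2.8125$ versus the theorem's $2.75$), so proving it does not prove the theorem. The missing idea is the paper's two-stage relaxation: having already established $\mathbb{E}[h_{mk}(t)]\le\mathbb{E}[u_{mk}(t)]$, define a second auxiliary process $v_{mk}$ whose cached branches jump to $g_{nk}(t)+1$ \emph{and whose silent branch jumps to} $u_{mk}(t)+1$ rather than to $v_{mk}(t)+1$. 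Then every branch of $v_{mk}$ lands on an externally determined level with known stationary mean, the induction $\mathbb{E}[h_{mk}(t)]\le\mathbb{E}[v_{mk}(t)]$ goes through using both $\mathbb{E}[\min\{h,g\}]\le\mathbb{E}[g]$ and the already-proved $\mathbb{E}[h]\le\mathbb{E}[u]$, and $\mathbb{E}[v_{mk}]$ is a one-line weighted sum $1+(1-b_k)\mathbb{E}[u_{mk}]+\sum_n b_k^{(n)}(1-r_k)\mathbb{E}[g_{nk}]$ that equals exactly the summand of \eqref{eq: Jc O2S upper bound 2}; no renewal--reward or cycle decomposition is needed. Your independence observation (the branch indicator at time $t$ is independent of $g_{nk}(t)$ under an OSR policy) is correct and is precisely what justifies factoring that weighted sum, but it is the replacement of the silent-branch continuation by $u_{mk}$, which you do not have, that produces the clock term $\tfrac{1-b_k+b_kr_k}{b_kr_k}$ and the coefficient $(1-r_k)$ rather than $\tfrac{1-r_k}{b_k}$ on the staleness term.
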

\begin{proof}
	The proof is given in Appendix~\ref{App: upper bounds of Jc}.
\end{proof}

The upper bound in \eqref{eq: Jc O2S upper bound 1} represents the average AoI in an extreme scenario where eRRHs only send new packets and do not transmit previously cached ones. The upper bound in \eqref{eq: Jc O2S upper bound 2} reflects the average AoI in an extreme case where CUs replace the most recently received packets with the newly delivered ones from eRRHs, even when those are older. Under oblivious policies, since the commanding rate $r_k$ remains fixed over time, the first extreme case results in more new content being received compared to the second. Consequently, the upper bound in \eqref{eq: Jc O2S upper bound 1} provides a more accurate estimate (see Fig.~\ref{network2_ub}(a) in Section~\ref{sec: numerical results}).

\subsection{Optimal OSR Strategy}\label{subsec: optimal strategy pic in oblivious network}
Now, we aim to obtain an optimal policy $\pi^*$ for $J_c$ in \eqref{eq: CUs average AoI}. To achieve this, we follow two steps: (i) Fix $b_k$ and $r_k$, and determine an optimal $\big((b_k^{(1)})^*, (b_k^{(2)})^*\big)$ that minimize $$\lim_{T\to\infty}\frac{1}{T}\sum_{t=1}^{T}\mathbb{E}[h_{mk}(t)];$$ (ii) Find an optimal policy $\pi^*$ to minimize $$J_c=\frac{1}{4}\sum_{m\in[2]}\sum_{k\in[2]}\lim_{T\to\infty}\frac{1}{T}\sum_{t=1}^{T}\mathbb{E}[h_{mk}(t)].$$
By systematically addressing these steps, we identify a policy that optimally minimizes the average AoI for the CUs.

Before deriving the optimal decisions theoretically, we first present an important observation regarding $h_{mk}(t)$ \big(from \eqref{eq: recursion of h-1}\big) that will help us to understand the optimal pair $\big((b_k^{(1)})^*, (b_k^{(2)})^*\big)$: The parameter $b_k$ represents the demand for CP $k$ from CUs. As $b_k$ increases,  the demand for new content from CUs also increases, which leads to smaller values for both $h_{mk}(t)$ and $g_{nk}(t)$.  According to \eqref{eq: recursion of h-1}, since $h_{mk}(t)$ is determined by $g_{nk}(t)$, these random variables are comonotonic \cite{nelsen2006introduction}. Due to the presence of the {\it min} function in  \eqref{eq: recursion of h-1}, $h_{mk}(t)$ changes more rapidly than $g_{nk}(t)$ as $b_k$ decreases, the influence of the min function in \eqref{eq: recursion of h-1} becomes more pronounced.  This suggests that the optimal decisions may differ significantly when $b_k$ is small compared to when it is large.
We define two stochastic processes,
\begin{align}\label{eq: upper bound H1}
	H_{mk, 1}(t+1) = \left\{
	\begin{aligned}
		&1&& b_k r_k\\
		&h_{mk}(t)+1&&1-b_k r_k;
	\end{aligned}
	\right.
\end{align}
and
\begin{align}\label{eq: upper bound H2}
	H_{mk, 2}(t+1) = \left\{
	\begin{aligned}
		&1&&b_kr_k\\
		&g_{1k}(t)+1&& b_k^{(1)}(1-r_k)\\
		&g_{2k}(t)+1&&b_k^{(2)}(1-r_k)\\
		&h_{mk}(t)+1&&1-b_k,
	\end{aligned}
	\right.
\end{align}
with $H_{mk, 1}(0) = H_{mk, 2}(0)=1$. From \eqref{eq: recursion of h-1}, \eqref{eq: upper bound H1}, and \eqref{eq: upper bound H2}, 
\begin{align}\label{eq: hH1H2}
h_{mk}(t) \overset{d}{=} \min\{H_{mk, 1}(t), H_{mk, 2}(t)\}
\end{align}
where $\overset{d}{=}$ represents equality {\it in distribution}.
By utilizing \eqref{eq: hH1H2}, we derive the following analytical results.

\begin{lemma}\label{lem: optimal H h OS}
Fix $b_k$ and $r_k$.  For $t\geq2$, there exists a threshold depending on $r_k$, denoted as  $w(r_k; t)$ such that,
\begin{itemize}
	\item [] (i) when $b>w(r_k; t)$, $(b_k^{(1)}, b_k^{(2)})=(b_k/2, b_k/2)$ is the global minimizer of $\mathbb{E}[h_{mk}(t)]$;
	\item [] (ii) when $b_k < w(r_k; t)$, either $(b_k^{(1)}, b_k^{(2)})=(b_k, 0)$ or $(0, b_k)$ is the global minimizer of $\mathbb{E}[h_{mk}(t)]$.
	\item [] (iii) when $b_k = w(r_k; t)$, either $(b_k^{(1)}, b_k^{(2)})=(b_k/2, b_k/2)$, $(b_k, 0)$, or $(0, b_k)$ is the  global minimizers of $\mathbb{E}[h_{mk}(t)]$.
	\item [] (iv) $w(r_k; t)$ decreases with $t$ and  converges as $t \to \infty$.
\end{itemize}
\end{lemma}
\begin{proof}
The roadmap of the proof is as follows: First, we demonstrate that $\mathbb{E}[h_{mk}(t)]$ is symmetric with respect to $(b_k^{(1)}, b_k^{(2)})$, and $\mathbb{E}[h_{mk}(t)]$ is continuous in $b_k^{(1)}$ and $b_k^{(2)}$, which implies that the minimum point can be found at $(0, b_k)$, $(b_k, 0)$, or $(b_k/2, b_k/2)$. Then, we prove the lemma by mathematical induction. Specifically, since $h_{mk}(t)$ becomes asymptotically stationary under stationary randomized policies, $h_{mk}(t+1) = h_{mk}(t)$ almost surely as $t\to\infty$, which implies that $\lim_{t\to\infty}w(r_k; t)$ converges. The detailed proof is provided in Appendix~\ref{App: lem: optimal H h OS}.
\end{proof}

\begin{lemma}\label{lem: optimal H1H2h OS}
Let $b, r$ be fixed and $(b, r)\in\mathcal{F}$,  $(b_k, r_k)=(b/2, r/2)$ is the global minimizer of $\sum_{k\in[2]}\mathbb{E}[h_{mk}(t)]$ for $t\geq1$.
\end{lemma}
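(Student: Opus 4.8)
The plan is to exploit the fact that, for fixed $b$ and $r$, the objective decouples across the two CPs. The process $h_{mk}(t)$ is a function only of the requests and commands concerning CP $k$ — the constraint \eqref{eq: at most one request} couples decisions across the eRRH index $n$, never across the CP index $k$ — so $\sum_{k\in[2]}\mathbb{E}[h_{mk}(t)]$ is separable. Substituting the intra-eRRH split that is optimal for fixed $(b_k,r_k)$, as guaranteed by Lemma~\ref{lem: optimal H h OS}, I write $\lim_{t\to\infty}\mathbb{E}[h_{mk}(t)] = \phi(b_k, r_k)$, a single function common to both CPs since they are statistically identical. The statement then reduces to showing that $\phi(b_1, r_1) + \phi(b_2, r_2)$, minimized over $b_1 + b_2 = b$, $r_1 + r_2 = r$, $0\leq b_k\leq 1$, $r_k\geq 0$, attains its minimum at the balanced point $(b_1,b_2,r_1,r_2) = (b/2, b/2, r/2, r/2)$.

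The second step is a standard symmetrization argument and is the easy part. If $\phi$ is jointly convex in $(b_k, r_k)$, then $F(b_1, b_2, r_1, r_2) = \phi(b_1, r_1) + \phi(b_2, r_2)$ is convex and invariant under the swap $(b_1, b_2, r_1, r_2)\mapsto(b_2, b_1, r_2, r_1)$, which maps the feasible region onto itself. Averaging any feasible point with its swap yields exactly the balanced point while, by convexity, not increasing $F$; hence the balanced point is a global minimizer. It is feasible because $b = b_1 + b_2 \leq 2$ forces $b/2 \leq 1$.

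The main obstacle is therefore establishing the joint convexity of $\phi$. The difficulty is that $\phi$ has no closed form: the $\min\{\cdot\}$ terms in \eqref{eq: recursion of h-1} — equivalently the representation $h_{mk}(t)\overset{d}{=}\min\{H_{mk,1}(t), H_{mk,2}(t)\}$ in \eqref{eq: hH1H2} — are exactly what obstructed a closed-form $J_c$ and forced the bounds of Theorem~\ref{thm: upper bounds of Jc}. I would approach convexity through the tail representation $\phi(b_k, r_k) = \sum_{j\geq 0}\Pr\big(h_{mk}(\infty) > j\big) = \sum_{j\geq 0}\Pr\big(H_{mk,1}(\infty) > j,\, H_{mk,2}(\infty) > j\big)$ and argue that each tail probability is convex in $(b_k, r_k)$; since a nonnegative sum of convex functions is convex, $\phi$ inherits convexity. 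After fixing the split, each tail is expressible through the geometric reset laws of $g_{nk}$ (reset rate $r_k\zeta_{nk}$) and of $H_{mk,1}$ (reset rate $b_k r_k$), so the task reduces to checking the sign structure of explicit polynomial-in-$(b_k,r_k)$ expressions.

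A complication I anticipate is the two-regime nature of the optimal split from Lemma~\ref{lem: optimal H h OS}: writing $\phi = \min\{\phi_A, \phi_B\}$, where $\phi_A$ uses the balanced split $b_k^{(1)}=b_k^{(2)}=b_k/2$ and $\phi_B$ the consolidated split $(b_k, 0)$, a pointwise minimum of convex functions need not be convex. To handle this I would note that at the symmetric CP allocation both CPs face the identical comparison $b/2$ versus $w(r/2)$ and hence lie in the same regime, establish convexity of $\phi_A$ and of $\phi_B$ separately via the tail argument, run the symmetrization within each regime, and finally rule out a cheaper mixed-regime allocation by a direct comparison across the threshold $w(r/2)$. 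Alternatively, one may replace $\phi$ by whichever bound of Theorem~\ref{thm: upper bounds of Jc} is tight in the relevant regime: both \eqref{eq: Jc O2S upper bound 1} and \eqref{eq: Jc O2S upper bound 2} are convex in $(b_k, r_k)$ (for instance $1/(b_k r_k)$ is jointly convex on the positive orthant) and are minimized at the balanced point, and tightness transfers the conclusion to $\phi$.
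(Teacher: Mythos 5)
Your reduction to a single per-CP function $\phi(b_k,r_k)$ followed by symmetrization is a reasonable skeleton, and it parallels the paper's own structure in Appendix~D (explicit treatment of the consolidated regime, a symmetry argument in the balanced regime, and a separate step ruling out mixed allocations). The gap is that the load-bearing step --- joint convexity of $\phi$ in $(b_k,r_k)$ --- is asserted rather than proved, and the route you sketch does not deliver it. In the consolidated regime the paper does obtain a closed form, $\mathbb{E}[h_{mk}]=\tfrac{1}{b_k}+\tfrac{1}{r_k(2b_k-b_k^2)}-1$, whose convexity is directly checkable, so there your plan goes through. In the balanced regime, however, the joint tails $\Pr\big(H_{mk,1}(\infty)>j,\,H_{mk,2}(\infty)>j\big)$ are not ``explicit polynomial-in-$(b_k,r_k)$ expressions'': $H_{mk,1}$ and $H_{mk,2}$ are strongly dependent (they share the reset event of probability $b_kr_k$), and $H_{mk,2}$ is not even an autonomous chain --- its recursion \eqref{eq: upper bound H2} feeds back through $h_{mk}(t)$ itself. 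No closed form for that joint tail exists, which is precisely why the paper avoids any convexity claim for $\phi$ and instead works with the weaker ``single symmetric point'' property of $f_1,f_2,f_3$, monotonicity/correlation arguments for the cross terms $q_2$ and $q_3$, and a final contradiction step ({\bf Step 3-(3)}).

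Two further points would sink the argument even if convexity were granted. First, your fallback --- transferring the minimizer of the convex upper bounds of Theorem~\ref{thm: upper bounds of Jc} to $\phi$ by ``tightness'' --- is not logically valid: the bounds are strict inequalities generically, and a non-tight upper bound attaining its minimum at the balanced point says nothing about where the true function is minimized. Second, observing that the symmetric allocation places both CPs in the same regime does not preclude a cheaper \emph{asymmetric} allocation that places the two CPs in different regimes; the paper devotes a separate perturbation argument ({\bf Step 4} of Appendix~D) to exactly this case, whereas your ``direct comparison across the threshold $w(r/2)$'' is left unspecified. As written, the proposal identifies the right obstacles but resolves none of them.
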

\begin{proof}
The roadmap of the proof is as follows. We begin by verifying that if both CUs have the optimal allocation $\big((b_k^{(1)})^*, (b_k^{(2)})^*\big) = (b_k, 0)$ or $(0, b_k)$, the minimum choice for $(b_1, r_1)$ is $(b/2, r/2)$. Next, we show that if both CUs choose the optimal split $\big((b_k^{(1)})^*, (b_k^{(2)})^*\big)=(b_k/2, b_k/2)$, $(b/2, r/2)$ again yields the minimum. Finally, we demonstrate that the mixed case--where one CU chooses $\big((b_k^{(1)})^*, (b_k^{(2)})^*\big)=(b_k, 0)$ or $(0, b_k)$, and the other chooses $\big((b_k^{(1)})^*, (b_k^{(2)})^*\big)=(b_k/2, b_k/2)$--cannot occur. The full proof is provided in Appendix~\ref{App: optimal H1H2h OS}.
\end{proof}

\begin{theorem}\label{thm: b r optimal}
Let $b, r$ be fixed, $(b, r)\in\mathcal{F}$, and $ m,n,k\in[2]$. there exists a threshold (depending on $r$), $w(r)$, such that if $b>w(r)$, the optimal policy $\pi^*$ is given by
\begin{align}\label{eq: optimal OSR-1}
\pi^* = \{b_{mk}^{(n)}=b/4, r_k=r/2\};
\end{align}
if $b \leq w(r)$, an optimal policy
\begin{align}\label{eq: optimal OSR-2}
\pi^* = \{&(b_{mk}^{(n)}, b_{mk}^{(3-n)}) = (b/2, 0),  r_k=r/2\}.
\end{align}
\end{theorem}
\begin{proof}
From Lemma~\ref{lem: optimal H h OS}~(iii), we know that $\lim_{t\to\infty}w(r_{k}; t)$ exists and denote it by $w(r_k)$. The proof then follows directly from  Lemma~\ref{lem: optimal H h OS}~(i), (ii), (iii), and Lemma~\ref{lem: optimal H1H2h OS}. 
\end{proof}
\begin{remark}
If $b  < w(r)$, an alternative optimal policy is given by 
$\pi^* = \{(b_{mk}^{(n)}, b_{mk}^{(3-n)}) = (0, b/2),  r_k=r/2\}$.
If $b=w(r)$, then two alternative optimal policies exist:
$\pi^* = \{(b_{mk}^{(n)}, b_{mk}^{(3-n)}) = (0, b/2),  r_k=r/2\}$.
and  $\pi^* = \{b_{mk}^{(n)}=b/4, r_k=r/2\}$.
\end{remark}
It is worth noting that each CU's decision-making process involves two layers: first, determining how to allocate the requested data rates among the CPs; and second, for each selected CP, deciding how to distribute the requested rates among the available eRRHs. Based on Theorem~\ref{thm: b r optimal}, we derive the following insights into the optimal strategies for both CUs and eRRHs:
\begin{itemize}
\item [] (i) For each CU, the optimal decision for allocating rates among CPs (decisions in the first layer) is an equal distribution, i.e., $b_k=b/2$, regardless of how rates are allocated within the eRRHs associated with each CP. Similarly, for each eRRH, the optimal decision for allocating rates among CPs is to distribute them equally, $r_k=r/2$. 
\item [](ii)  For each selected CP, when the rate of a CU sending a request for CP $k$ to eRRHs is relatively small, i.e., $b\leq w(r)$, $(b_k^{(1)}, b_k^{(2)})=(b/2, 0)$ is an optimal strategy for the CU. In this case, an optimal choice is to consolidate all rates to a single eRRH (decisions in the second layer).  
\item [](iii) For each selected CP, when the rate of a CU sending a request for CP $k$ to eRRHs is relatively large, i.e., $w(r)<b$, $(b_k^{(1)}, b_k^{(2)})=(b/2, b/2)$ is an optimal strategy. In this case, an optimal choice is to allocate the rate $b$ equally to both eRRHs (decisions in the second layer). 
\item [](iv) If we relax the condition \eqref{eq: at most one request} such that $\beta_{mk}^{(n)}(t)$ are independent across $n$, then using a simplified proof framework in Appendix~\ref{App: lem: optimal H h OS} and Appendix~\ref{App: optimal H1H2h OS}, we obtain the same conclusions as in   \cite{Cacheupdating, cacherecommendation, Timelycache}. That is, $(b_k^{(1)}, b_k^{(2)})=(b/2, 0)$, i.e., the optimal choice is to consolidate all rates to a single eRRH.  
\end{itemize}
In Theorem~\ref{thm: b r optimal}, deriving the closed form of  $w(r)$ is extremely challenging. 
However, since the sequence $\{h_{mk}(t)\}_t$ is asymptotically stationary, we can estimate $w(r)$ using $w(r; t)$ for small values of $t$. For example, as discussed in {\bf Step 2} of Appendix~\ref{App: lem: optimal H h OS}, setting $t=2$ yields the following approximation:
\begin{align*}
w(r)\approx \frac{2}{3 - r/2}.
\end{align*}
Based on this approximation, we characterize the optimal policy as follows:
\begin{itemize}
	\item [] (i) when $\frac{b}{2}$ is significantly larger than $\frac{2}{3 - r/2}$, an optimal policy is given by \eqref{eq: optimal OSR-1};
	\item [] (ii) when $\frac{b}{2}$ is significantly smaller than $\frac{2}{3 - r/2}$, an optimal policy is given by \eqref{eq: optimal OSR-2};
	\item [] (iii) when $\frac{b}{2}$ is approximately equal to $\frac{2}{3 - r/2}$, both policies in \eqref{eq: optimal OSR-1} and \eqref{eq: optimal OSR-2} yield similar average AoI for the CUs.
\end{itemize}

\section{Optimal NSR Policies}\label{sec: optimal NSR policies}

In this section, we investigate an optimal NSR policy in a network with $N=M=K=2$, following a process similar to the one described in Section~\ref{sec: optimal OSR policies}. We begin by calculating $L_e$ as defined in \eqref{eq: eRRHs average AoI} and derive the closed form of $L_e$ (see Theorem~\ref{thm: expectation of Xnk} in Section~\ref{subsubsec: Closed From of Je nonoblivious}) by introducing a  two-dimensional truncated Markov chain \big(see \eqref{eq: truncated G}\big). Next, we derive two upper bounds for $J_c$ (see Theorem~\ref{thm: upper bounds of Jc N2S} in Section~\ref{subsec: Upper bounds of Jc N2S}).  Finally, we theoretically determine an optimal strategy $\pi^*$ (see Theorem~\ref{thm: b r optimal NS}) in Section~\ref{subsec: optimal strategy pic in nonoblivious network}.

\subsection{Closed Form of $J_e$}\label{subsubsec: Closed From of Je nonoblivious}
\begin{figure}[htbp]
	\centering
	\includegraphics[height=4.75cm, width=5.75cm]{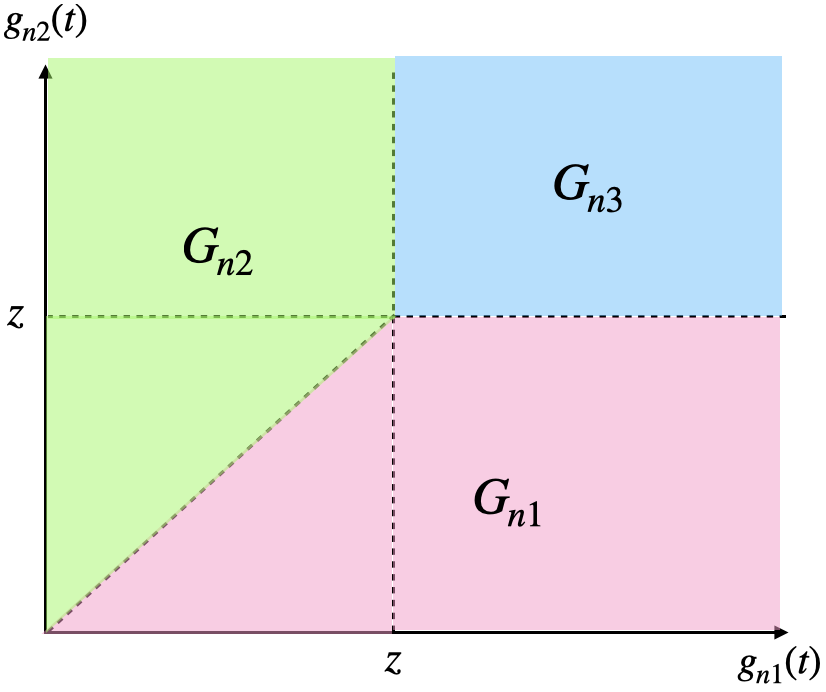}
	\caption{The partition of $\mathcal{G}_n$.}
	\label{Gpartition}
\end{figure} 
From \eqref{eq: nonoblivious r}, the commanding rate $\{r_{nk}(t)\}_{n, k}$ depend on the current ages of eRRHs, so the recursions of $g_{n1}(t)$ and $g_{n2}(t)$ are correlated with each other in every time slot $t$. Let $G_n(t) = (g_{n1}(t), g_{n2}(t))$, and denote its state space by $\mathcal{G}_n$. The space
$\mathcal{G}_n$ can be partitioned in the following $3$ disjoint subsets (see Fig.~\ref{Gpartition}),
\begin{align*}
\mathcal{G}_{n1}=&\Big\{G_n(t)|g_{n1}(t) \geq g_{n2}(t), g_{n2}(t)< z, t\geq 0\Big\},\\
\mathcal{G}_{n2}=&\Big\{G_n(t)|g_{n1}(t) < g_{n2}(t), g_{n1}(t)< z, t\geq 0\Big\},\\
\mathcal{G}_{n3}=&\Big\{G_n(t)|\{g_{n1}(t)\geq z, g_{n2}(t)\geq z, t\geq 0 \Big\}.
\end{align*}
For $n, k\in[2]$, according to \eqref{eq: indicator edge n sensor k}, the rate $r_{nk}(t)=\Pr\{\gamma_{nk}(t)=1\}$ depends on the values of $G_n(t)$ and varies over time as specified in \eqref{eq: nonoblivious r}:
\begin{align}
r_{nk}(t)=&1_{\{G_{n}(t)\in \mathcal{G}_{n1}\}}r_k +1_{\{G_{n}(t)\in \mathcal{G}_{n2}\}}(r-r_k)\nonumber \\
+&1_{\{G_{n}(t)\in \mathcal{G}_{n3}\}}r/2.\label{eq: taukt}
\end{align}
Then, the recursion of $g_{nk}(t)$ is given by
\begin{align}\label{eq: nonoblivious recursion g}
g_{nk}(t+1) = \left\{
\begin{aligned}
&1&& \text{w.p. }r_{nk}(t)\zeta_{nk}\\
&g_{nk}(t)+1&&\text{w.p. }1-r_{nk}(t)\zeta_{nk},
\end{aligned}
\right.
\end{align}
where $\zeta_{nk}$ is given in \eqref{eq: zeta}. 

Similar to \cite[Eqn.~(13)]{LAZY}, we define for any time slot $t$,
\begin{align}\label{eq: eRRH n CP k}
X_{nk}(t) = \min\{C_{nk}(l)|C_{nk}(l) > t\}-t,
\end{align}
where $C_{nk}(l)$ represents the timestamp when eRRH $n$ receives a new packet from CP $k$ for the $l$-th time, as defined in Section~\ref{subsubsec: Closed From of edge node}. In other words, $X_{nk}(t)$ in \eqref{eq: eRRH n CP k} denotes the time period from time slot $t$ to the next update. According to \cite[Eqn.~(15)]{LAZY}, 
\begin{align*}
\lim_{T\to\infty}\frac{1}{T}\sum_{t=1}^{T}\mathbb{E}[g_{nk}(t)] = \lim_{t\to\infty}\mathbb{E}[X_{nk}(t)]\quad\text{with probability }1.
\end{align*}
Thus, $J_e$ in \eqref{eq: eRRHs average AoI} can be re-written as
\begin{align*}
L_e=\frac{1}{4}\sum_{n\in[2]}\sum_{k\in[2]}\lim_{t\to\infty}\mathbb{E}[X_{nk}(t)]\quad\text{with probability }1.
\end{align*}
The remainder of this section focuses on calculating $\mathbb{E}[X_{nk}(t)]$.

Note that the state space $\mathcal{G}_n$ is infinite, 
to obtain the closed form of $L_e$, we introduce another vector $G_n^z(t)$, representing the $z$-truncated age state of eRRHs in time slot $t$ \cite{LAZY}:
\begin{align}\label{eq: truncated G}
G_n^z(t) = \big(g_{n1}^z(t), g_{n2}^z(t)\big),
\end{align} 
where $g_{nk}^z(t)$ is truncated at $z$. The evolution of $g_{nk}^z(t)$ is defined as follows
\begin{align*}
g_{nk}^z(t+1)  = \left\{
\begin{aligned}
&1&&\text{w.p. } r_{nk}(t)\zeta_{nk}\\
&\min\{g_{nk}^z(t)+1, z\}&&\text{w.p. } 1-r_{nk}(t)\zeta_{nk}.
\end{aligned}
\right.
\end{align*}
From \cite[Proposition~1]{LAZY} and  \cite[Proposition~2]{LAZY}, we know that $G_n^z(t)$, as defined in \eqref{eq: truncated G}, is a Markov process and possesses a unique steady-state distribution. Its state space is given by
\begin{align*}
\mathcal{G}_n^z=\{\phi|G_n^z(t)=\phi, t\geq0\},
\end{align*} 
with the cardinality $|\mathcal{G}_n^z| = z^2$. The steady-state distribution, denoted as $\pi = \{\pi_\phi\}_{\phi\in\mathcal{G}_n^z}$, is uniquely determined by
\begin{align}\label{eq: steady-state distribution phi}
\pi= {\bf P}\pi,
\end{align}
where ${\bf P}$ is the transition probability matrix. The specific form of  ${\bf P}$ is provided in \eqref{eq: transition probability matrix-1}, \eqref{eq: transition probability matrix-2}, \eqref{eq: transition probability matrix-3}, and \eqref{eq: transition probability matrix-4} in Appendix~\ref{App: steady-state distribution phi}. 

We proceed to derive the closed form of $L_e$ using $G_{n}^z(t)$ defined in \eqref{eq: truncated G}. Following a similar approach as in \cite[Eqn. (25)]{LAZY}, we express the expected age $\mathbb{E}[X_{nk}(t)]$ as
\begin{align}\label{eq: expectation X-1}
	\mathbb{E}[X_{nk}(t)] =& \sum_{\phi\in\mathcal{G}_n^z}\mathbb{E}[X_{nk}(t)|G_n^z(t)=\phi]\Pr(G_n^z(t)=\phi)\nonumber\\
	\triangleq&\sum_{\phi\in\mathcal{G}_n^z}F_{nk}(\phi)\pi_\phi,
\end{align}
where $F_{nk}(\phi) = \mathbb{E}[X_{nk}(t)|G_n^z(t)=\phi]$
 denotes the expected time until the next new packet from  CP $k$, given the AoI state $\phi = (\phi_1, \phi_2)$.
Define the set $\mathcal{G}_{n, down, k}$ as
\begin{align}\label{eq: Gup}
\mathcal{G}_{n, down, k} = \{\phi | \phi\in\mathcal{G}_{n}^z, \phi_k=1\},\,\, k\in[2].
\end{align}
Here, $F_{nk}(\phi)$ represents the expected first passage time from state $\phi$ to the set $\mathcal{G}_{n, down, k}$. The transition probabilities $P_{\phi\phi'}$ \big(see \eqref{eq: transition probability matrix-1}, \eqref{eq: transition probability matrix-2}, \eqref{eq: transition probability matrix-3}, and \eqref{eq: transition probability matrix-4} in Appendix~\ref{App: steady-state distribution phi}\big) indicate the probability of transitioning from state $\phi$ to state $\phi'$. Similar to the derivations in \cite[Eqns. (28) and (29)]{LAZY}, the mean first passage times starting from each state $\phi$, can be determined by using one-step analysis, i.e., by solving the system of equations,
\begin{align}
F_{nk}(\phi) = 1 + \sum_{\phi'\in\mathcal{G}_n^z\backslash\mathcal{G}_{n, down, k}}{\bf P}_{\phi\phi'}F_{nk}(\phi'),\,\,\phi\in\mathcal{G}_n^z.\label{eq: systems of Fnk-1}
\end{align}
By leveraging \eqref{eq: steady-state distribution phi}, \eqref{eq: expectation X-1}, and \eqref{eq: systems of Fnk-1}, we directly establish the following theorem.
\begin{theorem}\label{thm: expectation of Xnk}
The closed-form expression for the average AoI of eRRHs is provided by 
\begin{align}\label{eq: closed form Je N2S}
L_e=\frac{1}{4}\sum_{n\in[2]}\sum_{k\in[2]}\sum_{\phi\in\mathcal{G}_{n}^z}F_{nk}(\phi)\pi_\phi,
\end{align}
where $\pi_\phi$ is provided by \eqref{eq: steady-state distribution phi} and $F_{nk}(\phi)$ is provided by \eqref{eq: systems of Fnk-1}.
\end{theorem}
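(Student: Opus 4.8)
The plan is to assemble the components already developed in this subsection, converting the limiting time-average AoI into a stationary expectation over the finite truncated Markov chain and then evaluating that expectation through a mean-first-passage-time computation. Since all the structural results are quoted from \cite{LAZY}, the proof is mainly a matter of verifying that the $z$-truncation faithfully reproduces the statistics of the true age process.

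First I would start from the reduction $J_e=\frac{1}{4}\sum_{n\in[2]}\sum_{k\in[2]}\lim_{t\to\infty}\mathbb{E}[X_{nk}(t)]$ (with probability $1$), already obtained from the asymptotic stationarity of $g_{nk}(t)$ together with \cite[Eqn.~(15)]{LAZY}; the remaining task is to evaluate $\lim_{t\to\infty}\mathbb{E}[X_{nk}(t)]$ for each pair $(n,k)$. The crucial observation is that the reset probability at any state, namely $r_{nk}(t)\zeta_{nk}$, depends on $G_n(t)$ only through its membership in one of the three regions $\mathcal{G}_{n1},\mathcal{G}_{n2},\mathcal{G}_{n3}$ of \eqref{eq: taukt}, and that on $\mathcal{G}_{n3}$ (both ages at least $z$) this rate is the constant $r/2$. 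Consequently, capping each coordinate at $z$ leaves every region membership, and hence every one-step reset probability, unchanged, so the conditional law of $X_{nk}(t)$ given the truncated state coincides with that of the original process.

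Next I would invoke \cite[Propositions~1 and~2]{LAZY} to assert that $G_n^z(t)$ is an ergodic Markov chain on the finite space $\mathcal{G}_n^z$ with unique stationary distribution $\pi$ solving \eqref{eq: steady-state distribution phi}, whence $\Pr(G_n^z(t)=\phi)\to\pi_\phi$. Because the chain is time-homogeneous, $F_{nk}(\phi)=\mathbb{E}[X_{nk}(t)\mid G_n^z(t)=\phi]$ is independent of $t$ and is precisely the mean first passage time from $\phi$ into the set $\mathcal{G}_{n,down,k}$ of \eqref{eq: Gup}, i.e.\ the expected number of slots until $g_{nk}$ next resets to $1$. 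A standard one-step analysis of this hitting time gives the linear system \eqref{eq: systems of Fnk-1}, whose solution is finite because the chain is irreducible with strictly positive reset probabilities. Combining the conditional decomposition \eqref{eq: expectation X-1} with the convergence of $\Pr(G_n^z(t)=\phi)$, a finite sum so the limit passes inside, yields $\lim_{t\to\infty}\mathbb{E}[X_{nk}(t)]=\sum_{\phi\in\mathcal{G}_n^z}F_{nk}(\phi)\pi_\phi$, and substituting into the reduced formula for $J_e$ produces \eqref{eq: closed form Je N2S}.

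The main obstacle is the truncation step: one must argue rigorously that the mean first passage time of the reset event computed on the finite chain $G_n^z(t)$ equals the corresponding quantity for the genuinely infinite-state age process $G_n(t)$. This hinges on the constancy of the command rate on $\mathcal{G}_{n3}$, which guarantees that any trajectory venturing beyond age $z$ experiences identical reset dynamics under both descriptions, so that the hitting-time distributions agree. The remaining ingredients, namely setting up ${\bf P}$, solving \eqref{eq: steady-state distribution phi}, and performing the one-step analysis for \eqref{eq: systems of Fnk-1}, are routine and mirror the template of \cite{LAZY}.
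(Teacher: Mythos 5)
Your proposal follows essentially the same route as the paper: reduce $J_e$ to $\lim_{t\to\infty}\mathbb{E}[X_{nk}(t)]$ via \cite[Eqn.~(15)]{LAZY}, pass to the $z$-truncated chain $G_n^z(t)$ whose ergodicity and stationary distribution come from \cite[Propositions~1 and~2]{LAZY}, and evaluate the expectation by the conditional decomposition \eqref{eq: expectation X-1} together with the one-step mean-first-passage-time system \eqref{eq: systems of Fnk-1}. Your added remark that the truncation is faithful because the command rate is constant on $\mathcal{G}_{n3}$ is a welcome explicit justification of a step the paper delegates to \cite{LAZY}, but the argument is otherwise identical.
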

In Theorem~\ref{thm: expectation of Xnk}, we adopt an approach similar to  \cite{LAZY} to derive the closed-form expression of $L_e$.  However, while \cite{LAZY} employs a one-dimensional truncated Markov chain, our analysis requires the construction of a two-dimensional truncated Markov chain to accommodate the more complex structure of strategies.

\subsection{Upper Bounds of $J_c$}\label{subsec: Upper bounds of Jc N2S}
From \eqref{eq: recursion of user AoI} and \eqref{eq: taukt}, under a NSR policy, $h_{mk}(t)$ has the following recursion, 
\begin{equation}\label{eq: recursion of h-2}
\begin{aligned}
h_{mk}(t+1)= \left\{
\begin{aligned}
&1,\quad \sum_{n\in[2]}b_k^{(n)}r_{nk}(t)\\
&\min\{h_{mk}(t), g_{1k}(t)\}+1,\,\, b_{k}^{(1)}(1-r_{1k}(t))\\
&\min\{h_{mk}(t), g_{2k}(t)\}+1,\,\, b_{k}^{(2)}(1-r_{2k}(t))\\
&h_{mk}(t)+1,\quad1-b_k, 
\end{aligned}
\right.
\end{aligned}
\end{equation}
where $g_{1k}(t)$ and $g_{2k}(t)$ are given in \eqref{eq: nonoblivious recursion g}. Similar to Section~\ref{subsec: Upper bounds of Jc O2S}, we provide two upper bounds for $J_c$. 
\begin{theorem}\label{thm: upper bounds of Jc N2S}
	$J_c$, defined in \eqref{eq: CUs average AoI}, has two upper bounds,
	\begin{align}
		J_c\leq&\frac{1}{2}\sum_{k\in[2]}\frac{1}{b_kr_2}\label{eq: Jc N2S upper bound 1},\\
		J_c\leq&\frac{1}{2}\sum_{k\in[2]}\Big(\frac{1-b_k+b_kr_2}{b_kr_2}\nonumber\\
		&+(1-r_2)\sum_{n\in[2]}b_k^{(n)}\sum_{\phi\in\mathcal{G}_n^z}F_{nk}(\phi)\pi_\phi\Big)\label{eq: Jc N2S upper bound 2}.
	\end{align}
\end{theorem}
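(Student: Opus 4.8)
The plan is to mirror the proof of the oblivious bounds in Theorem~\ref{thm: upper bounds of Jc}, since the non-oblivious recursion \eqref{eq: recursion of h-2} has exactly the same four-branch structure as its oblivious counterpart \eqref{eq: recursion of h-1}; the only new feature is that the constant command rate $r_k$ is replaced by the state-dependent rate $r_{nk}(t)$ of \eqref{eq: taukt}. The single fact that drives both bounds is the deterministic inequality $r_{nk}(t)\ge r_2$ for all $n,k,t$: by \eqref{eq: taukt} the rate $r_{nk}(t)$ always equals one of $r_1$, $r-r_1=r_2$, $r/2$ (for $k=1$), or one of $r_2$, $r_1$, $r/2$ (for $k=2$), and since $r=r_1+r_2$ with $r_1\ge r_2$, each of these is at least $r_2$. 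Throughout I would use that the two CUs are statistically identical, so that $J_c=\tfrac12\sum_{k\in[2]}\lim_{t\to\infty}\mathbb{E}[h_{mk}(t)]$ and it suffices to bound a single per-CU, per-CP stationary age.

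For the first bound \eqref{eq: Jc N2S upper bound 1}, I would discard every cached delivery and keep only the reset branch. In each slot the probability that $h_{mk}$ drops to $1$ is $\sum_{n\in[2]}b_k^{(n)}r_{nk}(t)\ge b_kr_2$. I would couple $h_{mk}(t)$ with a renewal process that resets to $1$ with the fixed probability $b_kr_2$ and increments by $1$ otherwise, nesting the reset events so that the dominating process resets only when $h_{mk}$ also resets; since the three non-reset branches of \eqref{eq: recursion of h-2} all leave $h_{mk}$ at most $h_{mk}(t)+1$, a slot-by-slot induction gives $h_{mk}(t)\le H_{mk,1}(t)$ pathwise. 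The stationary mean of the geometric sawtooth $H_{mk,1}$ is $1/(b_kr_2)$, and summing over $m,k$ yields \eqref{eq: Jc N2S upper bound 1}.

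For the second bound \eqref{eq: Jc N2S upper bound 2}, I would instead retain the cached deliveries but replace $\min\{h_{mk}(t),g_{nk}(t)\}$ by $g_{nk}(t)$, which only enlarges the age; calling the resulting dominating process $\tilde h_{mk}$, the same induction gives $h_{mk}(t)\le\tilde h_{mk}(t)$. A renewal-reward (equivalently, one-step stationary) analysis of $\tilde h_{mk}$, carried out as in the appendix for Theorem~\ref{thm: upper bounds of Jc} but with $r_k$ replaced by $r_{nk}(t)$, expresses $\mathbb{E}[\tilde h_{mk}]$ through a reset contribution together with the cache-penalty terms $\sum_n b_k^{(n)}\,\mathbb{E}[(g_{nk}(t)+1)(1-r_{nk}(t))]$. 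I would then apply the deterministic bound $1-r_{nk}(t)\le 1-r_2$ to peel the random rate off the age, giving $\mathbb{E}[(g_{nk}(t)+1)(1-r_{nk}(t))]\le(1-r_2)(\mathbb{E}[g_{nk}(t)]+1)$, which produces the reset term $\tfrac{1-b_k+b_kr_2}{b_kr_2}$ and the factor $(1-r_2)$. Finally I would substitute the closed-form stationary eRRH age $\mathbb{E}[g_{nk}(t)]=\sum_{\phi\in\mathcal{G}_n^z}F_{nk}(\phi)\pi_\phi$ from Theorem~\ref{thm: expectation of Xnk}, where $\pi_\phi$ and $F_{nk}(\phi)$ are the steady-state distribution and mean first passage times of the truncated chain $G_n^z(t)$, to obtain \eqref{eq: Jc N2S upper bound 2}.

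The main obstacle is precisely this last decoupling. Unlike the oblivious case, where $g_{nk}(t)$ evolves independently of the command decisions and the factor $(1-r_k)$ is a constant, under an NSR policy $r_{nk}(t)$ and $g_{nk}(t)$ are both measurable functions of the same Markov state $G_n(t)$ through the partition $\mathcal{G}_{n1},\mathcal{G}_{n2},\mathcal{G}_{n3}$, so $\mathbb{E}[(g_{nk}(t)+1)(1-r_{nk}(t))]$ does not factor into a product of expectations. The pathwise inequality $r_{nk}(t)\ge r_2$ is exactly what lets me replace the correlated random rate by the constant $r_2$ without ever computing the joint law of $(g_{nk}(t),r_{nk}(t))$; checking that this replacement is legitimate in every branch of \eqref{eq: recursion of h-2}, and that the dominating process still upper-bounds $h_{mk}$ after the rate is frozen, is the step I expect to require the most care.
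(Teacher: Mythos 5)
Your proposal is correct and follows essentially the same route as the paper: the paper's Appendix defines the dominating processes $u_{mk}'$ and $v_{mk}'$ that reset with the frozen probability $b_kr_2$ (and, for the second bound, deliver $g_{nk}(t)+1$ with weight $b_k^{(n)}(1-r_2)$), proves $\mathbb{E}[h_{mk}(t)]\leq\mathbb{E}[u_{mk}'(t)]$ and $\mathbb{E}[h_{mk}(t)]\leq\mathbb{E}[v_{mk}'(t)]$ by the same induction as Lemma~\ref{lem: upper bounds O2S}, and then substitutes $\lim_{t\to\infty}\mathbb{E}[g_{nk}(t)]=\sum_{\phi\in\mathcal{G}_n^z}F_{nk}(\phi)\pi_\phi$ exactly as you do. Your key observation --- that the pathwise inequality $r_{nk}(t)\geq r_2$ lets you replace the state-dependent rate by a constant without computing the joint law of $(g_{nk}(t),r_{nk}(t))$ --- is precisely the device underlying the paper's choice of $r_2$ in both dominating processes.
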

\begin{proof}
The proof is given in Appendix~\ref{App: thm: upper bounds of Jc N2S}.
\end{proof}
Both upper bounds \eqref{eq: Jc N2S upper bound 1} and \eqref{eq: Jc N2S upper bound 2} represent extreme scenarios in which the eRRHs always command the CPs to update with new packets with probability $r_2$. Specifically, in the case associated with \eqref{eq: Jc N2S upper bound 1}, the CUs exclusively deliver new generated packets, without transmitting previously cached ones. In contrast, under \eqref{eq: Jc N2S upper bound 2}, the CUs replace the most recently received packets with the ones currently delivered by the eRRHs even if the latter are older. 
	
Under non-oblivious policies, the opposite phenomenon to that described in Theorem~\ref{thm: upper bounds of Jc} is observed \big(see Fig.~\ref{network2_ub}(b) in Section~\ref{sec: numerical results}\big): the upper bound given in \eqref{eq: Jc N2S upper bound 2} is more accurate. This is because, in each time slot, eRRHs prioritize commanding the CP with the higher AoI, thereby reducing the likelihood that the delivered content is older than the CUs' previously received content. As a result, in the scenario corresponding to \eqref{eq: Jc N2S upper bound 2}, the CUs are more likely to receive fresher content consistently.

\subsection{Optimal NSR Strategy}\label{subsec: optimal strategy pic in nonoblivious network}
Following the approach outlined in Section~\ref{subsec: optimal strategy pic in oblivious network}, we derive an optimal policy $\pi^*$ for $J_c$ in \eqref{eq: CUs average AoI} via the following two steps: (i) Fix $b_k$ and $r_k$, determine an optimal $\big((b_k^{(1)})^*, (b_k^{(2)})^*\big)$ that minimize $$\lim_{T\to\infty}\frac{1}{T}\sum_{t=1}^{T}\mathbb{E}[h_{mk}(t)];$$ (ii) Find an optimal policy $\pi^*$ to minimize $$J_c=\frac{1}{4}\sum_{m\in[2]}\sum_{k\in[2]}\lim_{T\to\infty}\frac{1}{T}\sum_{t=1}^{T}\mathbb{E}[h_{mk}(t)].$$
Similar to \eqref{eq: upper bound H1} and \eqref{eq: upper bound H2}, we define two stochastic processes,
\begin{align}\label{eq: upper bound H1 NS}
H_{mk, 1}'(t+1) = \left\{
\begin{aligned}
&1&& \sum_{n\in[2]}b_k^{(n)}r_{nk}(t)\\
&h_{mk}(t)+1&&1-\sum_{n\in[2]}b_k^{(n)}r_{nk}(t)
\end{aligned}
\right.
\end{align}
and
\begin{align}\label{eq: upper bound H2 NS}
	H_{mk, 2}'(t+1) = \left\{
	\begin{aligned}
		&1&&\sum_{n\in[2]}b_k^{(n)}r_{nk}(t)\\
		&g_{1k}(t)+1&& b_k^{(1)}\big(1-r_{1k}(t)\big)\\
		&g_{2k}(t)+1&&b_k^{(2)}\big(1-r_{2k}(t)\big)\\
		&h_{mk}(t)+1&&1-b_k,
	\end{aligned}
	\right.
\end{align}
where $g_{1k}(t)$ and $g_{2k}(t)$ are given in \eqref{eq: nonoblivious recursion g}. By convention, let $H_{mk, 1}'(0) = H_{mk, 2}'(0)=1$. From \eqref{eq: recursion of h-2}, \eqref{eq: upper bound H1 NS}, and \eqref{eq: upper bound H2 NS}, we have
\begin{align}\label{eq: hH1H2 NS}
h_{mk}(t) \overset{d}{=} \min\{H_{mk, 1}'(t), H_{mk, 2}'(t)\},
\end{align}
where $\overset{d}{=}$ represents equality {\it in distribution}. By utilizing \eqref{eq: hH1H2 NS}, we derive the following analytical results.

\begin{lemma}\label{lem: optimal H h NS}
Fix $b_k, r_k$ with $k\in[2]$. For each $t\geq 2$, there exists  a threshold depending on $r_k$ , denoted as  $w'(r_k, r_{3-k}; t)$, such that, 
\begin{itemize}
	\item [] (i) when $b_k>w'(r_k, r_{3-k}; t)$, $(b_1^{(1)}, b_2^{(1)})=(b_1/2, b_2/2)$ is the global minimizer of $\mathbb{E}[h_{mk}(t)]$; 
	\item [] (ii) when $b_k < w'(r_k, r_{3-k}; t)$, either $(b_1^{(1)}, b_2^{(1)})=(b_1, b_2)$ or $(0, 0)$ is the global minimizer of $\mathbb{E}[h_{mk}(t)]$;
	\item [] (iii) when $b_k = w'(r_k, r_{3-k}; t)$, either $(b_1^{(1)}, b_2^{(1)})=(b_1, b_2)$, $(0, 0)$, or $(b_1/2, b_2/2)$ is the global minimizer of $\mathbb{E}[h_{mk}(t)]$;
	\item [] (iv) $w'(r_k, r_{3-k}; t)$ decreases with $t$ and converges as $t \to \infty$.
\end{itemize}
\end{lemma}
\begin{proof}
The proof follows a similar approach to that of Lemma~\ref{lem: optimal H h OS}, with a key difference: $r_{nk}(t)$ depends on the values of $\{g_{nk}(t)\}_{k\in[2]}$, making the optimization of $b_k^{(1)}$ and $b_k^{(2)}$ interdependent.  As a result,  $b_k^{(1)}$ and $b_k^{(2)}$ cannot be treated independently; they must be considered together as an integral. The structure of the proof is outlined as follows: First, we demonstrate that $\mathbb{E}[h_{mk}(t)]$ is symmetric with respect to $(b_1^{(1)}, b_2^{(1)})=(b_1/2, b_2/2)$, and $\mathbb{E}[h_{mk}(t)]$ is continuous in $b_1^{(1)}$ and $b_2^{(1)}$, which implies that the minimum attained at $(b_1, b_2)$, $(0, 0)$, or $(b_1/2, b_2/2)$. Next, we prove the lemma by mathematical induction. Specifically, since $h_{mk}(t)$ becomes asymptotically stationary under stationary randomized policies, $h_{mk}(t+1) = h_{mk}(t)$ almost surely as $t\to\infty$, which implies that $\lim_{t\to\infty}w'(r_k, r_{3-k}; t)$ converges. The detailed proof is provided in Appendix~\ref{App: lem: optimal H h NS}.
\end{proof}
\begin{remark}
Let $w'(r_k, r_{3-k}) = \lim_{t\to\infty}w'(r_k, r_{3-k}; t)$. One may wonder what happens in a mixed case -- for example, when $b_1>w'(r_1, r_2)$ but $b_2\leq w'(r_2, r_1)$.  From the condition $b_1>w'(r_1, r_2)$, one may infer that an optimal choice is $(b_1^{(1)}, b_2^{(1)})=(b_1/2, b_2/2)$. However, based on  $b_2\leq w'(r_2, r_1)$, an optimal decision would instead suggest  $(b_1^{(1)}, b_2^{(1)})=(b_1, b_2)$ or $(0, 0)$. In Lemma~\ref{lem: optimal H1H2h NS}, we will show that such a mixed case does {\it not} exist. In other words, since $b_k^{(1)}$ and $b_k^{(2)}$ cannot be treated independently, the thresholds $w'(r_1, r_2)$ and $w'(r_2, r_1)$ must also be coordinated rather than treated independently. 
\end{remark}

\begin{lemma}\label{lem: optimal H1H2h NS}
Let $b, r$ be fixed and $(b, r)\in\mathcal{F}$: (i) if $r\leq 1$, $(b_1, b_2) = (b/2, b/2)$ and $(r_1, r_2)=(r, 0)$ is the global minimizer of $\sum_{k\in[2]}\mathbb{E}[h_{mk}(t)]$ for $t\geq 1$; (ii) if $r > 1$, $(b_1, b_2) = (b/2, b/2)$ and $(r_1, r_2)=(1, r-1)$ is the global minimizer of $\sum_{k\in[2]}\mathbb{E}[h_{mk}(t)]$ for $t\geq 1$.
\end{lemma}
\begin{proof}
The roadmap of the proof is similar to that of Lemma~\ref{lem: optimal H1H2h OS}, and is given as follows. We begin by verifying that if both CUs have the optimal allocation $\big((b_1^{(1)})^*,  b_2^{(1)})^*\big) = (b_1, b_2)$ or $(0, 0)$, the minimum choice for $(b_1, r_1)$ is $(b/2, r/2)$. Next, we show that if both CUs choose the optimal split $\big((b_1^{(1)})^*,  b_2^{(1)})^*\big) = (b_1/2, b_2/2)$, $(b/2, r/2)$ again yields the minimum. Finally, we demonstrate that the mixed case--where one CU chooses $(b_1, b_2)$ or $(0, 0)$, and the other chooses $(b_1/2, b_2/2)$--cannot occur. The full proof is provided in Appendix~\ref{App: optimal H1H2h NS}.
\end{proof}

\begin{theorem}\label{thm: b r optimal NS}
Let $b, r$ be fixed, $(b, r)\in\mathcal{F}$, and $ m,n,k\in[2]$. Let
\begin{align}\label{eq: optimal r}
(r_1, r_2)=\left\{
\begin{aligned}
	&(r, 0),\quad r\leq 1\\
	&(1, r-1),\quad r>1.
\end{aligned}
\right.
\end{align}
There exists a threshold $w'(\min\{r, 1\}, r-\min\{r, 1\})$, such that if $b>w'(\min\{r, 1\}, r-\min\{r, 1\})$, the optimal policy $\pi^*$ is given by
\begin{align}\label{eq: optimal NSR-1}
\pi^* = \Big\{
b_{mk}^{(n)}=b/4, (r_1, r_2)\text{ is given in \eqref{eq: optimal r}}
\Big\};
\end{align}
if $b \leq w'(\min\{r, 1\}, r-\min\{r, 1\})$, an optimal policy
\begin{align}\label{eq: optimal NSR-2}
\pi^* = 
\Big\{
(b_{mk}^{(n)}, b_{mk}^{(3-n)}) = (b/2, 0), (r_1, r_2)\text{ is given in \eqref{eq: optimal r}}\Big\}.
\end{align}
\end{theorem}
\begin{proof}
From Lemma~\ref{lem: optimal H h NS}~(iv), $\lim_{t\to\infty}w'(r_k, r_{3-k}; t)$ exists, denoted as $w'(r_k, r_{3-k})$. 
Next, the proof follows directly from  Lemma~\ref{lem: optimal H h NS}~(i), (ii), (iii), and Lemma~\ref{lem: optimal H1H2h NS}. 
\end{proof}
\begin{remark}
If $b  < w'(\min\{r, 1\}, r-\min\{r, 1\})$, an alternative optimal policy is 
$\pi^* = \{(b_{mk}^{(n)}, b_{mk}^{(3-n)}) = (0, b/2),  (r_1, r_2)\text{ is given in \eqref{eq: optimal r}}\}$.
If $b = w'(\min\{r, 1\}, r-\min\{r, 1\})$, then two alternative optimal policies exist:
$\pi^* = \big\{(b_{mk}^{(n)}, b_{mk}^{(3-n)}) = (0, b/2), (r_1, r_2)\text{ is given in \eqref{eq: optimal r}}\big\}$
and  $\pi^* = \big\{b_{mk}^{(n)}=b/4, (r_1, r_2)\text{ is given in \eqref{eq: optimal r}}\big\}$.
\end{remark}

The optimal non-oblivious policies share similarities to optimal oblivious policies, but also exhibit key differences. Based on Theorem~\ref{thm: b r optimal NS}, we can draw the following insights regarding the optimal strategies for both CUs and eRRHs:
\begin{itemize} 
\item[] (i) CU Policies: The optimal strategies for each CU share similarities with those under oblivious policies. Specifically, in the first decision layer—allocating rates among CPs—the optimal policy is to distribute the rate equally, i.e., $b_k = b/2$, regardless of how the rate is further allocated among the eRRHs associated with each CP. In the second decision layer—allocating the CU’s request rate to eRRHs for a selected CP—the strategy depends on the magnitude of $b$. When the rate is relatively small, i.e., $b \leq w'(\min\{r, 1\}, r-\min\{r, 1\})$, the optimal choice is to consolidate the entire rate to a single eRRH. In contrast, when the rate is relatively large, i.e., $w'(\min\{r, 1\}, r-\min\{r, 1\}) < b$, it is optimal to allocate the rate equally between the two eRRHs.
\item[] (ii) eRRH Policies: In contrast to the optimal OSR policy, which distributes the rates $(r_1, r_2)$ equally, the optimal policy here changes to: $(r_1, r_2) = (r, 0)$ when $r\leq 1$ and $(r_1, r_2) = (1, r-1)$ when $r>1$. Under the optimal policy, priority is given to transmit commands to the CP with the higher AoI. This consolidation is optimal because the urgency of delivering commands varies with the CPs’ current ages, and prioritizing the most outdated CP ensures the most efficient use of the available rate. 
\end{itemize}

\section{Numerical Results}\label{sec: numerical results}
In this section, we verify our findings through simulations, focusing on networks with $M=N=K=2$. For non-oblivious policies, we let pre-determined threshold $z=3$. 

\begin{figure}[htbp]
	\centering
	\includegraphics[height=4.5cm, width=6cm]{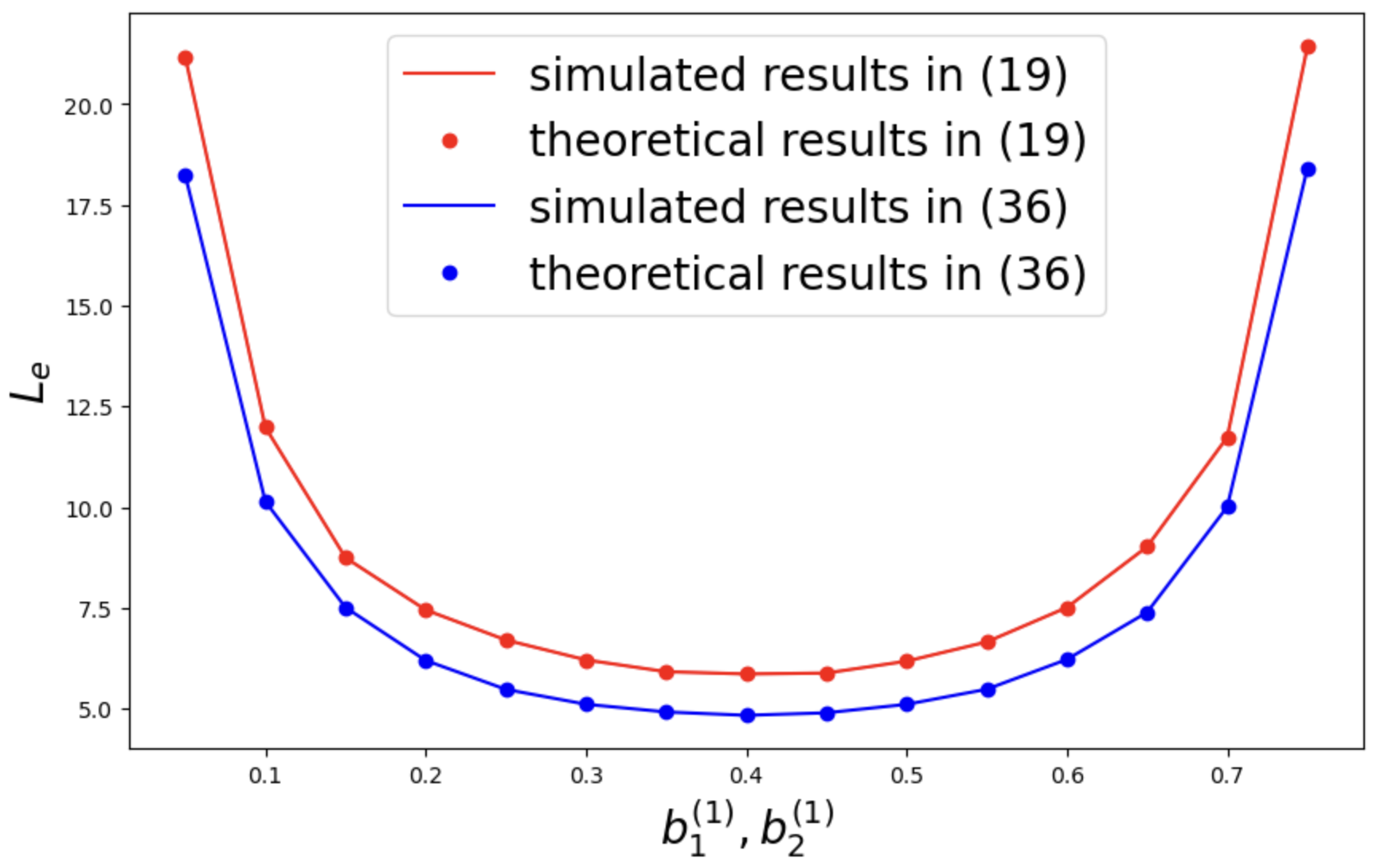}
	\caption{The theoretical and numerical $L_e$ under both oblivious and non-oblivious policies.}
	\label{Je}
\end{figure} 
First of all, we verify our theoretical results for $L_e$ under both oblivious (see Theorem~\ref{thm: closed form Je}) and non-oblivious (see Theorem~\ref{thm: expectation of Xnk}) policies. In Fig.~\ref{Je},  we let $r_1=0.4$, $r_2=0.2$, and $b_1=b_2=0.8$, while varying $b_1^{(1)}=b_2^{(1)}$ within the range $(0, 0.8)$. The closed-form expressions for $L_e$ align perfectly with the simulation results, confirming the accuracy of Theorem~\ref{thm: closed form Je} and Theorem~\ref{thm: expectation of Xnk}. 
\begin{figure}[htbp]
	\centering
	\begin{minipage}[b]{0.241\textwidth}
		\centering
		\includegraphics[width=\textwidth]{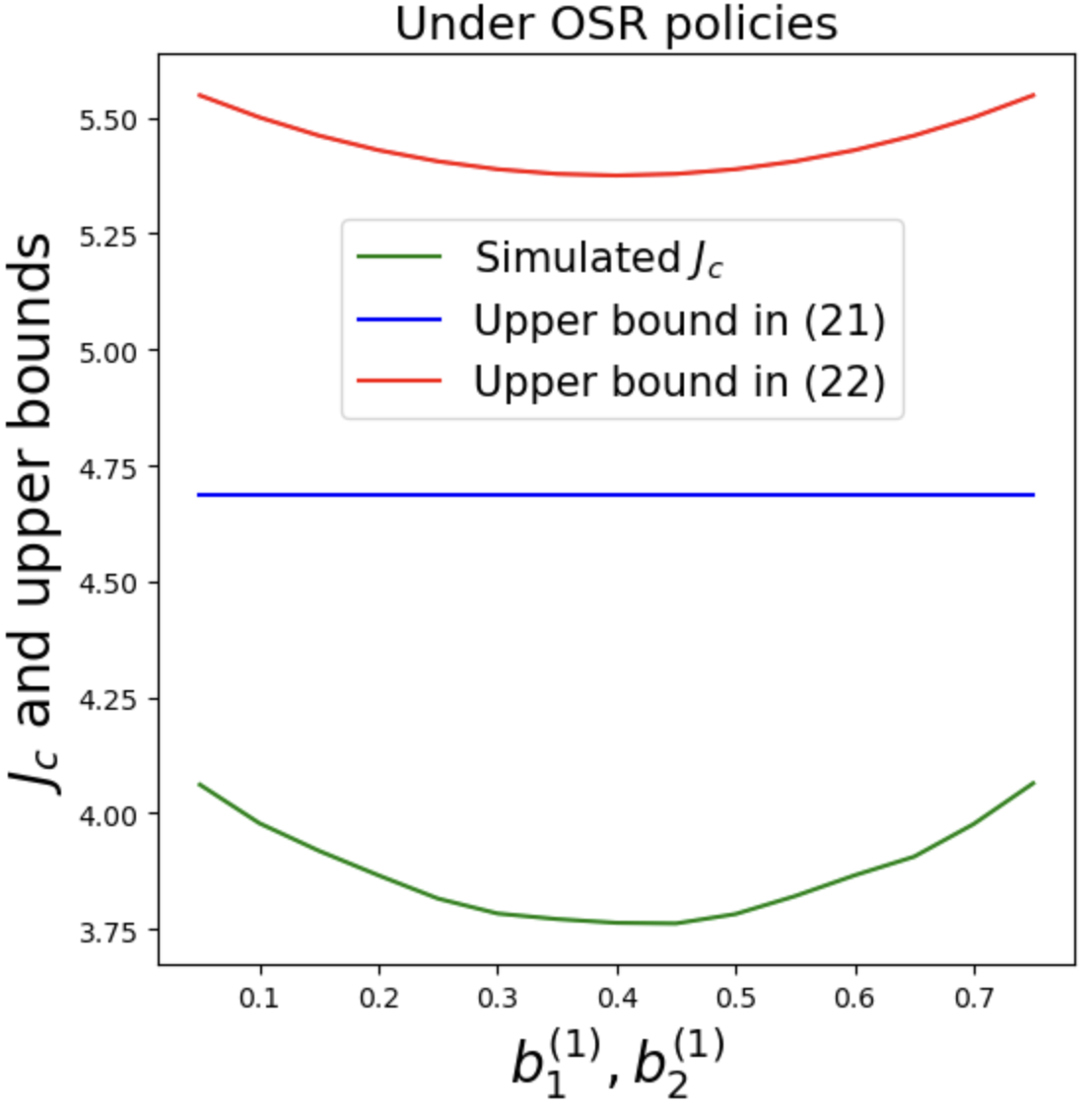}
		\caption*{(a) }
	\end{minipage}
	\begin{minipage}[b]{0.241\textwidth}
		\centering
		\includegraphics[width=\textwidth]{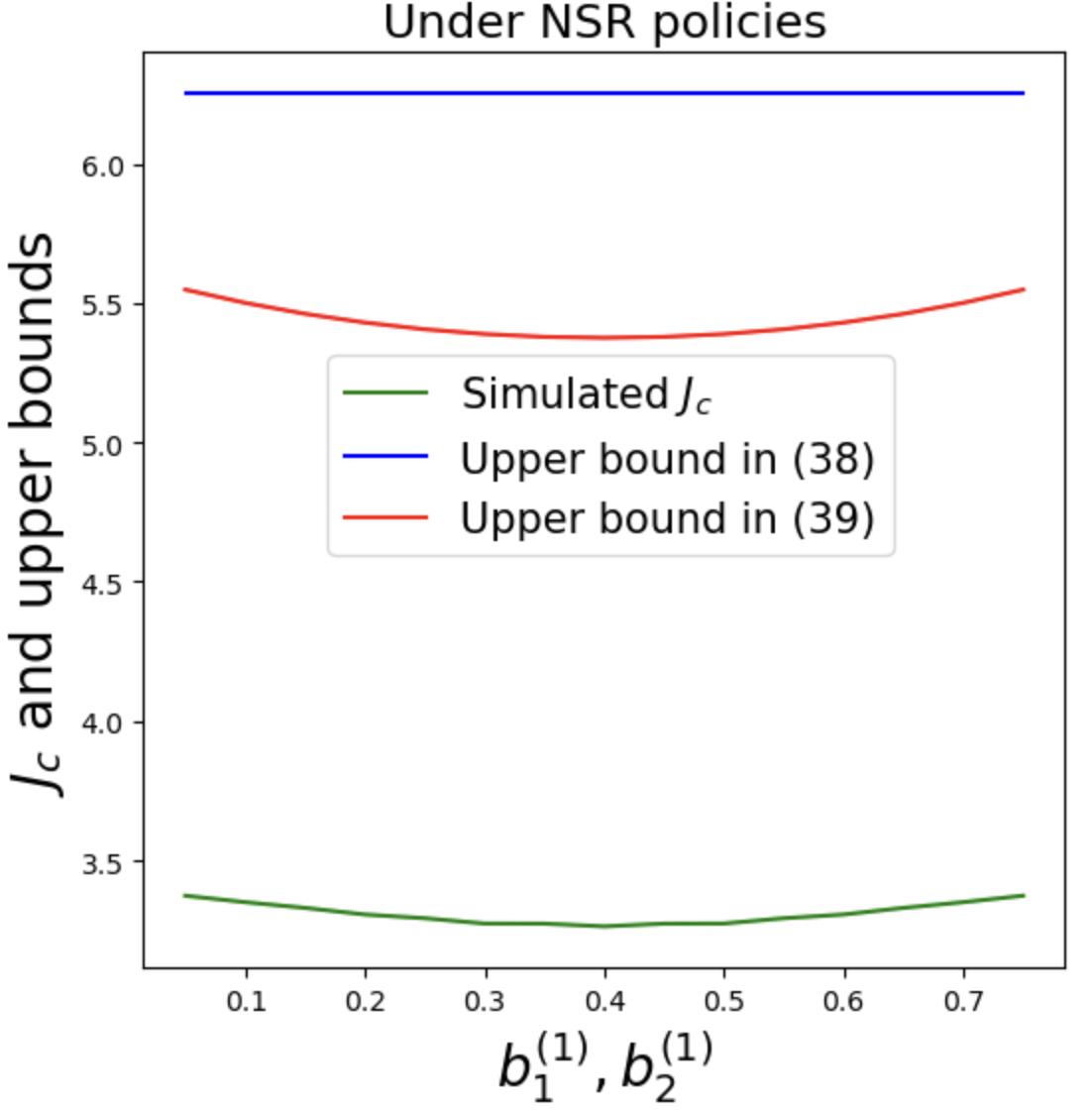}
		\caption*{(b)}
	\end{minipage}
	\caption{$J_c$ and upper bounds under both oblivious and non-oblivious policies.}
	\label{network2_ub}
\end{figure}

In Fig.~\ref{network2_ub}, we verify our theoretical results for upper bounds (see Theorem~\ref{thm: upper bounds of Jc} and Theorem~\ref{thm: upper bounds of Jc N2S}). For both cases, we set $r_1=0.4$, $r_2=0.2$, and $b_1^{(1)}=b_2^{(1)}$ varies within the range $[0, 0.8]$. We observe that \eqref{eq: Jc O2S upper bound 1}, \eqref{eq: Jc O2S upper bound 2}, \eqref{eq: Jc N2S upper bound 1}, and \eqref{eq: Jc N2S upper bound 2} indeed serve as upper bounds. 
Under oblivious policies, the upper bound in \eqref{eq: Jc O2S upper bound 1} offers a more accurate estimate compared to the one in \eqref{eq: Jc O2S upper bound 2}. This is because \eqref{eq: Jc O2S upper bound 1} corresponds to an extreme scenario where eRRHs only send new packets and never retransmit previously cached ones. In contrast, the upper bound in \eqref{eq: Jc O2S upper bound 2} reflects another extreme case where the CUs overwrite their most recently received packets with those currently delivered by the eRRHs, even if the new packets are older. Since the commanding rates for CPs remain constant over time, the first scenario results in a higher rate of new content reception, making its corresponding bound a tighter and more representative estimate under oblivious behavior.

Under non-oblivious policies, we observe the opposite: the upper bound  in \eqref{eq: Jc N2S upper bound 2} turns out to be more accurate. While under the upper bound \eqref{eq: Jc N2S upper bound 1} the CUs transmit only new packets and disregard previously cached ones, the scenario in \eqref{eq: Jc N2S upper bound 2} allows CUs to replace their most recently received packets with the content currently delivered by the eRRHs—even if it is older. However, since the eRRHs prioritizes the CP that has a higher age in each time slot, the likelihood that they deliver stale content is significantly reduced. Consequently, in the setting of \eqref{eq: Jc N2S upper bound 2}, the CUs are more likely to consistently receive fresher content.

\begin{figure}[htbp]
\centering
\includegraphics[height=4.8cm, width=8.4cm]{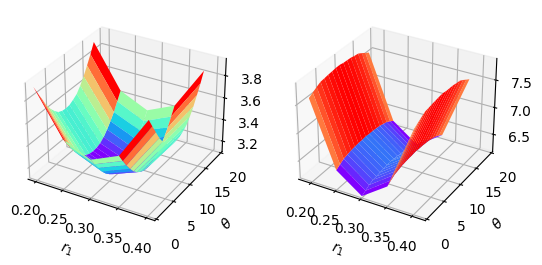}
\caption{Performance of $J_c$ under oblivious policies for $r=0.6$, with $b=1.8$ (left) and $b=0.8$ (right).}
\label{network2_O_1}
\end{figure} 

\begin{figure}[htbp]
	\centering
	\includegraphics[height=4.6cm, width=8.5cm]{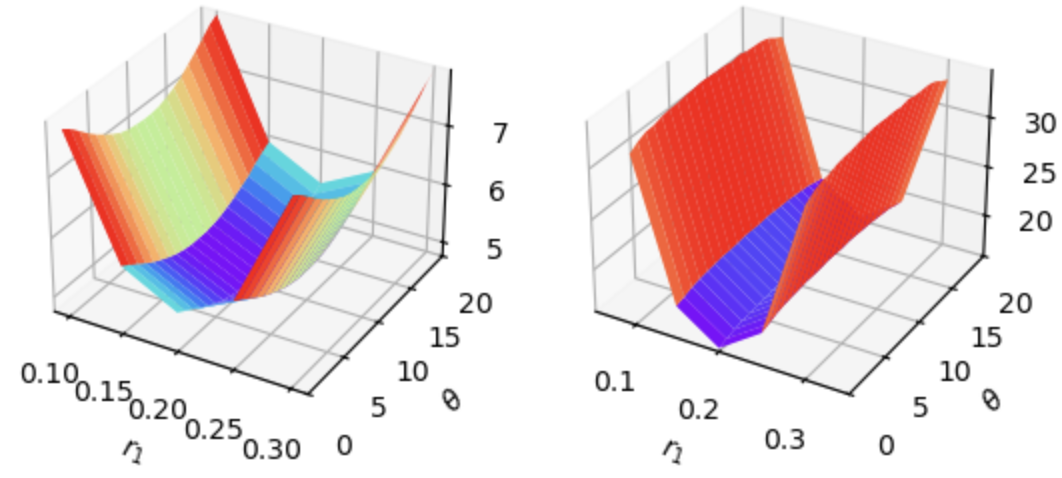}
	\caption{Performance of $J_c$ under oblivious policies for $r=0.4$, with $b=1.6$ (left) and $b=0.4$ (right).}
	\label{network2_O_2}
\end{figure}

We now verify the optimal oblivious and non-oblivious policies  stated in Theorem~\ref{thm: b r optimal} and Theorem~\ref{thm: b r optimal NS}, using the results illustrated in Fig.~\ref{network2_O_1} through Fig.~\ref{network2_N_2}, under different pairs of parameters $(b, r)$. In these figures: (i) The $x$-axis represents $r_1$. (ii) The $y$-axis shows the index $\theta\in [20]$, which determines the value of $b_k^{(1)}$ for $k\in[2]$. Specifically, $b_k^{(1)} = \frac{b_k}{20}\cdot\theta$. (iii) The $z$-axis represents the value of $J_c$. 
This visualization provides a comprehensive view of how $J_c$ varies with different combinations of $(r_1, r_2)$ and $(b_k^{(1)}, b_k^{(2)})$ with $k\in[2]$. 

The performance of $J_c$ under oblivious policies is illustrated in  Fig.~\ref{network2_O_1} and Fig.~\ref{network2_O_2}.  In Fig.~\ref{network2_O_1}, we fix the total rate $r=r_1+r_2 = 0.6$, and let  $(r_1, r_2)$ to vary within $[0.2, 0.4]$.  On the left side of Fig.~\ref{network2_O_1}, $(b_1, b_2)$ varies within the range $[0.8, 1]$ with $b=b_1+b_2=1.8$; on the right side, $(b_1, b_2)$ varies within the range $[0.3, 0.5]$ with $b=b_1+b_2=0.8$.  Similarly,  in Fig.~\ref{network2_O_2}, we set $r=0.4$ with  $(r_1, r_2)$ varying between $[0.1, 0.3]$.  The left side of Fig.~\ref{network2_O_2} corresponds to $(b_1, b_2)$ varying within the range $[0.6, 1]$ with $b=b_1+b_2=1.8$, while the right side considers $(b_1, b_2)$ varying within the range $[0.1, 0.3]$ with $b=b_1+b_2=0.4$.

In the left subplots of both Fig.~\ref{network2_O_1} ($b=1.8$) and Fig.~\ref{network2_O_2} $(b=1.6)$, 
the total request probability is relatively high. In this setting, $J_c$ attains its minimum at $(b_1^{(1)}, b_1^{(2)}, b_2^{(1)}, b_2^{(2)}) = (b/4, b/4, b/4, b/4)$ and $r_1=r_2=r/2$, which aligns with the optimal solution described in \eqref{eq: optimal OSR-1} of Theorem~\ref{thm: b r optimal}. In contrast, on the right sides of  Fig.~\ref{network2_O_1} ($b=0.8$) and Fig.~\ref{network2_O_2} $(b=0.4)$, where the total request probability for CPs is  relatively low,  $J_c$ is minimized at $(b_1^{(1)}, b_1^{(2)}, b_2^{(1)}, b_2^{(2)}) = (b/2, 0, b/2, 0)$ and $r_1=r_2=r/2$,  consistent with \eqref{eq: optimal OSR-2} in  Theorem~\ref{thm: b r optimal}.  

Unlike previous studies \cite{Freshcaching, Cacheupdating, cacherecommendation, timelyproactivecache, Timelycache}, where the optimal choice is always to consolidate all rates to a single eRRH (or source), our findings include a different choice. In those precedents, decisions within an eRRH are independent of each other, leading to straightforward rate consolidation.  However, in our model, the interdependence of decisions within an eRRH results in a different optimal choice. This interdependence decreases the average AoI for CUs by adopting this new optimal choice, resulting in better performance compared to the scenarios considered in the aforementioned studies.

\begin{figure}[htbp]
	\centering
	\includegraphics[height=5cm, width=8.5cm]{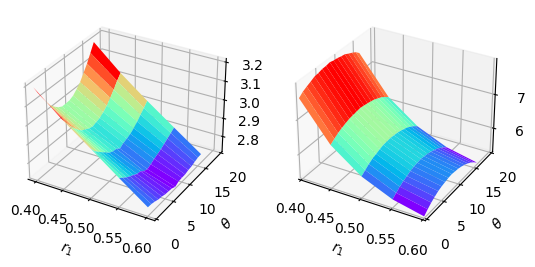}
	\caption{Performance of $J_c$ under non-oblivious policies for $r=0.6$, with $b=1.8$ (left) and $b=0.8$ (right).}
	\label{network2_N_1}
\end{figure} 

\begin{figure}[htbp]
	\centering
	\includegraphics[height=4.6cm, width=8.5cm]{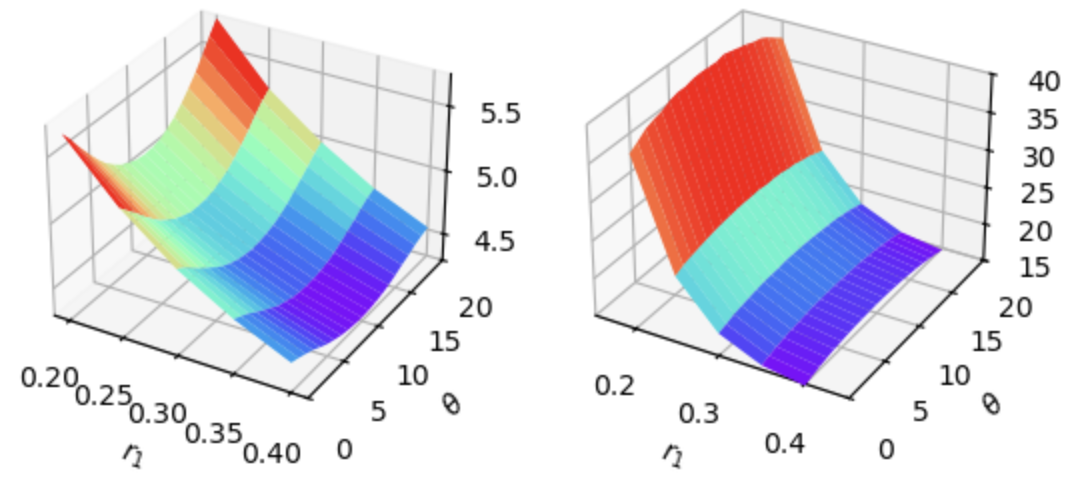}
	\caption{Performance of $J_c$ under non-oblivious policies for $r=0.4$, with $b=1.6$ (left) and $b=0.4$ (right).}
	\label{network2_N_2}
\end{figure} 

The performance of $J_c$ under non-oblivious policies is shown in  Fig.~\ref{network2_N_1} and Fig.~\ref{network2_N_2}.  The experimental setup in Fig.~\ref{network2_N_1} matches that of Fig.~\ref{network2_O_1}:  we fix $r=0.6$, and let  $(r_1, r_2)$ to vary within $[0.2, 0.4]$.  On the left, $(b_1, b_2)$ varies within the range $[0.8, 1]$ with $b=1.8$; on the right, $(b_1, b_2)$ varies within the range $[0.3, 0.5]$ with $b=0.8$.  Similarly,  Fig.~\ref{network2_N_2} uses the same setup as Fig.~\ref{network2_O_2}: we set $r=0.4$ with  $(r_1, r_2)$ varying between $[0.1, 0.3]$.  The left panel corresponds to $(b_1, b_2)$ varying within the range $[0.6, 1]$ with $b=b_1+b_2=1.8$, while the right panel uses $(b_1, b_2)$ varying within the range $[0.1, 0.3]$ with $b=b_1+b_2=0.4$.

In the left subplots of both Fig.~\ref{network2_N_1} ($b=1.8$) and Fig.~\ref{network2_N_2} $(b=1.6)$, 
the total request probability of requests for CPs is relatively high. Under this setting, $J_c$ achieves its minimum at $(b_1^{(1)}, b_1^{(2)}, b_2^{(1)}, b_2^{(2)}) = (b/4, b/4, b/4, b/4)$ and $(r_1, r_2)=(r, 0)$, which aligns with the optimal solution described in \eqref{eq: optimal NSR-1} of Theorem~\ref{thm: b r optimal NS}. In contrast, on the right sides of  Fig.~\ref{network2_N_1} ($b=0.8$) and Fig.~\ref{network2_N_2} $(b=0.4)$, where the total request probability of requests for CPs is  relatively low,  $J_c$ is minimized at $(b_1^{(1)}, b_1^{(2)}, b_2^{(1)}, b_2^{(2)}) = (b/2, 0, b/2, 0)$ and $(r_1, r_2)=(r, 0)$,  consistent with \eqref{eq: optimal NSR-2} in  Theorem~\ref{thm: b r optimal NS}.  

By comparing the numerical results in Figs.~\ref{network2_O_1},~\ref{network2_O_2} and Figs.~\ref{network2_N_1},~\ref{network2_N_2}, we observe that the average AoI under an optimal NSR strategy is smaller than under an optimal OSR strategy. Specifically: (i) For $(b, r)=(0.8, 0.6)$, $J_c=3.54$ under optimal OSR strategy, and $J_c=2.73$ under optimal NSR strategy. (ii) For $(b, r)=(1.8, 0.6)$, $J_c=6.21$ under optimal OSR strategy, and $J_c=5.34$ under optimal NSR strategy.  (iii) For $(b, r)=(1.6, 0.4)$, $J_c=4.82$ under optimal OSR strategy, and $J_c=4.32$ under optimal NSR strategy.  (iv) For $(b, r)=(0.4, 0.4)$, $J_c=16.11$ under optimal OSR strategy, and $J_c=14.74$ under optimal NSR strategy.
The performance gain stems from the fact that non-oblivious policies enable eRRHs to utilize historical AoI information, allowing them to make more informed and adaptive decisions. By leveraging this historical data, eRRHs can prioritize updates from CPs with the highest current AoI, thereby improving the freshness of information more effectively. In contrast, oblivious policies operate without knowledge of past AoI values and thus lack this adaptive capability. As a result, the optimal non-oblivious policy demonstrates superior performance in minimizing the average AoI.

\section{Conclusions and Future Research}\label{sec: Conclusions and Future Directions}

In this work, we analyzed the AoI performance in a decentralized F-RAN consisting $M$ CUs, $N$ eRRHs, and $K$ CPs, under two general classes of policies: oblivious and non-oblivious. We first focused on a baseline configuration with $M=N=K=2$, which enables analytical tractability while capturing essential system behaviors.
For this setting, we began by deriving closed-form expressions for the average AoI of eRRHs under both policy types. Given the analytical intractability of the average AoI at CUs, we provided two general upper bounds instead. We then identified the optimal policies for both eRRHs and CUs. Building upon the results obtained for the baseline configuration, we subsequently extended our framework to accommodate two general scenarios. Finally, our numerical results validate the theoretical findings.

Future research directions include the following generalizations: 1) Dynamic rates: the probabilities of actions change over time. Addressing this requires the integration of powerful machine learning tools, such as multi-armed bandit algorithms, to adaptively optimize the AoI performance. 2) Heterogeneous networks: Here, CUs, eRRHs, and CPs are no longer statistically identical within their respective groups. A promising approach is to categorize CUs, eRRHs, and CPs into several groups based on their characteristics. We can then apply a similar analytical framework to each group, aggregating the results to identify optimal policies across the heterogeneous network.

\bibliographystyle{unsrt}
\bibliography{references.bib}

\begin{thebibliography}{10}

\bibitem{fogcomputing}
{F. Bonomi, R. Milito, P. Natarajan, etc}.
\newblock {\em {Fog Computing: A Platform for Internet of Things and
  Analytics}}.
\newblock {Springer International Publishing}, 2014.

\bibitem{C-RAN}
{M. Peng, X. Xie, Q. Hu, etc}.
\newblock {Contract-Based Interference Coordination in Heterogeneous Cloud
  Radio Access Networks}.
\newblock {\em {IEEE Journal on Selected Areas in Communications}}, 33(6):{1140
  -- 1153}, 2015.

\bibitem{FRAN}
{H. Kong, I. Flint, P. Wang, etc}.
\newblock {Fog Radio Access Networks: Ginibre Point Process Modeling and
  Analysis}.
\newblock {\em {IEEE Transactions on Wireless Communications}}, 17(8):5564 --
  5580, 2018.

\bibitem{on-demandAoI}
{M. Hatami, M. Leinonen, and M. Codreanu}.
\newblock {AoI minimization in status update control with energy harvesting
  sensors}.
\newblock {\em {IEEE Transactions on Communications}}, 69(12):8335 -- 8351,
  2021.

\bibitem{AoIconcept}
{S. Kaul, R. D. Yates, and M. Gruteser}.
\newblock {Real-time status: How often should one update?}
\newblock In {\em IEEE International Conference on Computer Communications},
  2012.

\bibitem{cxrTIT}
{X. Chen, K. Gatsis, H. Hassani, and S. Saeedi-Bidokhti}.
\newblock {Age of Information in Random Access Channels}.
\newblock {\em {IEEE Transactions on Information Theory}}, 68(10):{6548 --
  6568}, 2022.

\bibitem{cxrinfocom}
{X. Chen, X. Liao, and S. Saeedi-Bidokhti}.
\newblock {Real-time Sampling and Estimation on Random Access Channels: Age of
  Information and Beyond}.
\newblock In {\em {IEEE International Conference on Computer Communications}},
  2021.

\bibitem{Timelycache}
{P. Kaswan, M. Bastopcu, and S. Ulukus}.
\newblock {Timely cache updating in parallel multi-relay networks}.
\newblock {\em {IEEE Transactions on Wireless Communications}}, 23(1):2 -- 15,
  2024.

\bibitem{Freshcaching}
{B. Abolhassani, J. Tadrous, A. Eryilmaz, et. al}.
\newblock {Fresh caching of dynamic content over the wireless edge}.
\newblock {\em {IEEE/ACM Transactions on Networking}}, 30(5):2315 -- 2327,
  2022.

\bibitem{Cacheupdating}
{M. Bastopcu and S. Ulukus}.
\newblock {Information freshness in cache updating systems}.
\newblock {\em {IEEE Transactions on Wireless Communications}}, 20(3):1861 --
  1874, 2021.

\bibitem{cacherecommendation}
{G. Ahani and D. Yuan}.
\newblock {Optimal content caching and recommendation with age of information}.
\newblock {\em {IEEE Transactions on Mobile Computing}}, 23(1):689 -- 704,
  2024.

\bibitem{timelyproactivecache}
{Z. Chen}.
\newblock {Timely Proactive Cache Updating in Poisson Networks}.
\newblock In {\em {IEEE WiOpt Workshop on Spatial Stochastic Models for
  Wireless Networks}}, 2023.

\bibitem{MultisourceMultihop}
{S. Farazi, A. G. Klein, J. A. McNeill, et. al}.
\newblock {On the age of information in multi-source multi-hop wireless status
  update networks}.
\newblock In {\em {IEEE 19th International Workshop on Signal Processing
  Advances in Wireless Communications}}, 2018.

\bibitem{FundamentalScheduling}
{S. Farazi, A. G. Klein, and D. R. Brown III}.
\newblock {Fundamental bounds on the age of information in multi-Hop global
  status update networks}.
\newblock {\em {Journal of Communications and Networks}}, 21(3):268 -- 279,
  2019.

\bibitem{AoIMN}
{A. M. Bedewy, Y. Sun, and N. B. Shroff}.
\newblock {The Age of Information in Multihop Networks}.
\newblock {\em {IEEE/ACM Transactions on Networking}}, 27(3):1248 -- 1257,
  2019.

\bibitem{Multihopwirelessnetworks}
{K. S. A. Krishnan and V. Sharma}.
\newblock {Minimizing age of information in a multihop wireless network}.
\newblock In {\em {IEEE International Conference on Communications}}, 2020.

\bibitem{Modiano-1}
{V. Tripathi, R. Talak, and E. Modiano}.
\newblock {Information freshness in multihop wireless networks}.
\newblock {\em {IEEE/ACM Transactions on Networking}}, 31(2):784 -- 789, 2023.

\bibitem{Modiano-2}
{V. Tripathi and E. Modiano}.
\newblock {Age debt: a general framework For Minimizing Age of Information}.
\newblock In {\em {IEEE INFOCOM Age of Information Workshop}}, 2021.

\bibitem{BoostingHindering}
{J. Lou, X. Yuan, S. Kompella, et. al}.
\newblock {Boosting or hindering: AoI and throughput interrelation in
  routing-aware multi-hop wireless networks}.
\newblock {\em {IEEE/ACM Transactions on Networking}}, 29(3):1008 -- 1021,
  2021.

\bibitem{ODFMmultihop}
{J. Lou, X. Yuan, P. Sigdel, et. al}.
\newblock {Age of information optimization in multi-channel based multi-hop
  wireless networks}.
\newblock {\em {IEEE Transactions on Mobile Computing}}, 22(10):5719 -- 5732,
  2023.

\bibitem{Non-Poissonmultihop}
{P. Kaswan and S. Ulukus}.
\newblock {Age of information with non-poisson updates in cache-updating
  networks}.
\newblock In {\em {IEEE International Symposium on Information Theory}}, 2023.

\bibitem{cxrgnn}
{X. Chen, N. NaderiAlizadeh, A. Ribeiro, et. al}.
\newblock {Decentralized learning strategies for estimation error minimization
  with graph neural networks}.
\newblock {arXiv: 2404.03227}, Apr 2024.

\bibitem{AoIedge1}
{Q. Kuang, J. Gong, X. Chen, et. al}.
\newblock {Age-of-information for computation-intensive messages in mobile edge
  computing}.
\newblock In {\em {The 11th International Conference on Wireless Communications
  and Signal Processing}}, 2019.

\bibitem{MPeakAoI}
{C. Xu, H. H. Yang, X. Wang, et. al}.
\newblock {Optimizing information freshness in computing-enabled IoT networks}.
\newblock {\em {IEEE Internet of Things Journal}}, 2(971 - 985), 2020.

\bibitem{PAoIedge}
{F. Chiariotti, O. Vikhrova, B. Soret, et. al}.
\newblock {Peak age of information distribution for edge computing with
  wireless links}.
\newblock {\em {IEEE Transactions on Communications}}, 69(5):3176 -- 3191,
  2021.

\bibitem{ComputationTransmission}
{P. Zou, O. Ozel, and S. Subramaniam}.
\newblock {Optimizing information freshness through computation--transmission
  tradeoff and queue management in edge computing}.
\newblock {\em {IEEE/ACM Transactions on Networking}}, 29(2):949 -- 963, 2021.

\bibitem{AoIedgecomputing}
{Z. Tang, Z. Sun, N. Yang, et. al}.
\newblock {Age of information analysis of multi-user mobile edge computing
  systems}.
\newblock In {\em {IEEE Global Communications Conference}}, 2021.

\bibitem{MultiaccessEdge}
{A. Muhammad, I. Sorkhoh, M. Samir, et. al}.
\newblock {Minimizing age of information in multiaccess-edge-computing-assisted
  IoT networks}.
\newblock {\em {IEEE Internet of Things Journal}}, 9(15):13052 -- 13066, 2022.

\bibitem{NOMAedge}
{L. Liu, J. Qiang, Y. Wang, et. al}.
\newblock {Age of information analysis of NOMA-MEC offloading with dynamic task
  arrivals}.
\newblock In {\em {IEEE the 14th International Conference on Wireless
  Communications and Signal Processing}}, 2022.

\bibitem{decentralizedMEC}
{J. He, D. Zhang, S. Liu, et. al}.
\newblock {Decentralized updates scheduling for data freshness in mobile edge
  computing}.
\newblock In {\em {IEEE International Symposium on Information Theory}}, 2022.

\bibitem{RLEdge}
{K. Peng, P. Xiao, S. Wang, et. al}.
\newblock {AoI-aware partial computation offloading in IIoT with edge
  computing: a deep reinforcement learning based approach}.
\newblock {\em IEEE Transactions on Cloud Computing}, 11(4):3766 -- 3777, 2023.

\bibitem{CacheYates}
{R. D. Yates, P. Ciblat, A. Yener, et. al}.
\newblock {Age-Optimal Constrained Cache Updating}.
\newblock In {\em {IEEE International Symposium on Information Theory}}, 2017.

\bibitem{CachedEH}
{N. Pappas, Z. Cheny, and M. Hatami}.
\newblock {Average AoI of cached status updates for a process monitored by an
  energy harvesting sensor}.
\newblock In {\em {2020 54th Annual Conference on Information Sciences and
  Systems}}, 2020.

\bibitem{requestoriented}
{Y. Yan, Y. Wang, J. Zhao, et. al}.
\newblock {Request oriented cache update for age of information minimization in
  industrial control systems}.
\newblock In {\em {IEEE International Conference on Communications}}, 2023.

\bibitem{AgeSKaul}
{S. Kaul, R. Yates, and M. Gruteser}.
\newblock {Real-time status: how often should one update?}
\newblock In {\em {IEEE International Conference on Computer Communications}},
  2012.

\bibitem{convexoptimization}
{S. Boyd and L. Vandenberghe}.
\newblock {\em {Convex Optimization}}.
\newblock {Cambridge University Press}, 2004.

\bibitem{LAZY}
{D. C. Atabay, E. Uysal, and O. Kaya}.
\newblock {Improving Age of Information in Random Access Channels}.
\newblock In {\em {IEEE Conference on Computer Communications Workshops}},
  2020.

\bibitem{stochasticprocess}
{R. Durrett}.
\newblock {\em {Essentials of Stochastic Processes}}.
\newblock {Springer Science and Business Media}, 2012.

\bibitem{nelsen2006introduction}
R.~B. Nelsen.
\newblock {\em An Introduction to Copulas}.
\newblock Springer Science \& Business Media, 2006.

\end{thebibliography}

\vspace{-3 em}
\begin{IEEEbiography}[{\includegraphics[width=1in,height=1.25in,clip,keepaspectratio]{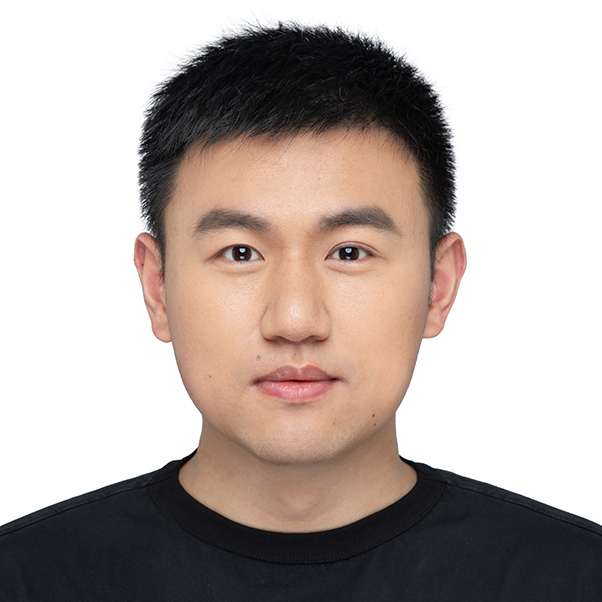}}]{Xingran Chen} received the M.A. in Applied Mathematics and Computational Science and the Ph.D. degree in Electrical and Systems Engineering from the University of Pennsylvania. He is currently a postdoctoral researcher at Rutgers University and an Assistant Professor (currently on leave) in the School of Information and Communication Engineering at University of Electronic Science and Technology of China. His research focuses on the theoretical foundations and collaborative learning algorithms for decentralized systems. He received the IEEE Communications Society \& Information Theory Society Joint Paper Award in 2023. He served as a guest editor for China Communications in 2024, and Entropy in 2025.
\end{IEEEbiography}

\vspace{-3 em}

\begin{IEEEbiography}[{\includegraphics[width=1in,height=1.25in,clip,keepaspectratio]{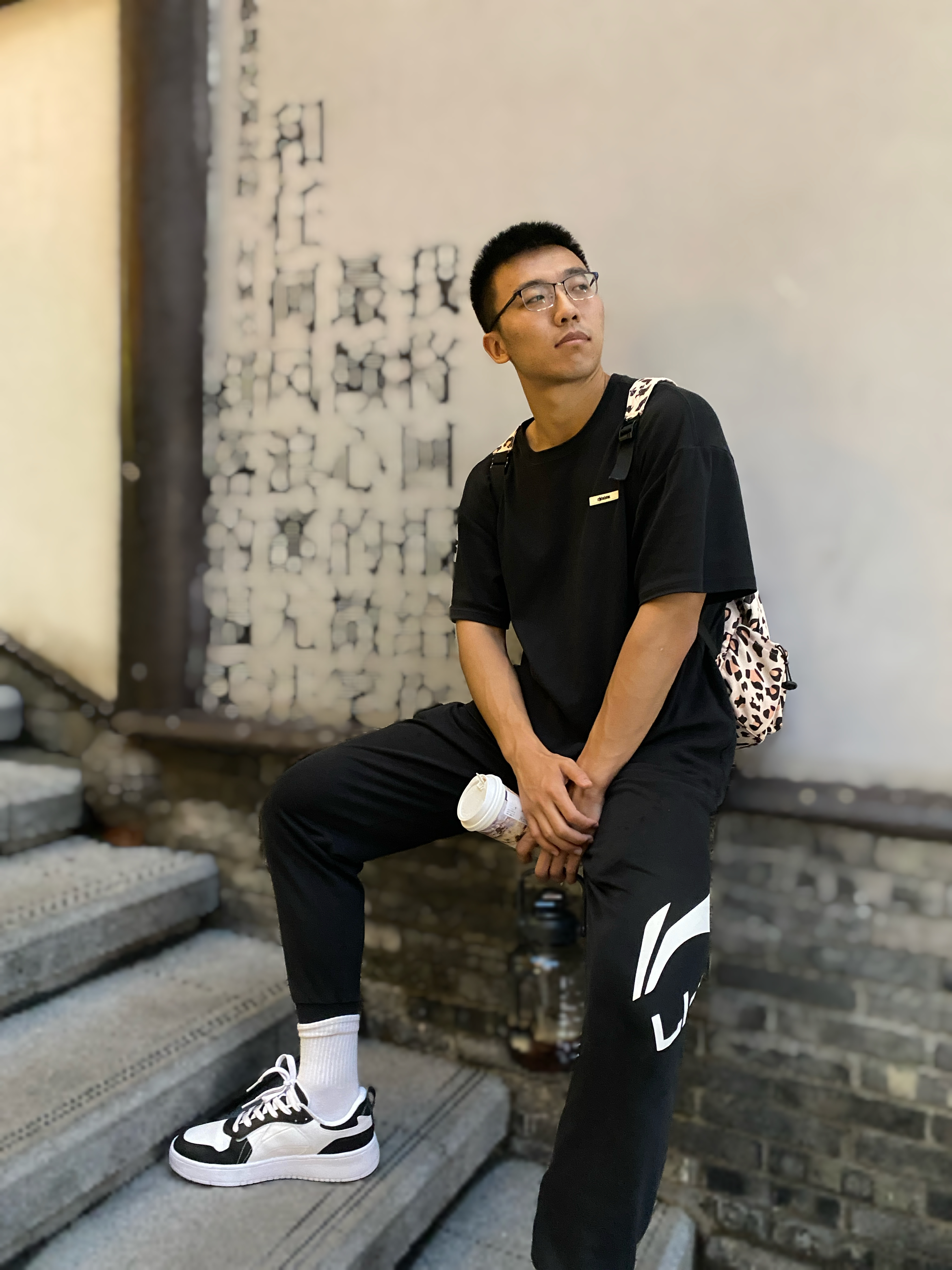}}]{Kai Li} 
	received the M.S. degree in information and communication engineering at the University of Electronic Science and Technology of China, in 2025. His research interests include distributed network systems, wireless caching, and reinforcement learning.
\end{IEEEbiography}

\vspace{-3 em}
\begin{IEEEbiography}[{\includegraphics[width=1in,height=1.25in,clip,keepaspectratio]{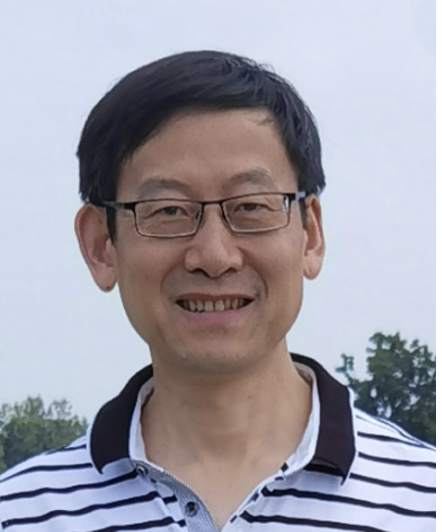}}]{Kun Yang}
	received his PhD from the Department of Electronic \& Electrical Engineering of University College London (UCL), UK. He is currently a Chair Professor of Nanjing University and also an affiliated professor at University of Essex and UESTC. His main research interests include wireless networks and communications, communication-computing cooperation, and new AI (artificial intelligence) for wireless. He has published 500+ papers and filed 50 patents. He serves on the editorial boards of a number of IEEE journals (e.g., IEEE WCM, TVT, TNB). He is a Deputy Editor-in-Chief of IET Smart Cities Journal. He has been a Judge of GSMA GLOMO Award at World Mobile Congress – Barcelona since 2019. He was a Distinguished Lecturer of IEEE ComSoc (2020-2021), a Recipient of the 2024 IET Achievement Medals and the Recipient of 2024 IEEE CommSoft TC’s Technical Achievement Award. He is a Member of Academia Europaea (MAE), a Fellow of IEEE, a Fellow of IET and a Distinguished Member of ACM.
\end{IEEEbiography}

\appendices

\section{Proof of Theorem~\ref{thm: closed form Je}}\label{App: closed form Je}
From the definition of $Y_{nk}(l)$,  we know that it follows a geometric distribution with parameter $r_k\zeta_{nk}$. The first and second moments of $Y_{nk}(l)$ are $\mathbb{E}[Y_{nk}(l)]=\frac{1}{r_k\zeta_{nk}}$ and $\mathbb{E}[Y_{nk}^2(l)]=\frac{2-r_k\zeta_{nk} }{(r_k\zeta_{nk})^2}$, respectively \cite{stochasticprocess}.

By the definition of  $\Gamma_{nk}(l)$, the sequence $\{\Gamma_{nk}(l)\}_l$ is i.i.d over $l$, since the sequence $\{Y_{nk}(l)\}_l$ is i.i.d over $l$. Let $Q(T)$ denote the number of commands sent by eRRH $n$ to CP $k$ in the interval  $[0, T]$. The time-average AoI of CP $k$ at eRRH $n$ can then be expressed as:
\begin{align}
	&\lim_{T\to\infty}\frac{1}{T}\sum_{t=1}^{T}g_{nk}(t)=\lim_{T\to\infty}\frac{1}{T}\sum_{l=1}^{Q(T)}\Gamma_{nk}(l)\nonumber\\
	=&\lim_{T\to\infty}\frac{Q(T)}{T}\cdot\frac{1}{Q(T)}\sum_{l=1}^{Q(T)}\Gamma_{nk}(l)=\frac{\mathbb{E}[\Gamma_{nk}(l)]}{\mathbb{E}[Y_{nk}(l)]},\label{eq: average nk}
\end{align}
where the last equality follows from the Renewal Reward Theorem \cite{stochasticprocess}.

By substituting $\mathbb{E}[Y_{nk}(l)]=\frac{1}{r_k\zeta_{nk}}$ and $\mathbb{E}[Y_{nk}^2(l)]=\frac{2-r_k\zeta_{nk} }{(r_k\zeta_{nk})^2}$ into \eqref{eq: average nk}, and then applying the result to \eqref{eq: eRRHs average AoI}, we obtain:
\begin{align*}
L_e =& \frac{1}{NK}\sum_{n\in[N]}\sum_{k\in[K]}\big(\frac{\mathbb{E}[Y_{nk}^2(l)]}{2\mathbb{E}[Y_{nk}(l)]}+\frac{1}{2}\big)\\
=&\frac{1}{NK}\sum_{n\in[N]}\sum_{k\in[K]}\frac{1}{r_k\zeta_{nk}}.
\end{align*}
By setting $N=K=2$, we recover the desired results.

\section{Proof of Theorem~\ref{thm: upper bounds of Jc}}\label{App: upper bounds of Jc}

We define two {\it independent}  stochastic processes, $\{u_{mk}(t)\}_t$ and $\{v_{mk}(t)\}_t$. The process $\{u_{mk}(t)\}_t$ has the recursion 
\begin{align}\label{eq: upper bound 1 O2S}
	u_{mk}(t+1) = \left\{
	\begin{aligned}
		&1&& b_k r_k\\
		&u_{mk}(t)+1&&1-b_k r_k;
	\end{aligned}
	\right.
\end{align}
and $\{v_{mk}(t)\}_t$ has the recursion
\begin{align}\label{eq: upper bound 2 O2S}
	v_{mk}(t+1) = \left\{
	\begin{aligned}
		&1&&b_kr_k\\
		&g_{1k}(t)+1&& b_k^{(1)}(1-r_k)\\
		&g_{2k}(t)+1&&b_k^{(2)}(1-r_k)\\
		&u_{mk}(t)+1&&1-b_k.
	\end{aligned}
	\right.
\end{align}
By convention, we set $u_{mk}(0)=v_{mk}(0)=1$. 
\begin{lemma}\label{lem: upper bounds O2S}
	In every time slot $t$, 
	\begin{align}
		&\mathbb{E}[h_{mk}(t)]\leq \mathbb{E}[u_{mk}(t)]\label{eq: upper bounds O2S-1},\\
		&\mathbb{E}[h_{mk}(t)]\leq \mathbb{E}[v_{mk}(t)]\label{eq: upper bounds O2S-2}.
	\end{align}
\end{lemma}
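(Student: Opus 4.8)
The plan is to prove both inequalities pathwise via a single coupling and only take expectations at the very end, rather than manipulating the expectations directly. First I would realize $h_{mk}(t)$, $u_{mk}(t)$ and $v_{mk}(t)$ on one probability space driven by the \emph{same} underlying randomness, namely the request/command decisions $\{\beta_{mk}^{(n)}(t),\gamma_{nk}(t)\}$ and the eRRH ages $g_{1k}(t),g_{2k}(t)$. In each slot $t$ this randomness partitions the sample space into the four disjoint events appearing in \eqref{eq: recursion of h-1}: the event $A$ that a fresh packet is delivered (probability $b_kr_k$), the events $B_1,B_2$ that a cached packet arrives from eRRH $1$ or $2$ (probabilities $b_k^{(1)}(1-r_k)$ and $b_k^{(2)}(1-r_k)$), and the event $C$ of no request (probability $1-b_k$). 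I would couple $u_{mk}$ so that it resets exactly on $A$ and increments otherwise, which gives $u_{mk}$ the correct marginal since $A$ is i.i.d.\ Bernoulli$(b_kr_k)$ across slots; and I would couple $v_{mk}$ so that the same four events drive \eqref{eq: upper bound 2 O2S}, with the no-request branch of $v_{mk}$ feeding from $u_{mk}$. This ties $v$ to $u$ through $A$.

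Next I would prove $h_{mk}(t)\le u_{mk}(t)$ for every sample path by induction on $t$. The base case holds since all three chains start at $1$. For the step, on $A$ both chains reset to $1$; on $B_1,B_2,C$ the chain $u_{mk}$ moves to $u_{mk}(t)+1$, while $h_{mk}(t+1)$ equals either $\min\{h_{mk}(t),g_{nk}(t)\}+1$ or $h_{mk}(t)+1$, each of which is at most $h_{mk}(t)+1\le u_{mk}(t)+1$ by the induction hypothesis together with $\min\{h_{mk}(t),g_{nk}(t)\}\le h_{mk}(t)$. This closes the induction.

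With $h_{mk}(t)\le u_{mk}(t)$ in hand for all $t$, the bound $h_{mk}(t)\le v_{mk}(t)$ follows by a one-step comparison at each slot, needing no separate induction on $v$. On $A$ both chains equal $1$; on $B_1$ (resp.\ $B_2$) I would use $h_{mk}(t+1)=\min\{h_{mk}(t),g_{1k}(t)\}+1\le g_{1k}(t)+1=v_{mk}(t+1)$ (resp.\ with $g_{2k}$), invoking $\min\{h_{mk}(t),g_{nk}(t)\}\le g_{nk}(t)$; and on $C$ I would use $h_{mk}(t+1)=h_{mk}(t)+1\le u_{mk}(t)+1=v_{mk}(t+1)$ from the already-established $u$-bound. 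Taking expectations of the two pathwise inequalities then yields \eqref{eq: upper bounds O2S-1} and \eqref{eq: upper bounds O2S-2}.

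The two genuinely load-bearing facts are the elementary inequalities $\min\{h_{mk}(t),g_{nk}(t)\}\le h_{mk}(t)$ and $\min\{h_{mk}(t),g_{nk}(t)\}\le g_{nk}(t)$: dropping the $\min$ in the two opposite directions is precisely what produces the two complementary upper bounds. The hard part is therefore not any calculation but the coupling bookkeeping — verifying that the same four branch-events act simultaneously on all three chains, that the coupled $u$ and $v$ retain the marginal laws specified by \eqref{eq: upper bound 1 O2S} and \eqref{eq: upper bound 2 O2S} (which uses that the command indicator $\gamma_{nk}(t)$ is independent of the age $g_{nk}(t)$ in the oblivious case), and that $v$ correctly inherits $u$ on the no-request branch — so that the domination is a valid sample-path statement rather than a mere comparison of marginals.
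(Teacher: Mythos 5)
Your proof is correct, and it reaches the two bounds by the same two load-bearing observations as the paper (dropping the $\min$ toward $h_{mk}(t)$ for the $u$-bound and toward $g_{nk}(t)$ for the $v$-bound), but the route is genuinely different. The paper works entirely at the level of expectations: it defines $u_{mk}$ and $v_{mk}$ exactly as you do, expands $\mathbb{E}[h_{mk}(l+1)]$ as a weighted sum over the four branch events, applies $\mathbb{E}[\min\{h,g\}]\le\mathbb{E}[h]$ (resp.\ $\le\mathbb{E}[g]$), and closes an induction on $\mathbb{E}[h_{mk}(t)]\le\mathbb{E}[u_{mk}(t)]$; as in your argument, the $v$-bound's inductive step only consumes the $u$-bound at the previous slot. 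You instead build a single coupling in which the same branch events drive all three chains and prove the dominations pathwise, taking expectations only at the end. What your version buys is a strictly stronger conclusion (almost-sure dominance under the coupling, hence dominance of any increasing functional, not just the mean) and an explicit accounting of the independence facts — that the slot-$t$ decisions $\beta_{mk}^{(n)}(t),\gamma_{nk}(t)$ are independent of $h_{mk}(t)$, $g_{nk}(t)$, $u_{mk}(t)$ — which the paper's expansion \eqref{eq: proof hmk} uses implicitly when it factors the branch probabilities out of the conditional expectations. What the paper's version buys is brevity: it never has to verify that the coupled $u$ and $v$ retain the marginal laws of \eqref{eq: upper bound 1 O2S} and \eqref{eq: upper bound 2 O2S}, since it only ever manipulates those marginals. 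Both arguments are sound for the oblivious setting, where the branch events are i.i.d.\ across slots.
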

\begin{proof}
We prove \eqref{eq: upper bounds O2S-1} and \eqref{eq: upper bounds O2S-2} by mathematical induction. When $t=0$, $\mathbb{E}[h_{mk}(0)]=\mathbb{E}[u_{mk}(0)]=\mathbb{E}[v_{mk}(0)]=1$. We assume that $\mathbb{E}[h_{mk}(t)]\leq \mathbb{E}[u_{mk}(t)]$ and $\mathbb{E}[h_{mk}(t)]\leq \mathbb{E}[v_{mk}(t)]$ for $t\leq l$. 

Now, we consider $t=l+1$. 
From \eqref{eq: recursion of h-1}, $\mathbb{E}[h_{mk}(l+1)]$ can be written as follows:
\begin{align}
	&\mathbb{E}[h_{mk}(l+1)]=b_kr_k +\big(\mathbb{E}[h_{mk}(l)]+1\big)(1-b_k)\nonumber\\ &+\big(\mathbb{E}[\min\{h_{mk}(l), g_{1k}(l)\}]+1\big)\cdot b_k^{(1)}(1-r_k)\nonumber\\
	&+\big(\mathbb{E}[\min\{h_{mk}(l), g_{2k}(l)\}]+1\big)\cdot b_k^{(2)}(1-r_k)\label{eq: proof hmk}.
\end{align}
For $n\in\{1, 2\}$,
\begin{align}
&\mathbb{E}[\min\{h_{mk}(l), g_{nk}(l)\}]\leq\mathbb{E}[h_{mk}(l)],\label{eq: leq gnk}\\
&\mathbb{E}[\min\{h_{mk}(l), g_{nk}(l)\}]\leq\mathbb{E}[g_{nk}(l)]\label{eq: leq hmk}.
\end{align}
Substituting \eqref{eq: leq gnk} into \eqref{eq: proof hmk}, we have
\begin{align*}
\mathbb{E}[h_{mk}(l+1)]&\leq b_kr_k +\big(\mathbb{E}[h_{mk}(l)]+1\big)(1-b_k)\nonumber\\ &+\big(\mathbb{E}[h_{mk}(l)]+1\big) b_k^{(1)}(1-r_k)\nonumber\\
	&+\big(\mathbb{E}[h_{mk}(l)]+1\big) b_k^{(2)}(1-r_k)\\
	&=b_kr_k + \big(\mathbb{E}[h_{mk}(l)]+1\big)(1-b_kr_k).
\end{align*}
By assumption, $\mathbb{E}[h_{mk}(l)]\leq \mathbb{E}[u_{mk}(l)]$. Then,
\begin{align*}
\mathbb{E}[h_{mk}(l+1)]\leq&b_kr_k + \big(\mathbb{E}[u_{mk}(l)]+1\big)(1-b_kr_k)\\
	=&\mathbb{E}[u_{mk}(l+1)].
\end{align*}
The final equality follows from \eqref{eq: upper bound 1 O2S}.
Substituting \eqref{eq: leq hmk} into \eqref{eq: proof hmk}, we have
\begin{align*}
\mathbb{E}[h_{mk}(l+1)]&\leq\sum_{n\in[2]}\big(\mathbb{E}[g_{2k}(l)]+1\big)b_k^{(n)}(1-r_k)\\ &+\big(\mathbb{E}[h_{mk}(l)]+1\big)(1-b_k)+b_kr_k\\
&\leq\sum_{n\in[2]}\big(\mathbb{E}[g_{2k}(l)]+1\big)b_k^{(n)}(1-r_k)\\ &+\big(\mathbb{E}[u_{mk}(l)]+1\big)(1-b_k)=\mathbb{E}[v_{mk}(l+1)].
\end{align*}
The last equality holds due to \eqref{eq: upper bound 2 O2S}. 
\end{proof}
\noindent Substituting \eqref{eq: upper bounds O2S-1} into \eqref{eq: CUs average AoI}, we have
\begin{align}\label{eq: proof upper bound 1 O2S-1}
J_c\leq\lim_{T\to\infty}\frac{1}{T}\sum_{t=1}^{T}\frac{1}{4}\sum_{m\in[2]}\sum_{k\in[2]}\mathbb{E}[u_{mk}(t)].
\end{align}
From \eqref{eq: upper bound 1 O2S}, $\{u_{mk}(t)\}_t$ is stationary and has a geometric distribution with parameter $b_kr_k$ over $t$, then
\begin{align}\label{eq: proof upper bound 1 O2S-2}
\lim_{T\to\infty}\frac{1}{T}\sum_{t=1}^{T}\mathbb{E}[u_{mk}(t)]=\lim_{t\to\infty}\mathbb{E}[u_{mk}(t)]=\frac{1}{b_kr_k}.
\end{align}
Substituting \eqref{eq: proof upper bound 1 O2S-2} into \eqref{eq: proof upper bound 1 O2S-1}, we obtain \eqref{eq: Jc O2S upper bound 1}.
Substituting \eqref{eq: upper bounds O2S-2} into \eqref{eq: CUs average AoI}, we have
\begin{align}\label{eq: proof upper bound 2 O2S-1}
J_c\leq\lim_{T\to\infty}\frac{1}{T}\sum_{t=1}^{T}\frac{1}{4}\sum_{m\in[2]}\sum_{k\in[2]}\mathbb{E}[v_{mk}(t)].
\end{align}
From \eqref{eq: upper bound 2 O2S}, $\{v_{mk}(t)\}_t$ is stationary. Taking the expectation on both sides of \eqref{eq: upper bound 2 O2S} and letting $t\to\infty$, we obtain:
\begin{align}\label{eq: proof upper bound 2 O2S-2}
\lim_{t\to\infty}\mathbb{E}[v_{mk}(t)] =& \sum_{k\in[2]}b_k^{(1)}(1-r_k)\lim_{t\to\infty}\mathbb{E}[g_{nk}(t)] + 1\nonumber \\
+& (1-b_k)\lim_{t\to\infty}\mathbb{E}[u_{mk}(t)].
\end{align}
We know that $g_{nk}(t)$ follows a geometric distribution with parameter $r_k\zeta_{nk}$,
\begin{align}
\lim_{t\to\infty}\mathbb{E}[g_{nk}(t)]=\frac{1}{r_k\zeta_{nk}}\label{eq: average g1k}.
\end{align}
Substituting \eqref{eq: proof upper bound 1 O2S-2}, \eqref{eq: proof upper bound 2 O2S-2}, and \eqref{eq: average g1k} into \eqref{eq: proof upper bound 2 O2S-1},  we obtain \eqref{eq: Jc O2S upper bound 2}.

\section{Proof of Lemma~\ref{lem: optimal H h OS}}\label{App: lem: optimal H h OS}

In {\bf Step 1}, we demonstrate that the minimizer of $\mathbb{E}[h_{mk}(l)]$ is one of $(0, b_k)$, $(b_k, 0)$, or $(b_k/2, b_k/2)$, based on symmetry. In {\bf Step 2} and {\bf Step 3}, we apply mathematical induction to prove Lemma~\ref{lem: optimal H h OS}. Specifically, {\bf Step 2} covers the base cases, while {\bf Step 3} completes the inductive proof.

Note that $b_k$ and $r_k$ are fixed. By substituting $b_k^{(2)} = b_k-b_k^{(1)}$ into $\mathbb{E}[h_{mk}(t)]$, $\mathbb{E}[H_{mk, 1}(t)]$ and  $\mathbb{E}[H_{mk, 2}(t)]$, we observe that they can be viewed as functions of $b_k^{(1)}$. 
For simplicity, we provide the following notations: 
\begin{align*}
&\mathbb{E}[h_{mk}(t)]\triangleq f_1(b_k^{(1)}; t)\\
&\mathbb{E}[H_{mk, 1}(t)]\triangleq f_2(b_k^{(1)}; t)\\
&\mathbb{E}[H_{mk, 2}(t)]\triangleq f_3(b_k^{(1)}; t).
\end{align*}

\noindent{\bf Step 1} We show that a minimizer can be obtained at $(0, b_k)$, $(b_k, 0)$, or $(b_k/2, b_k/2)$.

\begin{lemma}\label{lem: proof symmetric OS}
Fix $b_k$ and $r_k$, we have:
\begin{itemize}
	\item [] (i) $h_{mk}(l+1)$, $H_{mk, 1}(l+1)$ and  $H_{mk, 2}(l+1)$ are symmetric with respect to $(b_k^{(1)}, b_k^{(2)})$.
	\item [] (ii) 	$ f_1(b_k^{(1)}; l+1)$, $ f_2(b_k^{(1)}; l+1)$ and  $ f_3(b_k^{(1)}; l+1)$ have a single symmetric point $b_k^{(1)} = b_k/2$.
\end{itemize}

\end{lemma}
\begin{proof}
	We prove (i) using mathematical induction.  When $t=0$, $H_{mk,1}(0) = H_{mk,2}(0) = h_{mk}(0)=1$, so they are symmetric with respect to $(b_k^{(1)}, b_k^{(2)})$. We assume that $h_{mk}(t)$, $H_{mk, 1}(t)$ and  $H_{mk, 2}(t)$ are symmetric with respect to $(b_k^{(1)}, b_k^{(2)})$ for all $t\leq l$. Now, we consider $t=l+1$. 
	
	Since $h_{mk}(l)$ is symmetric with respect to $(b_k^{(1)}, b_k^{(2)})$, from \eqref{eq: upper bound H1}, $H_{mk, 1}(l+1)$ remains unchanged under the exchange of $b_k^{(1)}$ and $b_k^{(2)}$. Thus, $H_{mk, 1}(l+1)$ is symmetric with respect to $(b_k^{(1)}, b_k^{(2)})$. 
	
	Fix $b_k$ and $r_k$, the stochastic process $\{g_{nk}(t)\}_t$ depends only on $b_k^{(n)}$ for $n\in[2]$. Recall that $h_{mk}(l)$ is symmetric with  respect to $(b_k^{(1)}, b_k^{(2)})$. From \eqref{eq: upper bound H2}, we see that $H_{mk, 2}(l+1)$ remains unchanged if we exchange $b_k^{(1)}$ and $b_k^{(2)}$. Thus, $H_{mk, 2}(l+1)$ is symmetric with respect to $(b_k^{(1)}, b_k^{(2)})$. 
	
	Since both $H_{mk, 1}(l+1)$ and $H_{mk, 2}(l+1)$ are symmetric with respect to $(b_k^{(1)}, b_k^{(2)})$, then from \eqref{eq: hH1H2}, $h_{mk}(l+1) = \min\{H_{mk, 1}(l+1), H_{mk, 2}(l+1)\}$ is symmetric with respect to $(b_k^{(1)}, b_k^{(2)})$.  

We prove (ii) directly from (i). Since $h_{mk}(l+1)$, $H_{mk, 1}(l+1)$ and  $H_{mk, 2}(l+1)$ are symmetric with respect to $(b_k^{(1)}, b_k^{(2)})$. Note that $b_k^{(1)} + b_k^{(2)} = b_k$, and $b_k$ is fixed. Thus, 
$f_1(b^{(1)}_k; l+1)$, $f_2(b^{(1)}_k; l+1)$ and  $f_3(b^{(1)}_k; l+1)$ have a {\it single} symmetric point at $b_k^{(1)}=b_k/2$.
\end{proof}
Since $f_2(b^{(1)}_k; l+1)$ and  $f_3(b^{(1)}_k; l+1)$ are continuous with respect to $b^{(1)}_k$, and $0\leq b_k^{(1)}\leq b_k$. Therefore, a minimizer can be obtained at either the end points or the middle point, i.e., $(0, b_k)$, $(b_k, 0)$, or $(b_k/2, b_k/2)$.

 {\bf Step 2}. We verify initial cases, i.e., $t=1$ and $t=2$. 
 
 When $t=0$, $h_{mk}(0)=1$, which satisfies Lemma~\ref{lem: optimal H h OS}. When $t=1$, from \eqref{eq: recursion of h-1}, 
 \begin{align*}
 	h_{mk}(1)= \left\{
 	\begin{aligned}
 		&1&&b_kr_k\\
 		&2&&1-b_kr_k,
 	\end{aligned}
 	\right.
 \end{align*}
 $\mathbb{E}[h_{mk}(1)]=2-b_kr_k$ remains unchanged when we vary $b_k^{(1)}$.

 When $t=2$, we recurse $h_{mk}(1)$ one more round. Without loss of generality, we calculate $h_{11}(2)$.  From the definition of $h_{mk}(t)$ in  \eqref{eq: recursion of h-1}, we know that $h_{11}(t)$ depends on $g_{11}(t)$, $g_{21}(t)$, and $h_{21}(t)$. Denote $H\triangleq \big(h_{11}(1), g_{11}(1), h_{21}(1), g_{21}(1)\big)$, we have the following $8$ cases:
 \begin{align*}
 	&(1,1,1,1),\,\, (1, 1, 1, 2),\,\,(1,2,1,1), \,\, (1,1,2,2),\\
 	&(1, 2, 2, 1),\,\, (2,1,1,2),\,\, (2,2,1,1),\,\, (2, 2, 2, 2).
 \end{align*}
 First, we consider the subcase $\big(h_{11}(1), h_{21}(1)\big) = (1, 1)$. By calculation, we have
 \begin{align*}
 	q_1 \triangleq & \Pr\{H = (1,1,1,1)\} = 2r_1^2b_1^{(1)}b_1^{(2)}\\
 	q_2 \triangleq & \Pr\{H = (1,1,1,2)\} = r_1(b_1^{(1)})^2\\
 	q_3 \triangleq & \Pr\{H = (1,2,1,1)\} = r_1(b_1^{(2)})^2.
 \end{align*}
 Next, we consider $\big(h_{11}(1), h_{21}(1)\big) = (1, 2)$. By calculation, we have
 \begin{align*}
 	q_4 \triangleq & \Pr\{H = (1,1,2,2)\} = r_1b_1^{(1)}\big(1-b_1+b_1^{(2)}(1-r_1)\big)\\
 	q_5 \triangleq & \Pr\{H = (1,2,2,1)\} = r_1b_1^{(2)}\big(1-b_1+b_1^{(1)}(1-r_1)\big).
 \end{align*}
 Then, we consider $\big(h_{11}(1), h_{21}(1)\big) = (2, 1)$. By calculation, we have
 \begin{align*}
 	q_6 \triangleq & \Pr\{H = (2,1,1,2)\} = r_1b_1^{(1)}\big(1-b_1+b_1^{(2)}(1-r_1)\big)\\
 	q_7 \triangleq & \Pr\{H = (2,2,1,1)\} = r_1b_1^{(2)}\big(1-b_1+b_1^{(1)}(1-r_1)\big).
 \end{align*}
 Finally, we consier $\big(h_{11}(1), h_{21}(1)\big) = (2, 2)$. By calculation, we have
 \begin{align*}
 	q_8 \triangleq & \Pr\{H = (2,2,2,2)\} = (1-r_1)\big(b_k-2r_1b_1^{(1)}b_1^{(2)}\big)\\
 	+&(1-b_k)\big(1-r_1b_1^{(1)}-r_1b_1^{(2)}\big).
 \end{align*}
 Denote $\tilde{H}\triangleq \big(h_{11}(2), h_{21}(2)\big)$. Combining all cases above, when $h_{11}(2)=1$, we have
 \begin{align*}
 	\Pr\{\tilde{H} = (1,1)\} =& q_1+q_2+q_3 \\
 	\Pr\{\tilde{H} = (1,2)\} =&(q_1+q_2+q_3)(q_4+q_5) + (q_4+q_5)^2\\
 	+&(q_4+q_5)(1-r_1)r_1b_1^{(1)}b_1^{(2)} \\
 	\Pr\{\tilde{H} = (1,3)\} =& q_4\big(q_4+r_1b_1^{(2)}(1-b_1)\big)+ q_8(q_4+q_5)\\
 	+&q_5(q_5+r_1(1-b_1)b_1^{(1)}).
 \end{align*}
 When $h_{11}(2)=2$, we have
 \begin{align*}
 	\Pr\{\tilde{H} = (2,1)\} =& (q_1+q_2+q_3)(q_4+q_5) + (q_4+q_5)^2\\
 	\Pr\{\tilde{H} = (2,2)\} =& q_8(q_1+q_2+q_3)\\
 	+& 2q_4\big((1-r_1)(b_1^{(1)}-r_1b_1^{(1)}b_1^{(2)})\big)\\
 	+&2q_5\big((1-r_1)(b_1^{(2)}-r_1b_1^{(1)}b_1^{(2)})\big)\\
 	\Pr\{\tilde{H} = (2,3)\} =& q_4(1-r_1)(b_1^{(2)}-r_1b_1^{(1)}b_1^{(2)})\\
 	+& q_4(1-b_1)(1-r_1b_1^{(1)}-r_1b_1^{(2)})\\
 	+&q_5(1-r_1)(b_1^{(1)}-r_1b_1^{(1)}b_1^{(2)})\\
 	+& q_5(1-b_1)(1-r_1b_1^{(1)}-r_1b_1^{(2)}).
 \end{align*}
 When $h_{11}(2)=3$, we have
 \begin{align*}
 	\Pr\{\tilde{H} = (3,1)\} =& q_4\big(q_4+r_1b_1^{(2)}(1-b_1)\big)+ q_8(q_4+q_5)\\
 	+&q_5(q_5+r_1(1-b_1)b_1^{(1)})\\
 	\Pr\{\tilde{H} = (3,2)\} =& q_4(1-r_1)(b_1^{(2)}-r_1b_1^{(1)}b_1^{(2)})\\
 	+& q_4(1-b_1)(1-r_1b_1^{(1)}-r_1b_1^{(2)})\\
 	+&q_5(1-r_1)(b_1^{(1)}-r_1b_1^{(1)}b_1^{(2)})\\
 	+& q_5(1-b_1)(1-r_1b_1^{(1)}-r_1b_1^{(2)})\\
 	\Pr\{\tilde{H} = (3,3)\} =& q_8^2.
 \end{align*}
 Therefore, $\mathbb{E}[h_{11}(2)]$ can be calculated by
 \begin{align*}
 	\mathbb{E}[h_{11}(2)] =&\Pr\{\tilde{H} = (1,1)\}  + \Pr\{\tilde{H} = (1,2)\} \\
 	+& \Pr\{\tilde{H} = (1,3)\} +2\Pr\{\tilde{H} = (2,1)\}  \\
 	+& 2 \Pr\{\tilde{H} = (2,2)\}+2\Pr\{\tilde{H} = (2,3)\} \\
 	+& 3\Pr\{\tilde{H} = (3,1)\}+3\Pr\{\tilde{H} = (3,2)\}\\
 	+&3\Pr\{\tilde{H} = (3,3)\}\\
 	=&\big(b_1^{(1)}\big)^2(r_1^3b_1-4r_1^2b_1+2r_1^2+3r_1b_1-2r_1)\\
 	+&b_1^{(1)}(-r_1^3b_1^2+4r_1^2b_1^2-2r_1^2b_1-3r_1b_1^2+2r_1b_1)\\
 	+&(-r_1^2b_1^3+2r_1^2b_1^2+r_1b_1^3-r_1b_1^2-3r_1b_1+3).
 \end{align*}
 Taking the second derivative of $\mathbb{E}[h_{11}(2)]$ with respect to $b_1^{(1)}$, we have
 \begin{align*}
 	\frac{\partial^2 (\mathbb{E}[h_{11}(2)])}{\partial (b_1^{(1)})^2} = 6r_1b_1-4r_1-8r_1^2b_1+2r_1^3b_1+4r_1^2.
 \end{align*}
 Therefore,  
 \begin{itemize}
 	\item [] (i) $\mathbb{E}[h_{11}(2)]$ is convex with respect to $b_1^{(1)}$ when $b_1>\frac{2}{3-r_1}$, and it is minimized when $b_1^{(1)} = b_1/2$.
 	\item [] (ii) Conversely, $\mathbb{E}[h_{11}(2)]$ is concave with respect to $b_1^{(1)}$ when $b_1<\frac{2}{3-r_1}$, and it is minimized when $b_1^{(1)} = 0$ or $b_1$.
 	\item [] (iii) When $b_1^{(1)} = \frac{2}{3-r_1}$, $\mathbb{E}[h_{11}(2)]$ remains constant.
 \end{itemize}
The same proof process holds for $h_{mk}(2)$  with $(m, k) = (1, 2)$, $(2, 1)$, and $(2, 2)$, we can obtain similar results.

\noindent{\bf Step 3}. We prove Lemma~\ref{lem: optimal H h OS} by mathematical induction.  We begin by presenting a useful lemma before proceeding with the proof.

\begin{lemma}\label{pro: minimum symmetry}
Let $x\in[0, c]$. If $A(x) = a(x) + a(c-x)$ and $B(x) = b(x) + b(c-x)$, both $A(x)$ and $B(x)$ are continuous and have the global minimizer at $x=c/2$. Then
\begin{align*}
D(x) = a(x)b(x) + a(c-x)b(c-x)
\end{align*}
has the global minimizer at $x=c/2$.
\end{lemma}
\begin{proof}
From the definition of $A(x)$,  $B(x)$ and $D(x)$, all these functions are symmetric at the point $x=c/2$. Since $A(x)$,  $B(x)$ and $D(x)$ are continuous, then the minimizer of each function can be obtained at either $x=0$, $x=c$, or $x=c/2$. By assumption, $x=c/2$ is a minimizer of $A(x)$ and  $B(x)$, then
\begin{align*}
 &a(c) = A(c) > A(c/2) = 2a(c/2)\\
 &b(c) = B(c) > B(c/2) = 2b(c/2),
\end{align*}
which implies
\begin{align*}
a(c)\cdot b(c) > 4 a(c/2)\cdot b(c/2).
\end{align*}
Based on the inequalities above, we can compare $D(c)$ and $D(c/2)$:
\begin{align*}
&D(c/2) = 2 a(c/2)b(c/2) < 4 a(c/2)\cdot b(c/2) \\
& < a(c)\cdot b(c) =D(c).
\end{align*}
This implies $x=c/2$ is a minimizer of $D(x)$.
\end{proof}

From {\bf Step 2}, the initial case is verified. Next, we assume Lemma~\ref{lem: optimal H h OS} holds for all $t\leq l$. Next, we consider $t=l+1$.

\begin{lemma}\label{pro: decreasing threshold w}
 If  $f_i(b_k^{(1)}; l)$ with $i\in\{1,2,3\}$ has the global minimizer $(b_k^{(1)}, b_k^{(2)}) = (b_k/2, b_k/2)$, respectively, then   $f_i(b_k^{(1)}; l+1)$ with $i\in\{1,2,3\}$ has the global minimizer $(b_k^{(1)}, b_k^{(2)}) = (b_k/2, b_k/2)$, respectively.
\end{lemma}
\begin{proof}
\noindent{\bf Step (1)}. Taking the expectation on both sides of \eqref{eq: upper bound H1}, we obtain
\begin{align*}
f_2(b_k^{(1)}; l+1) =  (1-b_k r_k)f_1(b_k^{(1)}; l)+1.
\end{align*}
Note that $(1-b_k r_k)$ is constant with respect to $b_k^{(1)}$ since $b_k$ is fixed. By assumption, $f_1(b_k^{(1)}; l)$ has the global minimizer  $b_k^{(1)}=b_k/2$. Consequently, $f_2(b_k^{(1)}; l+1)$ has the global minimizer $b_k^{(1)}=b_k/2$.

\noindent{\bf Step (2)}. From \eqref{eq: upper bound H2}, $H_{mk, 2}(t+1)$ has an equivalent expression,
\begin{align}\label{eq: proof upper bound H2}
	H_{mk, 2}(t+1) = \left\{
	\begin{aligned}
		&g_{1k}(t+1)&& b_k^{(1)}\\
		&g_{2k}(t+1)&&b_k^{(2)}\\
		&h_{mk}(t)+1&&1-b_k.
	\end{aligned}
	\right.
\end{align} 
Taking the expectation on both sides of \eqref{eq: proof upper bound H2}, we obtain
\begin{align*}
	&f_3(b_k^{(1)}; l+1) = 
	\mathbb{E}[g_{1k}(l+1)]b_k^{(1)}\\
	&+ \mathbb{E}[g_{2k}(l+1)]b_k^{(2)}+f_1(b_k^{(1)}; l)(1-b_k)+1.
\end{align*}
By assumption, $f_1(b_k^{(1)};l)(1-b_k)$ has the global minimizer  $b_k^{(1)}=b_k/2$.  Next, we need to prove 
\begin{fact}\label{fact: sum g}
$\mathbb{E}[g_{1k}(l+1)]b_k^{(1)}+
\mathbb{E}[g_{2k}(l+1)]b_k^{(2)}$ has the global minimizer at $b_k^{(1)} = b_k/2$.
\end{fact} 
\begin{proof}
We prove this statement by mathematic induction.  When $l=0$, it is straightforward to verity Fact~\ref{fact: sum g} is true. We assume Fact~\ref{fact: sum g} is true for all $l\leq j$, now we consider $l = j+1$. By calculation, we have:
\begin{align*}
&\mathbb{E}[g_{1k}(j+1)]b_k^{(1)}+
\mathbb{E}[g_{2k}(j+1)]b_k^{(2)} \\
&= b_k + \sum_{n=1}^{2}(1-r_k\zeta_{nk})b_k^{(n)}\mathbb{E}[g_{nk}(j)].
\end{align*}
It is straightforward to check $\sum_{n=1}^{2}(1-r_k\zeta_{nk})b_k^{(n)}$ has the global minimizer at $b_k^{(1)} = b_k/2$. By assumption,  $\sum_{n=1}^{2}\mathbb{E}[g_{nk}(j)]$ has the global minimizer $b_k^{(1)} = b_k/2$. Utilizing Lemma~\ref{pro: minimum symmetry}, we complete the proof.
\end{proof}
From Fact~\ref{fact: sum g},  $\mathbb{E}[g_{1k}(l+1)]b_k^{(1)}+
\mathbb{E}[g_{2k}(l+1)]b_k^{(2)}$ has the global minimizer at $b_k^{(1)} = b_k/2$. Consequently, $f_3(b_k^{(1)}; l+1)$ has the global minimizer at  $b_k^{(1)}=b_k/2$.

\noindent{\bf Step (3)}.
We prove that $f_1(b_k^{(1)};l+1)$ has the global minimizer at $b_k^{(1)}=b_k/2$ by contradiction. Assume that $b_k^{(1)}=b_k/2$ is not the global minimizer, but $b_k^{(1)}=b'$ is the global minimizer. 
Then, from \eqref{eq: hH1H2},
\begin{align*}
	&\mathbb{E}[\min\{H_{mk, 1}(l+1), H_{mk, 2}(l+1)\}]_{b_k^{(1)}=b'}\\
	&<\mathbb{E}[\min\{H_{mk, 1}(l+1), H_{mk, 2}(l+1)\}]_{b_k^{(1)}=b_k/2}
\end{align*}
Recall that $\mathbb{E}[h_{mk}(t)]$, $\mathbb{E}[H_{mk, 1}(t)]$ and  $\mathbb{E}[H_{mk, 2}(t)]$ have the same single symmetric point $b_k^{(1)}=b_k/2$. Thus, at least one of the following inequalities holds,
\begin{align*}
	&\mathbb{E}[H_{mk, 1}(l+1)]_{b_k^{(1)}=b'}<\mathbb{E}[H_{mk, 1}(l+1)]_{b_k^{(1)}=b_k/2}\\
	&\mathbb{E}[H_{mk, 2}(l+1)]_{b_k^{(1)}=b'}<\mathbb{E}[H_{mk, 2}(l+1)]_{b_k^{(1)}=b_k/2},
\end{align*}
which contradicts the conclusion of either {\bf Step (1)} or {\bf Step (2)}.
Therefore, $b_k^{(1)}=b_k/2$ is the global minimizer of $f_1(b_k^{(1)}; l+1)$. 

From {\bf Step (1)}, {\bf Step (2)}, and {\bf Step (3)}, we complete the proof of Lemma~\ref{pro: decreasing threshold w}. 
\end{proof}
\noindent Next, we define 
\begin{align*}
w(r_k; l+1) = & \sup\Big\{b_k: \mathbb{E}[h_{mk}(l+1)]\\
&\text{ can not be minimized at }b_k^{(1)} = b_k/2\Big\}.
\end{align*}
From the definition of $w(r_k; l+1)$, it represents the supremum of $b_k$ such that, $\mathbb{E}[h_{mk}(l+1)]$ can only be minimized at the end points $b_k^{(1)} = 0$ or $b_k$. In other words, for all $b_k > w(r_k; l+1)$, $b_k^{(1)} = b_k/2$ is the global minimizer of $\mathbb{E}[h_{mk}(l+1)]$. Since $\mathbb{E}[h_{mk}(l+1)]$ is a continuous function of $b_k$,  then if $b_k = w(r_k; l+1)$, either $b_k^{(1)} = 0$, $b_k$ or $b_k/2$ is a minimum point. By Proposition~\ref{pro: decreasing threshold w}, for each $b_k > w(r_k; l)$, since $b_k^{(1)} = b_k/2$ is the global minimizer of $\mathbb{E}[h_{mk}(l)]$, then it is the global minimizer of $\mathbb{E}[h_{mk}(l+1)]$.  Thus, 
\begin{align*}
w(r_k; l+1) \leq w(r_k; l),
\end{align*}
thereby implying that the sequence ${w(r_k; j)}_{j=1}^{l+1}$ is non-increasing.

Finally, we will show the following lemma.
\begin{lemma}\label{pro: b<w is not a minimum point}
For $\forall b_k < w(r_k; l+1)$, $b_k^{(1)}=b_k/2$ can not be the only global minimizer of $\mathbb{E}[h_{mk}(l+1)]$. 
\end{lemma}
\begin{proof}
To complete the proof, we proceed by contradiction. Assume there exists a $b_k = c_0<w(r_k; l+1)$ such that, $b_k^{(1)}=c_0/2$ is the global minimizer of $\mathbb{E}[h_{mk}(l+1)] = f_1(b_k^{(1)}; l+1)$. As discussed in {\bf Step (3)} in the proof of Lemma~\ref{pro: decreasing threshold w}, $b_k^{(1)}=c_0/2$ is the global minimizer of $f_2(b_k^{(1)}; l+1)$ and $f_2(b_k^{(1)}; l+1)$. Recall that
\begin{align*}
f_2(b_k^{(1)}; l+1) =  (1-b_k r_k)f_1(b_k^{(1)}; l)+1,
\end{align*}
which implies $f_1(b_k^{(1)}; l)$ has the global minimizer $b_k^{(1)}=c_0/2$. However, since $c_0<w(r_k; l+1)\leq w(r_k; l)$, by assumption, $b_k^{(1)} = c_0$ is the global minimizer of $f_1(b_k^{(1)}; l)$.  Consequently, we obtain a contradiction, and the proof is complete.
\end{proof}

From Lemma~\ref{pro: decreasing threshold w} and Lemma~\ref{pro: b<w is not a minimum point}, we prove that (i)  $(b_k^{(1)}, b_k^{(2)}) = (b_k, 0)$ is the global minimizer when $b_k<w(r_k; l+1)$, (ii) $(b_k^{(1)}, b_k^{(2)}) = (b_k/2, b_k/2)$ is the global minimizer when $b_k > w(r_k; l+1)$, (iii) $(b_k^{(1)}, b_k^{(2)}) = (b_k/2, b_k/2)$, $(b_k, 0)$, or $(b_k, 0)$ is the global minimizer when $b_k = w(r_k; l+1)$, and (iii) $w(r_{k+1}; l)\leq w(r_k; l)$.  This completes the mathematical induction for Lemma~\ref{lem: optimal H h OS}.

Finally, under stationary randomized policies, both $h_{mk}(l)$ and $g_{nk}(l)$ converge to asymptotically stationary distributions \cite{Modiano-1}. As such, the initial points $h_{mk}(0)$ and $g_{nk}(0)$ have a negligible impact on $h_{mk}(l)$ and $g_{nk}(l)$ when $l$ becomes large, respectively. Thus, $h_{mk}(l+1) = h_{mk}(l)$ almost surely when $l\to\infty$, which implies $\lim_{l\to\infty}w(r_k; l)$ converges.

\section{Proof of Lemma~\ref{lem: optimal H1H2h OS}}\label{App: optimal H1H2h OS} 
We prove the result using mathematical induction. To this end, we first present a useful lemma.
\begin{lemma}\label{pro: minimum symmetry 2d}
	Let $x\in[0, c]$ and $y\in[0, d]$. If $A(x, y) = a(x, y) + a(c-x, d-y)$ and $B(x, y) = b(x, y) + b(c-x, d-y)$, both $A(x, y)$ and $B(x, y)$ are continuous and attain their global minimum at $(x, y)=(c/2, d/2)$. Then
	\begin{align*}
		D(x) = a(x, y)b(x, y) + a(c-x, d-y)b(c-x, d-y)
	\end{align*}
	has the global minimizer at $(x, y)=(c/2, d/2)$.
\end{lemma}
\begin{proof}
This proposition extends Lemma~\ref{pro: minimum symmetry} to functions in two dimensions. The result follows directly by extending the argument used in the proof of Lemma~\ref{pro: minimum symmetry} to two dimensions.
\end{proof}

\noindent{\bf Step 1}. We verify initial cases. When $t=0$, $\sum_{k\in[2]}h_{mk}(0)=2$, which is consistent with the statement of Lemma~\ref{lem: optimal H1H2h OS}. When $t=1$, from \eqref{eq: recursion of h-1}, 
\begin{align*}
h_{mk}(1)= \left\{
\begin{aligned}
&1&& b_kr_k\\
&2&&1-b_kr_k, 
\end{aligned}
\right.
\end{align*}
which implies
\begin{align*}
\sum_{k\in[2]}\mathbb{E}[h_{mk}(1)] = 4 - \sum_{k\in[2]}b_kr_k.
\end{align*}
Recall that $b_2=b-b_1$ and $r_2=r-r_1$, then it is straightforward to verify that $\sum_{k\in[2]}\mathbb{E}[h_{mk}(1)]$ attains its global minimum at $(b_1, r_1)=(b/2, r/2)$.

Assume that Lemma~\ref{lem: optimal H1H2h OS} holds for all $t\leq l$, we now proceed to examine $t=l+1$. 

\noindent{\bf Step 2}. In this step, we prove that Lemma~\ref{lem: optimal H1H2h OS} holds under the condition that, for both $k\in[2]$, an optimal choice of $\big((b_k^{(1)})^*, (b_k^{(2)})^*\big)$ takes the form $\big((b_k^{(1)})^*, (b_k^{(2)})^*\big) = (b_k, 0)$. 

When $\big((b_k^{(1)})^*, (b_k^{(2)})^*\big) = (b_k, 0)$, CU $m$ only sends request to eRRH $1$ (with rate $b_k$). From \eqref{eq: recursion of h-1}, $h_{mk}(l)\geq g_{1k}(l)$ for all $l$. The recursion of $h_{mk}(l)$ is therefore reduced to 
\begin{equation}\label{eq: proof hmkg}
\begin{aligned}
h_{mk}(l+1)= \left\{
\begin{aligned}
&g_{1k}(l+1)&& b_{k}\\
&h_{mk}(l)+1&&1-b_k,
\end{aligned}
\right.
\end{aligned}
\end{equation}
which implies 
\begin{align*}
\sum_{k\in[2]}\mathbb{E}[h_{mk}(l+1)] &= \sum_{k\in[2]}b_k\mathbb{E}[g_{1k}(l+1)] \\
& + \sum_{k\in[2]}(1-b_k)\mathbb{E}[h_{mk}(l)] + 2-b.
\end{align*}
Based on a very similar proof of Fact~\ref{fact: sum g} in Appendix~\ref{App: lem: optimal H h OS}, we can show that
\begin{align*}
\sum_{k\in[2]}b_k\mathbb{E}[g_{1k}(l+1)]
\end{align*}
attains its global minimum at $(b_1, r_1) = (b/2, r/2)$.

By assumption, $\sum_{k\in[2]}\mathbb{E}[h_{mk}(l)]$ has the global minimizer at $(b_1, r_1) = (b/2, r/2)$. Since $\sum_{k\in[2]}(1-b_k) = 2-b$ is a constant. From Lemma~\ref{pro: minimum symmetry 2d}, $\sum_{k\in[2]}(1-b_k)\mathbb{E}[h_{mk}(l)]$ has the global minimizer at  $(b_1, r_1) = (b/2, r/2)$. Therefore, $\sum_{k\in[2]}\mathbb{E}[h_{mk}(l+1)]$ has the global minimizer at $(b_1, r_1)=(b/2, r/2)$.

\noindent{\bf Step 3}. In this step, we prove that $\sum_{k\in[2]}\mathbb{E}[h_{mk}(l+1)]$ has the global minimizer at $(b_1, r_1) = (b/2, r/2)$ in the case where, for both $k\in[2]$, an optimal choice of $\big((b_k^{(1)})^*, (b_k^{(2)})^*\big)$ has the form $\big((b_k^{(1)})^*, (b_k^{(2)})^*\big) = (b_k/2, b_k/2)$.  For clarity, we define the following functions:
\begin{align*}
\sum_{k\in[2]}\mathbb{E}[h_{mk}(l+1)]\triangleq&f_1(b_1, r_1; l+1)\\
\sum_{k\in[2]}\mathbb{E}[H_{mk, 1}(l+1)]\triangleq&f_2(b_1, r_1; l+1)\\
\sum_{k\in[2]}\mathbb{E}[H_{mk, 2}(l+1)]\triangleq&f_3(b_1, r_1; l+1),
\end{align*}

\begin{lemma}\label{lem: E sum H1 H2 symmetric}
$f_i(b_1, r_1; l+1)$ with $i\in\{1,2,3\}$ are symmetric with  respect to $(b_1, r_1) = (b/2, r/2)$.
\end{lemma}
\begin{proof}
The proof is very similar to the proof of Lemma~\ref{lem: proof symmetric OS}.
\end{proof}

\begin{lemma}\label{lem: proof min br2}
$f_1(b_1, r_1; l+1)$, $f_2(b_1, r_1; l+1)$ and  $f_3(b_1, r_1; l+1)$ have the global minimizer at $(b_1, r_1)=(b/2, r/2)$.
\end{lemma}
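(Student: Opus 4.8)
The plan is to mirror the structure already established in the proof of Lemma~\ref{lem: proof convex bk}, extending the single-variable symmetry-plus-convexity argument to the two-variable setting $(b_1, r_1)$. I assume Lemma~\ref{lem: E sum H1 H2 symmetric} (so $f_1, f_2, f_3$ are symmetric about the single point $(b/2, r/2)$) and the inductive hypothesis that Lemma~\ref{lem: proof min br2} holds for all $t \le l$. I proceed by induction on $t$, handling $f_2$ and $f_3$ separately and then recovering $f_1$ by the same contradiction argument used for the single-variable case.

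For $f_2$, the plan is to take expectations on both sides of the recursion \eqref{eq: upper bound H1} summed over $k\in[2]$. Since $H_{mk,1}(t+1)$ resets to $1$ with probability $b_k r_k$ and otherwise increments $h_{mk}(t)$, I obtain $f_2(b_1, r_1; l+1) = \sum_{k\in[2]}(1 - b_k r_k)\mathbb{E}[h_{mk}(l)] + 2$. The factor $(1-b_k r_k)$ is no longer constant in two variables, so unlike {\bf Step 3-(1)} of Lemma~\ref{lem: proof convex bk} I cannot simply inherit convexity from $f_1(\cdot; l)$. Instead I would argue that the product of the (asymptotically stationary) $\mathbb{E}[h_{mk}(l)]$ with $(1-b_k r_k)$, summed over the two symmetric terms with $b_2 = b - b_1$, $r_2 = r - r_1$, retains a single critical point at $(b/2, r/2)$; combined with the single-symmetric-point property from Lemma~\ref{lem: E sum H1 H2 symmetric}, this forces that point to be the global minimum.

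For $f_3$, the plan is to use the equivalent form of $H_{mk,2}$ analogous to \eqref{eq: proof upper bound H2}, take expectations, and invoke asymptotic stationarity so that $\mathbb{E}[g_{nk}(l)] = \frac{1}{r_k(2b_k^{(n)} - (b_k^{(n)})^2)}$. Under whichever optimal intra-eRRH allocation applies (either $(b_k,0)$ or $(b_k/2, b_k/2)$ from Lemma~\ref{lem: optimal H h OS}), substituting $b_2 = b - b_1$ and $r_2 = r - r_1$ reduces $f_3$ to an explicit function of $(b_1, r_1)$ of the form $\sum_{k} \frac{c_k}{r_k(2b_k - b_k^2)} + (\text{convex remainder})$. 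I would then verify directly that this sum of reciprocals of concave-in-each-variable denominators is jointly convex in $(b_1, r_1)$ on $\mathcal{F}$, so its unique minimizer coincides with the symmetric point $(b/2, r/2)$.

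Finally, for $f_1$ I would repeat the contradiction argument of {\bf Step 3-(3)}: using $h_{mk}(t) \overset{d}{=} \min\{H_{mk,1}(t), H_{mk,2}(t)\}$ from \eqref{eq: hH1H2}, if some $(b_1', r_1') \ne (b/2, r/2)$ strictly beat the symmetric point for $f_1$, then since all three functions share the \emph{single} symmetric point, at least one of $f_2$ or $f_3$ would also be strictly smaller there, contradicting the two preceding steps. The main obstacle I anticipate is the $f_2$ step: the two-variable coupling means $(1-b_k r_k)$ is variable in both directions, so establishing that $\sum_k (1-b_k r_k)\mathbb{E}[h_{mk}(l)]$ has its unique minimum exactly at $(b/2,r/2)$ — rather than merely a critical point there — is more delicate than the corresponding one-dimensional argument, and I would lean heavily on the single-symmetric-point structure together with the inductive convexity of $\mathbb{E}[h_{mk}(l)]$ to close it.
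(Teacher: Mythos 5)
Your overall architecture matches the paper's: induction, the single-symmetric-point property at $(b/2,r/2)$, separate treatment of $f_2$ and $f_3$, and the closing contradiction for $f_1$ via $h_{mk}(t)\overset{d}{=}\min\{H_{mk,1}(t),H_{mk,2}(t)\}$ (your last step is exactly the paper's \textbf{Step 3-(3)}). But both middle steps have genuine gaps. For $f_2$ you correctly reach $f_2(b_1,r_1;l+1)=2+\sum_{k\in[2]}(1-b_kr_k)\mathbb{E}[h_{mk}(l)]$ and correctly flag that the variable coefficients are the difficulty, but your proposed resolution --- that the sum ``retains a single critical point at $(b/2,r/2)$'' and that this plus symmetry ``forces that point to be the global minimum'' --- is not a valid inference: a unique critical point sitting at a symmetry point can equally be a maximum or a saddle, and the inductive hypothesis gives only global minimality of $f_1(\cdot;l)$, not convexity, so there is no ``inductive convexity'' to lean on. The paper closes this with a specific device you are missing: it writes $\theta_1(b_1,r_1;l)f_1(b_1,r_1;l)=q_2+q_2'$ with $\theta_1=2-b_1r_1-b_2r_2$ and $q_2'$ the cross-paired sum $(1-b_2r_2)\mathbb{E}[h_{m1}(l)]+(1-b_1r_1)\mathbb{E}[h_{m2}(l)]$, notes the left side is minimized at $(b/2,r/2)$ because both factors are, and then excludes the case that only $q_2'$ attains the minimum there by using that $\mathbb{E}[h_{mk}(l)]$ is decreasing in $b_kr_k$, which makes \eqref{eq: proof q2} and \eqref{eq: proof q2 1} contradict each other.

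For $f_3$, your explicit computation of the $g$-terms under the stationary values is in the right spirit, but you park $\sum_{k\in[2]}(1-b_k)\mathbb{E}[h_{mk}(l)]$ as an unexamined ``convex remainder''; this has exactly the same variable-coefficient-times-unknown-function structure as the $f_2$ sum, so the same gap recurs (and the joint convexity of the explicit $g$-part, a sum of products of convex factors in $r_k$ and $b_k$, is itself not automatic). The paper sidesteps both issues by the decomposition $f_3(\cdot;l+1)=2+q_2+\sum_{k,n}(1-r_k)b_k^{(n)}e_{nk}$ with $e_{nk}=\mathbb{E}[g_{nk}(l)-h_{mk}(l)]$: the $h$-dependent part collapses back to the already-handled $q_2$, and the new content is confined to the positive gap $e_{nk}$ (\textbf{Fact 1} and \textbf{Fact 2}), controlled through $\theta_2=\sum_k(1-r_k)b_k$. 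Without these decompositions, or equivalents, neither of your middle steps closes.
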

\begin{proof}
The proof is completed by the mathematical induction. It is straightforward to verify the base case. We assume Lemma~\ref{lem: proof min br2} holds for all $t\leq l$. Now, we consider $t=l+1$.

\noindent{\bf Step (1)}. From \eqref{eq: upper bound H1}, we have
\begin{align*}
f_2(b_1, r_1; l+1) = 2 + \sum_{k\in[2]}(1-b_k r_k)\mathbb{E}[h_{mk}(l)]. 
\end{align*}
Note that the following function
\begin{align*}
\sum_{k\in[2]}(1-b_k r_k)=2-b_1r_1 - b_2r_2
\end{align*}
has the global minimizer at $(b_1, r_1)=(b/2,  r/2)$. In addition, by assumption, $\sum_{k\in[2]}\mathbb{E}[h_{mk}(l)]$ has the global minimizer at $(b_1, r_1)=(b/2,  r/2)$. From Lemma~\ref{pro: minimum symmetry 2d}, $f_2(b_1, r_1; l+1)$ has the global minimizer at $(b_1, r_1)=(b/2,  r/2)$. 

\noindent{\bf Step (2)}.  From  \eqref{eq: upper bound H2}, we derive:
\begin{align*}
f_3(b_1, r_1; l+1) =& \sum_{k\in[2]}\sum_{n\in[2]}(1-r_k)b_k^{(n)}g_{nk}(l)\\ + & \sum_{k\in[2]}(1-b_k)h_{mk}(l) + 2.
\end{align*}
Recall that $b_k^{(1)} = b_k^{(2)} = b_k/2$, then $\{g_{1k}(t)\}_t$ and $\{g_{2k}\}_t$ are identical, thus 
\begin{align*}
\sum_{k\in[2]}\sum_{n\in[2]}(1-r_k)b_k^{(n)}\mathbb{E}[g_{nk}(l)]=\sum_{k\in[2]}(1-r_k)\mathbb{E}[g_{1k}(l)].
\end{align*}
Therefore,
\begin{align*}
	f_3(b_1, r_1; l+1) =& \sum_{k\in[2]}(1-r_k)\mathbb{E}[g_{1k}(l)]\\ + & \sum_{k\in[2]}(1-b_k)h_{mk}(l) + 2.
\end{align*}
By a very similar proof in Fact~\ref{fact: sum g}, we can prove that $\sum_{k\in[2]}(1-r_k)\mathbb{E}[g_{1k}(l)]$ has the global minimizer at $(b_1, r_1) = (b/2, r/2)$. In addition, by assumption, $\sum_{k\in[2]}\mathbb{E}[h_{mk}(l)]$ has the global minimizer at $(b_1, r_1) = (b/2, r/2)$. Utilizing Lemma~\ref{pro: minimum symmetry 2d} again, $\sum_{k\in[2]}(1-b_k)h_{mk}(l)$  has the global minimizer at $(b_1, r_1) = (b/2, r/2)$.  
Thus, $f_3(b_1, r_1; l+1)$  has the global minimizer at $(b_1, r_1) = (b/2, r/2)$. 

\noindent{\bf Step (3)}. We prove that $f_1(b_1, r_1; t)$ has the global minimizer at $(b_1, r_1)=(b/2, r/2)$ by contradiction. Consider $t=l+1$.
Since $f_2(b_1, r_1; l+1)$ and $f_3(b_1, r_1; l+1)$ have the same global minimizer at $(b_1, r_1)=(b/2, r/2)$, both $f_2(b_1, r_1; l+1)$ and $f_3(b_1, r_1; l+1)$ have a {\it single} symmetric point at $(b_1, r_1)=(b/2, r/2)$. Assume that $f_1(b_1, r_1; l)$ has another global minimizer at $(b', r')$ such that
\begin{align*}
	f_1(b', r'; l+1) < f_1(b/2, r/2; l+1),
\end{align*}
which implies at least one of the following two inequalities hold,
\begin{align*}
	&f_2(b', r'; l+1) < f_2(b/2, r/2; l+1)\\
	&f_3(b', r'; l+1) < f_3(b/2, r/2; l+1).
\end{align*}
It contradicts {\bf Step (1)} or {\bf Step (2)}. 

This completes the proof of Lemma~\ref{lem: proof min br2}.
\end{proof}

From Lemma~\ref{lem: proof min br2}, we complete the proof in {\bf Step 3}.

\noindent{\bf Step 4}. In this step, we consider the case where an optimal choice of $\big((b_1^{(1)})^*, (b_1^{(2)})^*\big)$ takes the form $\big((b_1^{(1)})^*, (b_1^{(2)})^*\big) = (b_1, 0)$, and an optimal choice of $\big((b_2^{(1)})^*, (b_2^{(2)})^*\big)$ takes the form $\big((b_2^{(1)})^*, (b_2^{(2)})^*\big) = (b_2/2, b_2/2)$. We aim to demonstrate that such a case cannot exist.

Assume that in time slot $l$, $r_1$ and $r_2$ have small disturbances $\epsilon_1$ and $\epsilon_2$, respectively, such that $\epsilon_1+\epsilon_2=0$. We next evaluate the disturbances introduced in $\mathbb{E}[h_{mk}(l+1)]$ due to these perturbations.

From \eqref{eq: proof hmkg}, let us denote the disturbance by
\begin{align*}
	&\Delta_{1}(\epsilon_1; b_1, b_2) \triangleq \mathbb{E}[h_{m1}(l+1)|r_1+\epsilon_1, h_{m1}(l)] \\
	&- \mathbb{E}[h_{m1}(l+1)|r_1, h_{m1}(l)] = -b_1\epsilon_1\mathbb{E}[g_{1k}(l)]\big|_{b_1, r_1}.
\end{align*}
Similarly, from \eqref{eq: recursion of h-1}, let us denote the disturbance by
\begin{align*}
	&\Delta_{2}(\epsilon_2; b_1, b_2) \triangleq \mathbb{E}[h_{m2}(l+1)|r_2+\epsilon_2, h_{m2}(l)] \\
	&- \mathbb{E}[h_{m2}(l+1)|r_2, h_{m1}(l)] \\
	&= -b_2\epsilon_2\mathbb{E}[\min\{h_{m2}(l), g_{1k}(l)\}]\big|_{b_2/2, r_2}.
\end{align*}
By the definitions of $h_{mk}(l)$ in \eqref{eq: recursion of h-1}, it follows that $\Delta_1(\epsilon_1; b_1, b_2) + \Delta_2(\epsilon_2; b_1, b_2) \neq 0$ if $b_1 \neq b_2$ and $r_1 \neq r_2$. If $\Delta_{1}(\epsilon_1; b_1, b_2) + \Delta_{2}(\epsilon_2; b_1, b_2) < 0$, applying these disturbances will reduce $\sum_{k\in[2]}\mathbb{E}[h_{mk}(l+1)]$.
Conversely, if $\Delta_{1}(\epsilon_1; b_1, b_2) + \Delta_{2}(\epsilon_2; b_1, b_2) > 0$, we can exchange $\epsilon_1$ and $\epsilon_2$, resulting in the corresponding 
$\Delta_{1}(\epsilon_1; b_1, b_2) + \Delta_{2}(\epsilon_2; b_1, b_2) < 0$, This ensures that applying these altered disturbances will always lead to a decrease in the quantity $\sum_{k\in[2]}\mathbb{E}[h_{mk}(l+1)]$.
When $b_1=b_2$ and $r_1=r_2$, we have $\Delta_{1}(\epsilon_1; b_1, b_2)+\Delta_{2}(\epsilon_2; b_1, b_2)=0$ if $\epsilon_1+\epsilon_2=0$, and it merges into {\bf Step 2} or {\bf Step 3}. 

The same argument applies to perturbations in $b_1$ and $b_2$. Thus, we demonstrate that any $(b_1, b_2, r_1, r_2)$ cannot be a global minimizer unless  $b_1 = b_2$ and $r_1 = r_2$. This implies that the scenario considered in this step does not exist.

From {\bf Step 1} to {\bf Step 4}, we complete the proof of Lemma~\ref{lem: optimal H1H2h OS}.

\section{Transition Probability Matrix}\label{App: steady-state distribution phi}
 Denote $\bar{r}_k = 1-r_k$, $r_0=r/2$, $\bar{r}_0=1-r_0$, and $\bar{\zeta}_{nk} = (1-b_k^{(n)})^2$. Let $\phi = (\phi_1, \phi_2)$ represent the values of truncated age in the current time slot, and $\phi' = (\phi_1', \phi_2')$ be the values of truncated age in the next time slot. Denote
\begin{align*}
&\mathcal{G}_A^{z} =\{\phi'| \phi'=(1, 1)\},\\ &\mathcal{G}_B^{z}=\{\phi'|\phi' = (1, \min\{\phi_2+1,z\})\},\\
&\mathcal{G}_C^{z} =\{\phi'|\phi'=(\min\{\phi_1+1,z\}, 1)\}\\
&\mathcal{G}_D^{z} =\{\phi' | \phi'= (\min\{\phi_1+1,z\}, \min\{\phi_2+1,z\})\},
\end{align*}
and
\begin{align*}
&\mathcal{G}_1^z = \{(\phi_1, \phi_2)|\phi_2 < \phi_1 < z\}\\
&\mathcal{G}_2^z = \{(\phi_1, \phi_2)|\phi_1<\phi_2 < z\}\\
&\mathcal{G}_3^z = \{(\phi_1, \phi_2)|\phi_1=\phi_2\}.
\end{align*}

If  $\phi\in\mathcal{G}_3^z$ and $\phi'\in \mathcal{G}_A^{z}$, $\Pr\{\phi\to\phi'\}=r_0\zeta_{n1}r_0\zeta_{n2}$; if  $\phi\in\mathcal{G}_1^z\cup\mathcal{G}_2^z$ and $\phi'\in \mathcal{G}_A^{z}$ $\Pr\{\phi\to\phi'\}=r_1\zeta_{n1}r_2\zeta_{n2}$. Then, we obtain
\begin{align}\label{eq: transition probability matrix-1}
	{\bf P}_{\phi \phi'}=\left\{
	\begin{aligned}
		&r_0^2\zeta_{n1}\zeta_{n2},\,\,\phi\in\mathcal{G}^z_3,\phi'\in\mathcal{G}_A^{z}\\
		&r_1r_2\zeta_{n1}\zeta_{n2},\,\,\phi\in\mathcal{G}_1^z\cup\mathcal{G}_2^z,\phi'\in\mathcal{G}_A^{z}
	\end{aligned}
	\right.
\end{align}
If  $\phi\in\mathcal{G}_3^z$ and $\phi'\in \mathcal{G}_B^{z}$, $\Pr\{\phi\to\phi'\}=r_0\zeta_{n1}\cdot\big(\bar{r}_0\zeta_{n2} + \bar{\zeta}_{n2}\big)$. If  $\phi\in\mathcal{G}_1^z$ and $\phi'\in \mathcal{G}_B^{z}$, $\Pr\{\phi\to\phi'\}=r_1\zeta_{n1}\cdot(\bar{r}_2\zeta_{n2} + \bar{\zeta}_{n2})$. If  $\phi\in\mathcal{G}_2^z$ and $\phi'\in \mathcal{G}_B^{z}$, $\Pr\{\phi\to\phi'\}=r_2\zeta_{n1}\cdot(\bar{r}_1\zeta_{n2} + \bar{\zeta}_{n2})$. Then, we obtain
\begin{align}\label{eq: transition probability matrix-2}
	{\bf P}_{\phi \phi'}=\left\{
	\begin{aligned}
		&r_0\zeta_{n1}\big(\bar{r}_0\zeta_{n2} + \bar{\zeta}_{n2}\big),\,\,\phi\in\mathcal{G}_3^z, \phi'\in\mathcal{G}_B^{z}\\ 
		&r_1\zeta_{n1}(\bar{r}_2\zeta_{n2} + \bar{\zeta}_{n2}),\,\,\phi\in\mathcal{G}^z_1,\phi'\in\mathcal{G}_B^{z}\\
		&r_1\zeta_{n2}(\bar{r}_2\zeta_{n1} + \bar{\zeta}_{n1}),\,\,\phi\in\mathcal{G}^z_2, \phi'\in\mathcal{G}_B^{z}.
	\end{aligned}
	\right.
\end{align}

If  $\phi\in\mathcal{G}_3^z$ and $\phi'\in \mathcal{G}_C^{z}$, $\Pr\{\phi\to\phi'\}=r_0\zeta_{n2}\cdot\big(\bar{r}_0\zeta_{n1} + \bar{\zeta}_{n1}\big)$. If  $\phi\in\mathcal{G}_1^z$ and $\phi'\in \mathcal{G}_C^{z}$, $\Pr\{\phi\to\phi'\}=r_2\zeta_{n2}\cdot(\bar{r}_1\zeta_{n1} + \bar{\zeta}_{n1})$.If  $\phi\in\mathcal{G}_2^z$ and $\phi'\in \mathcal{G}_C^{z}$, $\Pr\{\phi\to\phi'\}=r_1\zeta_{n2}\cdot(\bar{r}_2\zeta_{n1} + \bar{\zeta}_{n1})$. Then, we obtain
\begin{align}\label{eq: transition probability matrix-3}
	{\bf P}_{\phi \phi'}=\left\{
	\begin{aligned}
		&r_0\zeta_{n2}\big(\bar{r}_0\zeta_{n1} + \bar{\zeta}_{n1}\big),\,\,\phi\in\mathcal{G}_3^z, \phi'\in\mathcal{G}_C^{z}\\ 
		&r_2\zeta_{n2}(\bar{r}_1\zeta_{n1} + \bar{\zeta}_{n1}),\,\,\phi\in\mathcal{G}_1^z, \phi'\in\mathcal{G}_C^{z}\\
		&r_1\zeta_{n2}(\bar{r}_2\zeta_{n1} + \bar{\zeta}_{n1}),\,\,\phi\in\mathcal{G}_2^z, \phi'\in\mathcal{G}_C^{z}
	\end{aligned}
	\right.
\end{align}

If  $\phi\in\mathcal{G}_3^z$ and $\phi'\in \mathcal{G}_D^{z}$, $\Pr\{\phi\to\phi'\}=(\bar{r}_0\zeta_{n1}+\bar{\zeta}_{n1})\cdot\big(\bar{r}_0\zeta_{n2} + \bar{\zeta}_{n2}\big)$. If  $\phi\in\mathcal{G}_1^z$ and $\phi'\in \mathcal{G}_D^{z}$, $\Pr\{\phi\to\phi'\}=(\bar{r}_1\zeta_{n1}+\bar{\zeta}_{n1})\cdot\big(\bar{r}_2\zeta_{n2} + \bar{\zeta}_{n2}\big)$. If  $\phi\in\mathcal{G}_2^z$ and $\phi'\in \mathcal{G}_D^{z}$, $\Pr\{\phi\to\phi'\}=(\bar{r}_2\zeta_{n1}+\bar{\zeta}_{n1})\cdot\big(\bar{r}_1\zeta_{n2} + \bar{\zeta}_{n2}\big)$. 
Then, we obtain
\begin{align}\label{eq: transition probability matrix-4}
{\bf P}_{\phi \phi'}=\left\{
\begin{aligned}
&(\bar{r}_0\zeta_{n1}+\bar{\zeta}_{n1})\big(\bar{r}_0\zeta_{n2} + \bar{\zeta}_{n2}\big),\,\,\phi\in\mathcal{G}_3^z, \phi'\in\mathcal{G}_D^{z}\\ 
&(\bar{r}_1\zeta_{n1}+\bar{\zeta}_{n1})\big(\bar{r}_2\zeta_{n2} + \bar{\zeta}_{n2}\big),\,\,\phi\in\mathcal{G}_1^z, \phi'\in\mathcal{G}_D^{z}\\
&(\bar{r}_2\zeta_{n1}+\bar{\zeta}_{n1})\big(\bar{r}_1\zeta_{n2} + \bar{\zeta}_{n2}\big),\,\,\phi\in\mathcal{G}_2^z, \phi'\in\mathcal{G}_D^{z}.
\end{aligned}
\right.
\end{align}

\section{Proof of Theorem~\ref{thm: upper bounds of Jc N2S}}\label{App: thm: upper bounds of Jc N2S}
We define two {\it independent} stochastic processes, $\{u_{mk}'(t)\}_t$ and $\{v_{mk}'(t)\}_t$. $\{u_{mk}'(t)\}_t$ is defined recursively as
\begin{align*}
	u_{mk}'(t+1) = \left\{
	\begin{aligned}
		&1&& b_k r_2\\
		&u_{mk}'(t)+1&&1-b_k r_2;
	\end{aligned}
	\right.
\end{align*}
and $\{v_{mk}'(t)\}_t$ is recursively defined as 
\begin{align*}
	v_{mk}'(t+1) = \left\{
	\begin{aligned}
		&1&&b_kr_2\\
		&g_{1k}(t)+1&& b_k^{(1)}(1-r_2)\\
		&g_{2k}(t)+1&&b_k^{(2)}(1-r_2)\\
		&u_{mk}'(t)+1&&1-b_k,
	\end{aligned}
	\right.
\end{align*}
where $g_{1k}(t)$ and $g_{2k}(t)$ are defined as in \eqref{eq: nonoblivious recursion g}. By convention, we set the initial values as  $u_{mk}'(0)=v_{mk}'(0)=1$. Following the same reasoning as in Lemma~\ref{lem: upper bounds O2S}, we use mathematical induction to prove that
\begin{align*}
	\mathbb{E}[h_{mk}(t)]\leq \mathbb{E}[u_{mk}'(t)],\quad\mathbb{E}[h_{mk}(t)]\leq \mathbb{E}[v_{mk}'(t)].
\end{align*}
The remainder of the proof proceeds analogously to that of Theorem~\ref{thm: upper bounds of Jc} in Appendix~\ref{App: upper bounds of Jc}.

\section{Proof of Lemma~\ref{lem: optimal H h NS}}\label{App: lem: optimal H h NS}
The entire proof is similar to that of Lemma~\ref{lem: optimal H h OS} in Appendix~\ref{App: lem: optimal H h OS}. First of all, we provide the following notations for brevity,
\begin{align*}
\mathbb{E}[h_{mk}(t)]\triangleq&f_1(b_1^{(1)}, b_2^{(1)}; t)\\
\mathbb{E}[H_{mk, 1}'(t)]\triangleq&f_2(b_1^{(1)}, b_2^{(1)}; t)\\
\mathbb{E}[H_{mk, 2}'(t)]\triangleq&f_3(b_1^{(1)}, b_2^{(1)}; t),
\end{align*}

\noindent{\bf Step 1}. By a similar proof of Lemma~\ref{lem: proof symmetric OS}, we show that $f_i(b_1^{(1)}, b_2^{(1)}; t)$ with $i\in\{1,2,3\}$ are symmetric at $(b_1/2, b_2/2)$. Similar to the counterparts in {\bf Step 1} of Appendix~\ref{App: lem: optimal H h OS}, we show that the global minimizer of $f_i(b_1^{(1)}, b_2^{(1)}; t)$, $i\in\{1,2,3\}$ , occurs at $(b_1^{(1)}, b_2^{(1)})=(b_1, b_2)$ or $(0, 0)$, or $(b_1^{(1)},  b_2^{(1)}) = (b_1/2, b_2/2)$.

\noindent{\bf Step 2}. Similar to {\bf Step 2} in the proof of Lemma~\ref{lem: proof symmetric OS}, through lengthy but straightforward derivations, the initial cases can be verified (we omit these derivations for brevity).

\noindent{\bf Step 3}. In this step, we complete the proof by mathematic induction.  Before proceeding with the induction step, we first present the following lemma. From \eqref{eq: taukt} and \eqref{eq: nonoblivious recursion g}, $r_{nk}(t)$ depends on $b_1^{(n)}$ and $b_2^{(n)}$. Denote
\begin{align*}
q_1(k; t)=&\sum_{n\in[2]}b_k^{(n)}\mathbb{E}[r_{nk}(t)]\\
q_{2}(k; t)=&\sum_{n\in[2]}\zeta_{nk}\mathbb{E}[r_{nk}(t)].
\end{align*}
\begin{lemma}\label{pro: proof q4q5}
For $k\in[2]$, if $b_k$ and $r_k$ with $k\in[2]$ are fixed, then 
\begin{itemize}
\item [](i) $\zeta_{nk}\mathbb{E}[r_{nk}(t)]$ increases with $b_1^{(n)}$, $b_2^{(n)}$, and $r_1$. 
\item [] (ii) $q_{1}(k; t)$ has the global maximizer at $(b_1^{(1)}, b_2^{(1)}) = (b_1/2, b_2/2)$.
\item [] (iii) 
$q_{2}(k; t)$ has the global maximizer at $(b_1^{(1)}, b_2^{(1)}) = (b_1/2, b_2/2)$.
	\end{itemize}
\end{lemma}
\begin{proof}
Based on \eqref{eq: taukt}, we copy the expression of $\mathbb{E}[r_{nk}(t)]$ as follows,
\begin{align*}
\mathbb{E}[r_{nk}(t)]=&\Pr\{G_{n}(t)\in \mathcal{G}_{n1}\}r_k +\Pr\{G_{n}(t)\in \mathcal{G}_{n2}\}(r-r_k)\\
+&\Pr\{G_{n}(t)\in \mathcal{G}_{n3}\}r/2.
\end{align*}
To prove Proposition~\ref{pro: proof q4q5}, we first consider the limiting case where $z\to\infty$, without loss of generality. In fact, when $z$ is bounded, denote 
\begin{align*}
\mathcal{G}_{n3, 1} =& \Big\{G_n(t)|g_{n1}(t) \geq g_{n2}(t)\geq z, t\geq 0\Big\}\\
\mathcal{G}_{n3, 2} =& \Big\{G_n(t)|g_{n2}(t) > g_{n1}(t)\geq z, t\geq 0\Big\}.
\end{align*}
Then, there exists $r_1'$ and $r_2'$, such that $r_2\leq r_2'\leq r_1'\leq r_1$ and $r_1'+r_2'=r$, which makes $\mathbb{E}[r_{nk}(t)]$ can be equivalently expressed as
\begin{align*}
\mathbb{E}[r_{nk}(t)]=&\Pr\{G_{n}(t)\in \mathcal{G}_{n1}\cup\mathcal{G}_{n3, 1}\}r_k' \\
+&\Pr\{G_{n}(t)\in \mathcal{G}_{n2}\cup\mathcal{G}_{n3, 1}\}(r-r_k').
\end{align*}

In the case when  $z\to\infty$, we denote
\begin{align*}
\mathcal{G}_{n1}^*=\mathcal{G}_{n1}\cup \mathcal{G}_{n3, 1},\,\, \mathcal{G}_{n2}^*=\mathcal{G}_{n2}\cup \mathcal{G}_{n3, 2}.
\end{align*}
Then, the process over states $\mathcal{G}_{n1}^*$ and $\mathcal{G}_{n2}^*$ form a Markov chain. We have
\begin{align*}
\Pr\{\mathcal{G}_{n1}^*\to \mathcal{G}_{n2}^*\} =& \zeta_{n1}r_1\big(1-\zeta_{n2}r_2\big)\\
\Pr\{\mathcal{G}_{n1}^*\to \mathcal{G}_{n1}^*\} =& 1-\zeta_{n1}r_1\big(1-\zeta_{n2}r_2\big)\\
\Pr\{\mathcal{G}_{n2}^*\to \mathcal{G}_{n1}^*\} =& \zeta_{n2}r_1\\
\Pr\{\mathcal{G}_{n2}^*\to \mathcal{G}_{n2}^*\} =& 1-\zeta_{n2}r_1,
\end{align*}
which implies
\begin{align*}
	&\Pr\{\mathcal{G}_{n1}^*\}=\frac{\zeta_{n2}r_1}{r_1\zeta_{n1}+r_1\zeta_{n2}-r_1r_2\zeta_{n1}\zeta_{n2}},\\
	&\Pr\{\mathcal{G}_{n2}^*\}=\frac{\zeta_{n1}r_1(1-\zeta_{n2}r_2)}{r_1\zeta_{n1}+r_1\zeta_{n2}-r_1r_2\zeta_{n1}\zeta_{n2}}.
\end{align*}
Therefore, from \eqref{eq: taukt}, we have
\begin{align*}
\mathbb{E}[r_{n1}(t)] =& \frac{r_1^2\zeta_{n2}+r_1r_2\zeta_{n1}-r_2^2r_1\zeta_{n1}\zeta_{n2}}{r_1\zeta_{n1}+r_1\zeta_{n2}-r_1r_2\zeta_{n1}\zeta_{n2}}\\
\mathbb{E}[r_{n2}(t)] =& \frac{r_1r_2\zeta_{n2}+r_1^2\zeta_{n1}-r_1^2r_2\zeta_{n1}\zeta_{n2}}{r_1\zeta_{n1}+r_1\zeta_{n2}-r_1r_2\zeta_{n1}\zeta_{n2}}.
\end{align*}

First, we prove $\zeta_{nk}\mathbb{E}[r_{nk}(t)]$ increases with $b_1^{(n)}$ and $b_2^{(n)}$. We only consider $k=1$; the proof for $k=2$ is similar,
\begin{align}\label{eq: zetan1}
\zeta_{n1}\mathbb{E}[r_{n1}(t)]=\zeta_{n1}\frac{\frac{r_1}{\zeta_{n1}}+\frac{r_2}{\zeta_{n2}} -r_2^2}{\frac{1}{\zeta_{n1}}+\frac{1}{\zeta_{n2}} -r_2}.
\end{align}
Note that $\zeta_{nk}=1-(1-b_k^{(n)})^2$ is an increasing function of $b_k^{(n)}$ since $b_k^{(n)}\leq 1$. Substituting this into \eqref{eq: zetan1}, and note that $1\geq r_1\geq r_2\geq 0$ and $0\leq \zeta_{nk}\leq 1$, we have: $\zeta_{n1}\mathbb{E}[\tau_{n1}(t)]$ increases with $b_1^{(n)}$, $b_2^{(n)}$, and $r_1$. By a similar approach, $\zeta_{n2}\mathbb{E}[\tau_{n2}(t)]$ increases with $b_1^{(n)}$, $b_2^{(n)}$, and $r_1$.

Second, we prove $q_1(k; t)$ has the global maximizer at $(b_1^{(1)}, b_2^{(1)}) = (b_1/2, b_2/2)$. Recall that
\begin{align*}
&q_{4}(k; t)=\sum_{n\in[2]}b_k^{(n)}\mathbb{E}[r_{nk}(t)].
\end{align*}
It suffices to consider
\begin{align*}
b_1^{(n)}\mathbb{E}[r_{n1}(t)]=b_1^{(n)}\frac{\frac{r_1}{\zeta_{n1}}+\frac{r_2}{\zeta_{n2}} -r_2^2}{\frac{1}{\zeta_{n1}}+\frac{1}{\zeta_{n2}} -r_2}
\end{align*}
as an example.
Since $\zeta_{nk}=1-(1-b_k^{(n)})^2$ is concave with respect to $b_k^{(n)}$, and $r_1\geq r_2$, then  $b_1^{(n)}\frac{\frac{r_1^2}{\zeta_{n1}}+\frac{r_1r_2}{\zeta_{n2}} -r_1r_2^2}{\frac{r_1}{\zeta_{n1}}+\frac{r_1}{\zeta_{n2}} -r_1r_2}$ is concave with respect to $b_k^{(n)}$ with $k\in[2]$. By similar analysis for $b_2^{(n)}\mathbb{E}[r_{n2}(t)]$, we derive that $q_{1}(k; t)$ is concave with respect to $b_k^{(n)}$ with $k\in[2]$, which has the global maximizer at $(b_1^{(1)}, b_2^{(1)}) = (b_1/2, b_2/2)$.

Third, we prove $q_2(k; t)$ has the  global maximizer  at $(b_1^{(1)}, b_2^{(1)}) = (b_1/2, b_2/2)$. Recall that
\begin{align*}
&q_{2}(k; t)=\sum_{k\in[2]}\sum_{n\in[2]}\zeta_{n1}\mathbb{E}[r_{n1}(t)].
\end{align*}
It suffices to consider
\begin{align*}
\zeta_{n1}\mathbb{E}[r_{n1}(t)]=\zeta_{n1}\frac{\frac{r_1}{\zeta_{n1}}+\frac{r_2}{\zeta_{n2}} -r_2^2}{\frac{1}{\zeta_{n1}}+\frac{1}{\zeta_{n2}} -r_2}
\end{align*}  
as an example. Since $\zeta_{nk}=1-(1-b_k^{(n)})^2$ is concave with respect to $b_k^{(n)}$, and $r_1\geq r_2$, then $\zeta_{n1}\frac{\frac{r_1}{\zeta_{n1}}+\frac{r_2}{\zeta_{n2}} -r_2^2}{\frac{1}{\zeta_{n1}}+\frac{1}{\zeta_{n2}} -r_2}$ is concave with respect to $b_k^{(n)}$ with $k\in[2]$. By similar analysis for $\zeta_{n2}\mathbb{E}[r_{n2}(t)]$, we derive that $q_2(k; t)$ is concave with respect to $b_k^{(n)}$ with $k\in[2]$, which has the global maximizer at $(b_1^{(1)}, b_2^{(1)}) = (b_1/2, b_2/2)$.
\end{proof}

\noindent{\bf Step (1)}. In this step, we prove the following lemma.
\begin{lemma}\label{pro: decreasing threshold w nonoblivious}
If  $f_i(b_1^{(1)}, b_2^{(1)}; l)$ with $i\in\{1,2,3\}$ has the global minimizer at $(b_1^{(1)}, b_2^{(1)}) = (b_1/2, b_2/2)$, respectively, then   $f_i(b_1^{(1)}, b_2^{(1)}; l+1)$ with $i\in\{1,2,3\}$ has the global minimizer at  $(b_1^{(1)}, b_2^{(1)}) = (b_1/2, b_2/2)$, respectively.
\end{lemma}
\begin{proof}
Firstly,  taking the expectation on both sides of \eqref{eq: upper bound H1 NS}, we have
\begin{align*}
&f_2(b_1^{(1)}, b_2^{(1)}; l+1) \\
&=\mathbb{E}[(1-\sum_{n\in[2]}b_k^{(n)}r_{nk}(l))h_{mk}(l)]+2.
\end{align*}
 Since $1-\sum_{n\in[2]}b_k^{(n)}r_{nk}(l)$ represents the probability of receiving a new packet from CP $k$ in time slot $l$, a larger $1-\sum_{n\in[2]}b_k^{(n)}r_{nk}(l)$ corresponds to a higher probability of receiving a new packet from CP $k$, which in turn results in a smaller $h_{mk}(l+1)$. Thus, $h_{mk}(l+1)$ is negatively correlated with $\sum_{n\in[2]}b_k^{(n)}r_{nk}(l)$, hence is positively correlated with $1-\sum_{n\in[2]}b_k^{(n)}r_{nk}(l)$. Since $h_{mk}(l+1) $ is positively correlated with $h_{mk}(l)$, indicating that $h_{mk}(l)$ maintains a positive correlation with $1-\sum_{n\in[2]}b_k^{(n)}r_{nk}(l)$. 
Therefore, if $\big(1-\sum_{n\in[2]}b_k^{(n)}\mathbb{E}[r_{nk}(l)]\big)$ increases, 
$\sum_{k\in[2]}\mathbb{E}[h_{mk}(l)]$ will be non-decreasing, due to the positive correlation between these two quantities.

From Lemma~\ref{pro: proof q4q5}~(ii), $1-\sum_{n\in[2]}b_k^{(n)}\mathbb{E}[r_{nk}(l)]=1-q_1(k; l)$ has the global minimizer at $(b_1^{(1)}, b_2^{(1)}) = (b_1/2, b_2/2)$. By assumption, $\mathbb{E}[h_{mk}(l)]$ has the global minimizer at $(b_1^{(1)}, b_2^{(1)}) = (b_1/2, b_2/2)$. Since $1-q_1(l; t)$ and $\mathbb{E}[h_{mk}(l)]$ are positively correlated, and both have a  single symmetric point at $(b_1^{(1)}, b_2^{(1)}) = (b_1/2, b_2/2)$. By symmetry, $\mathbb{E}[(1-\sum_{n\in[2]}b_k^{(n)}r_{nk}(l))h_{mk}(l)]$ has the global minimizer at $(b_1^{(1)}, b_2^{(1)}) = (b_1/2, b_2/2)$, which implies $f_2(b_1^{(1)}, b_2^{(1)}; l+1)$ has the global minimizer at $(b_1^{(1)}, b_2^{(1)}) = (b_1/2, b_2/2)$.

Secondly,  taking the expectation on both sides of \eqref{eq: upper bound H2 NS}, we have
\begin{align*}
f_3(b_1^{(1)}, b_2^{(1)}; l+1) =&  \sum_{n\in[2]}\mathbb{E}[b_k^{(n)}(1-r_{nk}(l))g_{nk}(l)]\\
+&\mathbb{E}[(1-b_k)h_{mk}(l)]+2.
\end{align*}
By assumption, $\mathbb{E}[(1-b_k)h_{mk}(l)]$ has the global minimizer at $(b_1^{(1)}, b_2^{(1)}) = (b_1/2, b_2/2)$. Then, we will show
\begin{align*}
\sum_{n\in[2]}\mathbb{E}[b_k^{(n)}(1-r_{nk}(l))g_{nk}(l)]
\end{align*}
has the global minimizer at $(b_1^{(1)}, b_2^{(1)}) = (b_1/2, b_2/2)$. First, we need to prove the following fact.

\begin{fact}\label{eq: non oblivious sum g}
$\sum_{n\in[2]}\mathbb{E}[g_{nk}(t)]$  has the global minimizer at $(b_1^{(1)}, b_2^{(1)})=(b_1/2, b_2/2)$.
\end{fact}
\begin{proof}
We complete the proof by mathematical induction. When $t=0$, $\sum_{n\in[2]}\mathbb{E}[g_{nk}(0)]=2$, which implies $(b_1^{(1)}, b_2^{(1)})=(b_1/2, b_2/2)$ is the global minimizer. Assume that Fact~\ref{eq: non oblivious sum g} holds for all $t\leq l$. Now, we consider $t=l+1$. From \eqref{eq: nonoblivious recursion g}, we have
\begin{align*}
\sum_{n\in[2]}\mathbb{E}[g_{nk}(l+1)]=2 + \sum_{n\in[2]}\mathbb{E}[\big(1-\zeta_{nk}r_{nk}(l)\big)g_{nk}(l)].
\end{align*}
From \eqref{eq: nonoblivious recursion g}, the larger $\zeta_{nk}r_{nk}(l)$, the higher probability such that $g_{nk}(l)$ drops to $1$, so $g_{nk}(l)$ and $\zeta_{nk}r_{nk}(l)$ are negatively correlated, and $g_{nk}(l)$ and $1-\zeta_{nk}r_{nk}(l)$ are positively correlated. 

From Lemma~\ref{pro: proof q4q5}~(iii),  $\sum_{n\in[2]}\mathbb{E}[1-\zeta_{nk}r_{nk}(l)]=1-q_2(k;l)$ has the global minimizer at $(b_1^{(1)}, b_2^{(1)})=(b_1/2, b_2/2)$. By assumption, $\sum_{n\in[2]}\mathbb{E}[g_{nk}(l)]$ has the global minimizer at $(b_1^{(1)}, b_2^{(1)})=(b_1/2, b_2/2)$. 

Utilizing a similar proof of Lemma~\ref{lem: proof symmetric OS}, we can show that both $\sum_{n\in[2]}(1-\zeta_{nk}r_{nk}(l))$ and $\sum_{n\in[2]}g_{nk}(l)$ have a single symmetric point at $(b_1^{(1)}, b_2^{(1)})=(b_1/2, b_2/2)$.  Since both $\sum_{n\in[2]}\mathbb{E}[1-\zeta_{nk}\tau_{nk}(l)]$ and $\sum_{n\in[2]}\mathbb{E}[g_{nk}(l)]$ have the global minimizer at $(b_1^{(1)}, b_2^{(1)})=(b_1/2, b_2/2)$.  Then, by symmetry, $\mathbb{E}[\big(1-\zeta_{nk}\tau_{nk}(l)]\big)g_{nk}(l)]$ has the global minimizer at $(b_1^{(1)}, b_2^{(1)})=(b_1/2, b_2/2)$, which implies $\sum_{n\in[2]}\mathbb{E}[g_{nk}(l+1)]$ has the global minimizer at $(b_1^{(1)}, b_2^{(1)})=(b_1/2, b_2/2)$.
\end{proof}

\noindent Now, we go back to the proof. From  Lemma~\ref{pro: proof q4q5}~(ii), 
\begin{align*}
\mathbb{E}[\sum_{n\in[2]}b_k^{(n)}(1-r_{nk}(l))] = b_k - q_1(k; l)
\end{align*}
has the global minimizer at $(b_1^{(1)}, b_2^{(1)})=(b_1/2, b_2/2)$. Note that  $\sum_{n\in[2]}b_k^{(n)}\big(1-r_{nk}(l)\big)$ represents the probability of the event that CU $m$ sends requests for CPs to eRRHs, but all eRRHs retrive cached packets to serve the requests. Thus, the larger $\sum_{n\in[2]}b_k^{(n)}(1-r_{nk}(l))$, the less likely for eRRHs to receive new content from CPs. This implies $\sum_{n\in[2]}b_k^{(n)}(1-r_{nk}(l))$ and $\sum_{n\in[2]}g_{nk}(l)$ are positively correlated. From Fact~\ref{eq: non oblivious sum g}, $\sum_{n\in[2]}\mathbb{E}[g_{nk}(l)]$ has the global minimizer at $(b_1^{(1)}, b_2^{(1)})=(b_1/2, b_2/2)$. And $(b_1^{(1)}, b_2^{(1)})=(b_1/2, b_2/2)$ is the single symmetric point for both  $\sum_{n\in[2]}b_k^{(n)}(1-r_{nk}(l))$ and $\sum_{n\in[2]}g_{nk}(l)$ have a single symmetric point at $(b_1^{(1)}, b_2^{(1)})=(b_1/2, b_2/2)$. Then, by symmetry, $f_3(b_1^{(1)}, b_2^{(1)}; l+1)$ has the global minimizer at $(b_1^{(1)}, b_2^{(1)})=(b_1/2, b_2/2)$.

Thirdly, utilizing a similar contradiction method in {\bf Step (3)} of the proof of Lemma~\ref{pro: decreasing threshold w} in Appendix~\ref{App: lem: optimal H h OS}. We can prove $f_1(b_1^{(1)}, b_2^{(1)}; l+1)$ has the global minimizer at $(b_1^{(1)}, b_2^{(1)})=(b_1/2, b_2/2)$. Finally, we complete the proof.
\end{proof}

\noindent{\bf Step (2)}. Next, we define 
\begin{align*}
w'(r_k, r_{3-k}; l+1) &=  \sup\Big\{b_k: \mathbb{E}[h_{mk}(l+1)]\text{ can not be}\\
&\text{ minimized at }(b_1^{(1)}, b_2^{(1)}) = (b_1/2, b_2/2)\Big\}.
\end{align*}
By a similar contradiction proof to that of Lemma~\ref{pro: b<w is not a minimum point}, we can show: For $\forall b_k < w'(r_k, r_{3-k}; l+1)$, $(b_1^{(1)}, b_2^{(1)})=(b_1/2, b_2/2)$ can not be the global minimizer of $\mathbb{E}[h_{mk}(l+1)]$.  

From {\bf Step (1)} and {\bf Step (2)}, we completes the mathmatical induction for Lemma~\ref{App: lem: optimal H h NS}.  Finally, under stationary randomized policies,  both $h_{mk}(l)$ and $g_{nk}(l)$ converge to asymptotically stationary processes. Hence, the initial values $h_{mk}(0)$ and $g_{nk}(0)$ have negligible influence on the long-term behavior of $h_{mk}(l)$ and $g_{nk}(l)$, respectively. Thus, $h_{mk}(l+1) = h_{mk}(l)$ almost surely when $l\to\infty$, which implies $\lim_{l\to\infty}w'(r_k, r_{3-k}; l)$ converges.

\section{Proof of Lemma~\ref{lem: optimal H1H2h NS}}\label{App: optimal H1H2h NS}
This proof consists of two parts.  In the first part, we  demonstrate that $(b_1, b_2) = (b/2, b/2)$ is the global minimizer of $\sum_{k\in[2]}\mathbb{E}[h_{mk}(t)]$, assuming $r_1$ and $r_2$ are fixed. In the second part, we show that $(r_1, r_2) = (r, 0)$ when $r\leq1$, and $(r_1, r_2) = (1, r-1)$ when $r > 1$, constitutes the global minimizer of $\sum_{k\in[2]}\mathbb{E}[h_{mk}(t)]$ for any $(b_1, b_2) = (b/2, b/2)$ with fixed $b$.

\noindent{\bf Part I}. we  demonstrate that $(b_1, b_2) = (b/2, b/2)$ is the global minimizer of $\sum_{k\in[2]}\mathbb{E}[h_{mk}(t)]$, assuming $r_1$ and $r_2$ are fixed. 

This part is similar to the proof of Lemma~\ref{lem: optimal H1H2h OS} in Appendix~\ref{App: optimal H1H2h OS}. We complete the proof by mathematical induction. 

\noindent{\bf Step (1)}. This step is similar to {\bf Step 1} in Appendix~\ref{App: optimal H1H2h OS}. We check initial cases, and omit the calculation process for brevity.

\noindent{\bf Step (2)}. We prove that $\sum_{k\in[2]}\mathbb{E}[h_{mk}(l+1)]$ has the global minimizer $(b_1, b_2) = (b/2, b/2)$ under the condition that, for both $k\in[2]$, an optimal choice of $\big((b_k^{(1)})^*, (b_k^{(2)})^*\big)$ takes the form $\big((b_k^{(1)})^*, (b_k^{(2)})^*\big) = (b_k, 0)$. 
\begin{equation}\label{eq: proof hmkg-1}
	\begin{aligned}
		h_{mk}(l+1)= \left\{
		\begin{aligned}
			&g_{1k}(l)&& b_{k}\\
			&h_{mk}(l)+1&&1-b_k,
		\end{aligned}
		\right.
	\end{aligned}
\end{equation}
where $g_{1k}(t)$ is given in \eqref{eq: nonoblivious recursion g}. Taking the expectation on both sides of \eqref{eq: proof hmkg-1}, we have
\begin{align*}
\mathbb{E}[h_{mk}(l+1)] =& \frac{1}{b_k} + \mathbb{E}[g_{1k}(l)] - 1.
\end{align*}
Following a similar argument in Fact~\ref{eq: non oblivious sum g} in Appendix~\ref{App: lem: optimal H h NS}, we show that $\sum_{k\in[2]}\mathbb{E}[g_{1k}(l)]$ has the global minimizer at $(b_1, b_2)=(b/2, b/2)$. By convexity, it is straightforward to verify that $\sum_{k\in[2]}\frac{1}{b_k}$ has the global minimizer at  $(b_1, b_2)=(b/2, b/2)$. Thus, in this case, 
$\sum_{k\in[2]}\mathbb{E}[h_{mk}(l+1)]$ has the global minimizer at $(b_1, b_2) = (b/2, b/2)$

\noindent{\bf Step (3)}. In this step, we prove that $\sum_{k\in[2]}\mathbb{E}[h_{mk}(l+1)]$ has the global minimizer at $(b_1, b_2) = (b/2, b/2)$ under the condition that, for both $k\in[2]$, an optimal choice of $\big((b_k^{(1)})^*, (b_k^{(2)})^*\big)$ takes the form $\big((b_k^{(1)})^*, (b_k^{(2)})^*\big) = (b_k/2, b_k/2)$.   Utilizing a similar proof of Lemma~\ref{pro: decreasing threshold w nonoblivious} in Appendix~\ref{App: lem: optimal H h NS}, we can show that $\sum_{k\in[2]}\mathbb{E}[h_{mk}(t)]$ has the global minimizer at $(b_1, b_2)= (b/2, b/2)$. 

\noindent{\bf Step (4)}. Utilizing a similar proof in {\bf Step 4} in Appendix~\ref{App: optimal H1H2h OS}, we can demonstrate that the following case cannot exist: where an optimal choice of $\big((b_1^{(1)})^*, (b_1^{(2)})^*\big)$ takes the form $\big((b_1^{(1)})^*, (b_1^{(2)})^*\big) = (b_1, 0)$, and an optimal choice of $\big((b_2^{(1)})^*, (b_2^{(2)})^*\big)$ takes the form $\big((b_2^{(1)})^*, (b_2^{(2)})^*\big) = (b_2/2, b_2/2)$.

\noindent{\bf Part II}. We prove that $(r_1, r_2) = (r, 0)$ if $r\leq1$ or $(r_1, r_2) = (1, r-1)$ if $r > 1$ is the global minimizer of $\sum_{k\in[2]}\mathbb{E}[h_{mk}(t)]$ for any $(b_1, b_2) = (b/2, b/2)$ with fixed $b$.

As we discussed in the proof of Lemma~\ref{pro: decreasing threshold w nonoblivious}, $\sum_{n\in[2]}b_k^{(n)}r_{nk}(l)$, which implies a negative correlated with $h_{mk}(l)$. Therefore,  $\sum_{k\in[2]}\sum_{n\in[2]}b_k^{(n)}r_{nk}(l)$ is negatively correlated with $\sum_{k\in[2]}h_{mk}(l)$. 

From {\bf Part I}, for any $r_1$ and $r_2$,  an optimal choice of $\big((b_k^{(1)})^*, (b_k^{(2)})^*\big)$ takes the form either $\big((b_k^{(1)})^*, (b_k^{(2)})^*\big) = (b/2, 0)$ or $(b/4, b/4)$. In each case, we can compute that 
\begin{align*}
\sum_{k\in[2]}\sum_{n\in[2]}b_k^{(n)}r_{nk}(l) = \frac{br}{2}
\end{align*}
is a constant. 

Under the case where, for $k\in[2]$, an optimal choice of $\big((b_k^{(1)})^*, (b_k^{(2)})^*\big)$ takes the form $\big((b_k^{(1)})^*, (b_k^{(2)})^*\big) = (b/2, 0)$,  $h_{mk}(l)$ follows the recursion \eqref{eq: proof hmkg-1}. In this case, we observe that $\sum_{k\in[2]}h_{mk}(l)$ is positively correlated with $\sum_{k\in[2]}\mathbb{E}[g_{1k}(l)]$. As discussed in the proof of Fact~\ref{eq: non oblivious sum g} in Appendix~\ref{App: lem: optimal H h NS}, $g_{nk}(l)$ is negatively correlated with $\zeta_{nk}r_{nk}(l)$, leading to a negative correlation between $\sum_{k\in[2]}h_{mk}(l)$ and $\sum_{k\in[2]}\zeta_{nk}r_{nk}(l)$. In the case where, for $k\in[2]$, an optimal choice of $\big((b_k^{(1)})^*, (b_k^{(2)})^*\big)$ takes the form $\big((b_k^{(1)})^*, (b_k^{(2)})^*\big) = (b/4, b/4)$. Note that   $\sum_{k\in[2]}h_{mk}(l)$ is positively correlated with $\sum_{k\in[2]}\sum_{n\in[2]}g_{nk}(l)$,  resulting in a negative correlation between $\sum_{k\in[2]}h_{mk}(l)$ and $\sum_{k\in[2]}\zeta_{nk}r_{nk}(l)$. Therefore, if $\sum_{k\in[2]}\mathbb{E}[\zeta_{nk}r_{nk}(l)]$ increases, $\sum_{k\in[2]}\mathbb{E}[h_{mk}(l)]$ will be non-increasing.

From Lemma~\ref{pro: proof q4q5}~(i) in Appendix~\ref{App: lem: optimal H h NS}, $\zeta_{nk}\mathbb{E}[r_{nk}(t)]$ increases with $r_1$. Note that $0\leq r_1\leq 1$ Thus, when $r\leq 1$, $\sum_{k\in[2]}\mathbb{E}[\zeta_{nk}r_{nk}(l)]$ attains its maximum when $r_1=r$, and when $r>1$,  $\sum_{k\in[2]}\mathbb{E}[\zeta_{nk}r_{nk}(l)]$ attains its maximum when $r_1=1$, thereby ensuring that $\sum_{k\in[2]}\mathbb{E}[h_{mk}(l)]$ attains its minimum. Therefore, the optimal allocation is $(r_1, r_2) = (r, 0)$ when $r \leq 1$, and $(r_1, r_2) = (1, r - 1)$ when $r > 1$, which minimizes $\sum_{k\in[2]}\mathbb{E}[h_{mk}(t)]$ for any fixed $b$.

\section{Generalizations}\label{App: Generalizations}

In this section, we extend our results by exploring two generalizations:
\begin{itemize} 
\item [](i) Collision channels — where the communication links from CUs to eRRHs may experience collisions (see Appendix~\ref{subApp: Collision Channels}). 
\item [](ii) Larger networks — where the system is scaled to accommodate an arbitrary number of $M$ CUs, $N$ eRRHs, and $K$ CPs (see Appendix~\ref{subApp: Larger Networks}). 
\end{itemize}
The analytical framework and proof techniques developed in the main text remain applicable to these generalized settings. As such, when a result involves proof methods that closely parallel those already presented, we omit restating the full details for brevity and to avoid redundancy.

\subsection{Collision Channels}\label{subApp: Collision Channels}
In this section, we relax the assumption of interference-free transmissions. The updated assumption is as follows: (i) when two CUs simultaneously request the same CP from the same eRRH, a collision occurs; (ii) however, if they request different CPs from the same eRRH, no collision occurs. The second condition can be justified through effective resource management. For instance, the eRRH can implement Frequency Division Multiple Access (FDMA), allocating distinct frequency bands to different CPs, thereby avoiding collisions even when multiple CUs are served concurrently.

The key change is that, due to collisions, a request for CP $k$ is successfully received at  an eRRH only if it is issued by a single CU. For analytical tractability, we impose the constraint 
\begin{align}\label{eq: restric condition}
	\sum_{n\in[N]}b_k^{(n)}=1.
\end{align}
This assumption can be justified as follows: if $\sum_{n\in[N]}b_k^{(n)}$ is large, we can reduce the length of the original slot, whereas if it is small, we can extend the slot accordingly. While this assumption is somewhat restrictive, it is introduced solely to establish a theoretically tractable framework. 

According to \eqref{eq: restric condition}, the evolutions of AoI at the eRRHs, as described in \eqref{eq: recursion of edge node AoI0}, is updated as follows:
\begin{align}\label{eq: recursion of edge node AoI0, collisions}
g_{nk}(t+1) = \left\{
\begin{aligned}
&1&\gamma_{nk}(t)\sum_{m\in[M]}\beta_{mk}^{(n)}(t)=1\\
&g_{nk}(t)+1&\text{otherwise}.
\end{aligned}
\right.
\end{align}
Similarly, the AoI on CUs' side, as described in \eqref{eq: recursion of user AoI}, is updated as follows:
\begin{align}\label{eq: recursion of user AoI, collisions}
	&h_{mk}(t+1) =  \left\{
	\begin{aligned}
		&1&&\text{if }\tilde{\beta}_{mk}(t)\gamma_{nk}(t)=1\\
		&\tilde{h}_{mk}^{(n)}(t)+1&&\text{if }\tilde{\beta}_{mk}^{(n)}(t)\big(1-\gamma_{nk}(t)\big)=1\\
		&h_{mk}(t)+1&&\text{otherwise}
	\end{aligned}
	\right.
\end{align}
with 
\begin{align*}
&\tilde{\beta}_{mk}(t) = \sum_{n}\beta_{mk}^{(n)}(t)\prod_{j\neq m}\big(1-\beta_{jk}^{(n)}(t)\big)\\
&\tilde{\beta}_{mk}^{(n)}(t) = \beta_{mk}^{(n)}(t)\prod_{j\neq m}\big(1-\beta_{jk}^{(n)}(t)\big)\\
&\tilde{h}_{mk}^{(n)}(t) = \min\{h_{mk}(t), g_{nk}(t)\}.
\end{align*}
By convention, we set $h_{mk}(0)=g_{nk}(0)=1$. 
For clarity, we adopt the following notation throughout the rest of this section:
\begin{align}\label{eq: zetac}
\zeta_{nk}^{c}=b_k^{(n)}(1-b_k^{(n)}).
\end{align}

\subsubsection{Optimal OSR Strategies}

Firstly, we derive the closed form of $L_e$ in the following proposition.
\begin{proposition}\label{pro: closed form Je, oblivious, C}
Let $\zeta_{nk}^{c}$ be in \eqref{eq: zetac}. The closed-form expression for the average AoI of eRRHs is provided by 
\begin{align}\label{eq: closed form Je, oblivious, C}
L_e=\frac{1}{2}\sum_{n\in[2]}\sum_{k\in[k]}\frac{1}{r_k\zeta_{nk}^c}.
\end{align}
\end{proposition}
\begin{proof}
The proof is very similar to that of Theorem~\ref{thm: closed form Je} given in Appendix~\ref{App: closed form Je}. The key difference lies in the substitution of $\zeta_{nk}$ with $2\zeta_{nk}^c$.
\end{proof}

Next, we aim to obtain an optimal OSR strategy  $\pi^*$ for $J_c$ in \eqref{eq: CUs average AoI}.

\begin{proposition}\label{pro: b r optimal, Oblivious, C}
Let \eqref{eq: restric condition} hold and $r$ be fixed. The optimal policy $\pi^*$ is given by
\begin{align}\label{eq: optimal OSR-1, Oblivious, C}
\pi^* = \{b_{mk}^{(n)}=\frac{1}{4}, r_k=\frac{r}{2}\}.
\end{align}
\end{proposition}
\begin{proof}
Similar to  \eqref{eq: upper bound H1} and \eqref{eq: upper bound H2}, we define two stochastic processes,
\begin{align}\label{eq: upper bound H1, Oblivious, C}
H_{mk, 1}(t+1) = \left\{
\begin{aligned}
&1&& \sum_{n=1}^{2}\zeta_{nk}^c r_k\\
&h_{mk}(t)+1&&1-\sum_{n=1}^{2}\zeta_{nk}^c r_k;
\end{aligned}
\right.
\end{align}
and
\begin{align}\label{eq: upper bound H2, Oblivious, C}
H_{mk, 2}(t+1) = \left\{
\begin{aligned}
			&1&& \sum_{n=1}^{2}\zeta_{nk}^c r_k\\
			&g_{1k}(t)+1&& \zeta_{1k}^c(1-r_k)\\
			&g_{2k}(t)+1&&\zeta_{2k}^c(1-r_k)\\
			&h_{mk}(t)+1&&1 - \sum_{n=1}^{2}\zeta_{nk}^c,
		\end{aligned}
		\right.
	\end{align}
	with $H_{mk, 1}(0) = H_{mk, 2}(0)=1$. From \eqref{eq: recursion of user AoI, collisions}, \eqref{eq: upper bound H1, Oblivious, C}, and \eqref{eq: upper bound H2, Oblivious, C}, 
	\begin{align*}
	h_{mk}(t) \overset{d}{=} \min\{H_{mk, 1}(t), H_{mk, 2}(t)\}
	\end{align*}
	where $\overset{d}{=}$ represents equality {\it in distribution}. 

By replacing \eqref{eq: upper bound H1} and \eqref{eq: upper bound H2} with \eqref{eq: upper bound H1, Oblivious, C} and \eqref{eq: upper bound H2, Oblivious, C}, respectively, and following a similar argument to that of Lemma~\ref{lem: optimal H h OS} in Appendix~\ref{App: lem: optimal H h OS}, we analyze the case where $b_1 = b_2 = 1$. Using the same approach, it can be shown that when $t = 2$, the function $\mathbb{E}[h_{mk}(2)]$ is convex with respect to $b_k^{(1)}$ for all $n, k \in [2]$. This implies that the symmetric point $(b_k^{(1)}, b_k^{(2)}) = (1/2, 1/2)$ is the global minimizer of $\mathbb{E}[h_{mk}(2)]$. Furthermore, by applying Lemma~\ref{pro: decreasing threshold w} in Step 3 of Appendix~\ref{App: lem: optimal H h OS}, the same point remains a minimizer for $\mathbb{E}[h_{mk}(t)]$ for all $t \geq 3$.

The proof follows a similar structure to that of Theorem~\ref{thm: b r optimal}.  

\end{proof}

\subsubsection{Optimal NSR Policies}

Utilizing a similar framework as in Section~\ref{sec: optimal NSR policies}, we can derive the closed-form expression for $L_e$ under non-oblivious policies.  By replacing $\zeta_{nk}$ with $2\zeta_{nk}^c$, we obtain $\pi_{\phi}^c$ in \eqref{eq: steady-state distribution phi} and $F_{nk}^c(\phi)$  in \eqref{eq: systems of Fnk-1}.
\begin{proposition}\label{pro: expectation of Xnk, non-oblivious, C}
The closed-form expression for the average AoI of eRRHs is provided by 
\begin{align}\label{eq: closed form Je, non-oblivious, c}
L_e=\frac{1}{4}\sum_{n\in[2]}\sum_{k\in[2]}\sum_{\phi\in\mathcal{G}_{n}^z}F_{nk}^c(\phi)\pi_\phi^c,
\end{align}
where $\pi_\phi^c$ is provided by \eqref{eq: steady-state distribution phi} and $F_{nk}^c(\phi)$ is provided by \eqref{eq: systems of Fnk-1}.
\end{proposition}
\begin{proof}
The proof closely follows the structure of Theorem~\ref{thm: expectation of Xnk}. The key difference lies in the substitution of $\zeta_{nk}$ with $2\zeta_{nk}^c$.
\end{proof}

\begin{proposition}\label{thm: b r optimal NS, non-oblivious, c}
Let \eqref{eq: restric condition} hold and $r$ be fixed. The optimal policy $\pi^*$ is given by
\begin{align}\label{eq: optimal NSR-1 collision}
	\pi^* = \Big\{
	b_{mk}^{(n)}=\frac{1}{4}, (r_1, r_2)\text{ is given in \eqref{eq: optimal r}}
	\Big\}.
\end{align}
\end{proposition}
\begin{proof}
Let $r_{nk}(t)$ be in \eqref{eq: taukt}, and let $\zeta_{nk}^c$ be as given in \eqref{eq: zetac}. Following the formulations in \eqref{eq: upper bound H1 NS} and \eqref{eq: upper bound H2 NS}, we introduce two stochastic processes:
\begin{align}\label{eq: upper bound H1 NS, non-oblivious, c}
H_{mk, 1}'(t+1) = \left\{
\begin{aligned}
&1&& \sum_{n\in[2]}\zeta_{nk}^cr_{nk}(t)\\
&h_{mk}(t)+1&&1-\sum_{n\in[2]}\zeta_{nk}^cr_{nk}(t)
\end{aligned}
\right.
\end{align}
and
\begin{align}\label{eq: upper bound H2 NS, non-oblivious, c}
H_{mk, 2}'(t+1) = \left\{
\begin{aligned}
&1&&\sum_{n\in[2]}\zeta_{nk}^cr_{nk}(t)\\
&g_{1k}(t)+1&& \zeta_{1k}^c\big(1-r_{1k}(t)\big)\\
&g_{2k}(t)+1&&\zeta_{2k}^c\big(1-r_{2k}(t)\big)\\
&h_{mk}(t)+1&&1-\sum_{n\in[2]}\zeta_{nk}^c,
\end{aligned}
\right.
\end{align}
From \eqref{eq: recursion of user AoI, collisions}, \eqref{eq: upper bound H1 NS, non-oblivious, c}, and \eqref{eq: upper bound H2 NS, non-oblivious, c}, we have
\begin{align*}
h_{mk}(t) \overset{d}{=} \min\{H_{mk, 1}'(t), H_{mk, 2}'(t)\},
\end{align*}
where $\overset{d}{=}$ represents equality {\it in distribution}. 

By replacing  \eqref{eq: upper bound H1 NS} and \eqref{eq: upper bound H2 NS} with \eqref{eq: upper bound H1 NS, non-oblivious, c} and \eqref{eq: upper bound H2 NS, non-oblivious, c}, respectively, and following a similar argument to that of Lemma~\ref{lem: optimal H h NS} in Appendix~\ref{App: lem: optimal H h NS}, we analyze the case where $b_1 = b_2 = 1$. Using the same approach, it can be shown that when $t = 2$, the function $\mathbb{E}[h_{mk}(2)]$ is convex with respect to $b_k^{(1)}$ for all $n, k \in [2]$. This implies that the symmetric point $(b_k^{(1)}, b_k^{(2)}) = (1/2, 1/2)$ is the global minimizer of $\mathbb{E}[h_{mk}(2)]$. Furthermore, by applying Lemma~\ref{pro: decreasing threshold w nonoblivious} in Step 3 of Appendix~\ref{App: lem: optimal H h NS}, the same point remains a minimizer for $\mathbb{E}[h_{mk}(t)]$ for all $t \geq 3$.
	
The remaining proof closely follows that of Theorem~\ref{thm: b r optimal NS}. 
\end{proof}

\subsubsection{Simulations}

We verify our findings through simulations. Let $M=N=K=2$. For non-oblivious policies, we let pre-determined threshold $z=3$. 

\begin{figure}[htbp]
	\centering
net	\includegraphics[height=5cm, width=6.5cm]{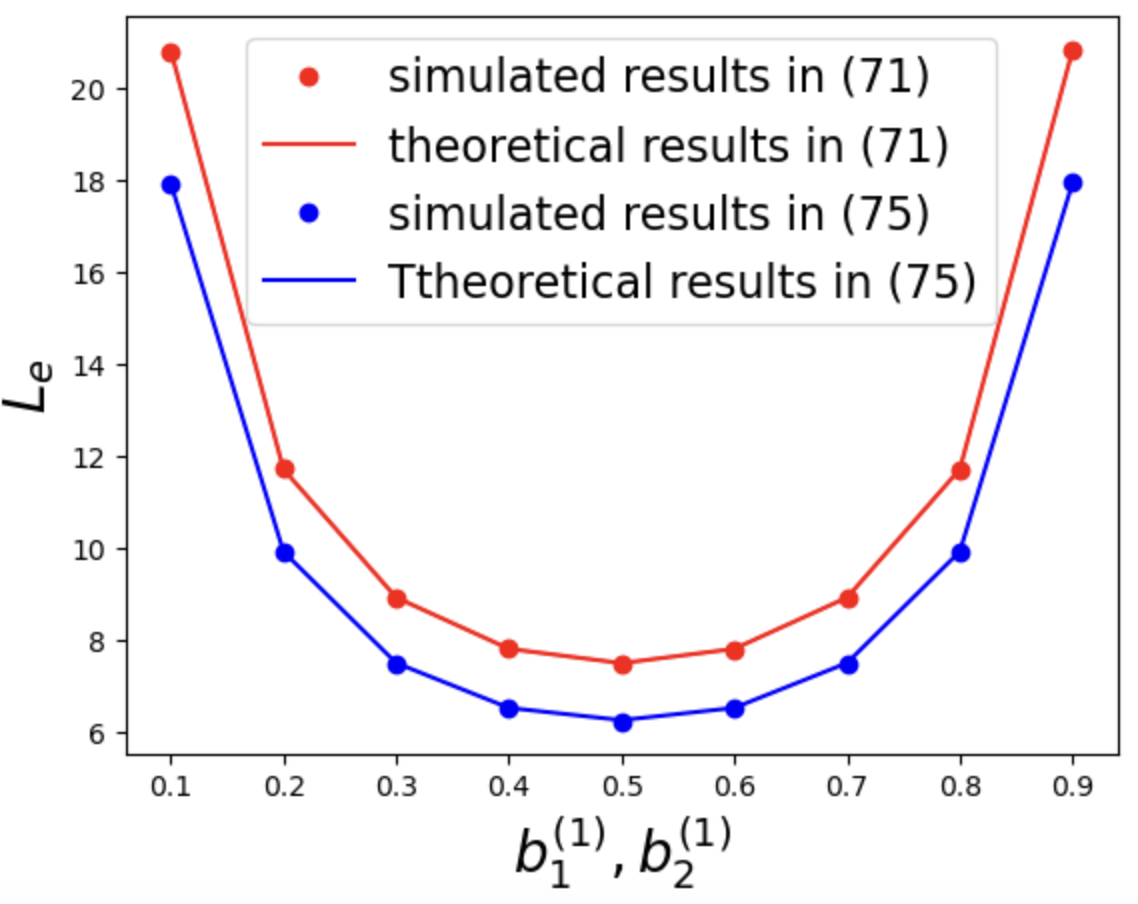}
	\caption{The theoretical and numerical $L_e$ under both oblivious and non-oblivious policies.}
	\label{Le_c}
\end{figure} 
In Fig.~\ref{Le_c},  we let $r_1=0.4$, $r_2=0.2$, $b_1=b_2=1$, and $z=3$ for non-oblivious policies, while varying $b_1^{(1)}=b_2^{(1)}$ within the range $[0.1, 0.9]$. The closed-form expressions for $L_e$ align perfectly with the simulation results, confirming the accuracy of Proposition~\ref{pro: closed form Je, oblivious, C} and Proposition~\ref{pro: expectation of Xnk, non-oblivious, C}. 

\begin{figure}[htbp]
	\centering
	\includegraphics[height=4.5cm, width=8cm]{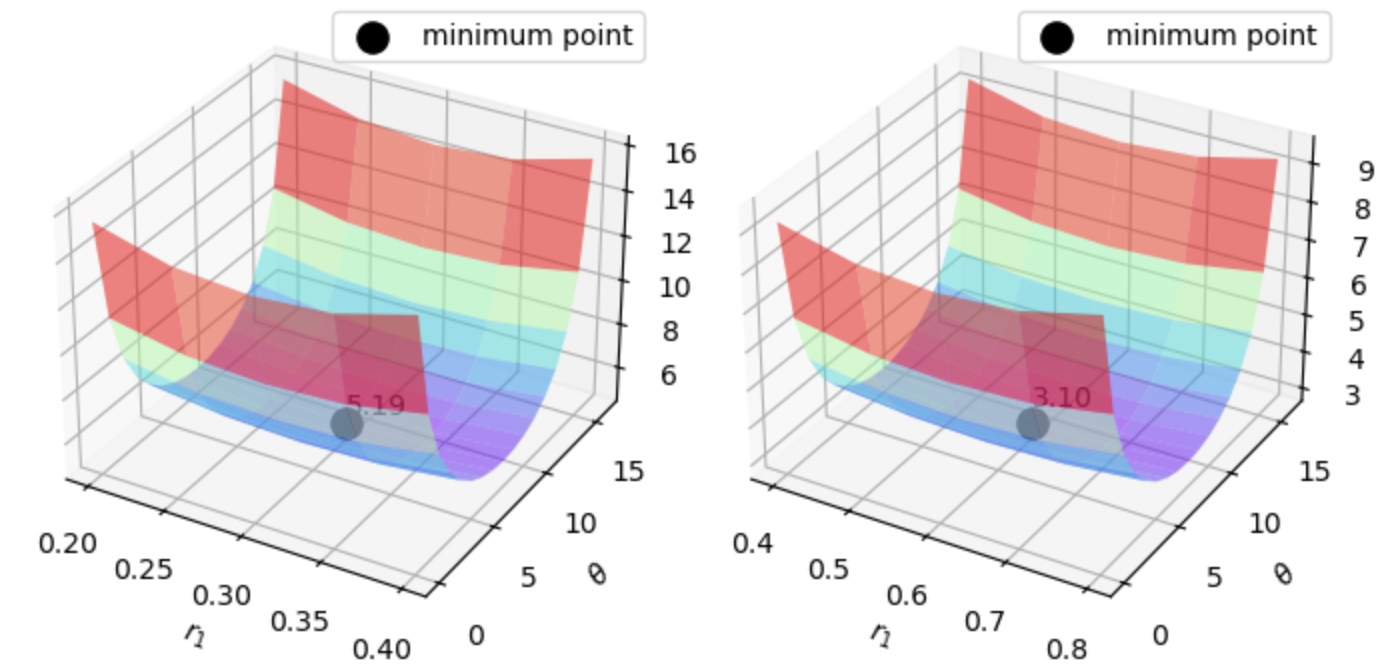}
	\caption{Performance of $J_c$ under oblivious policies for $r=0.6$ (left) and $b=1.2$ (right).}
	\label{network2_c_O}
\end{figure} 

\begin{figure}[htbp]
	\centering
	\includegraphics[height=4.5cm, width=8cm]{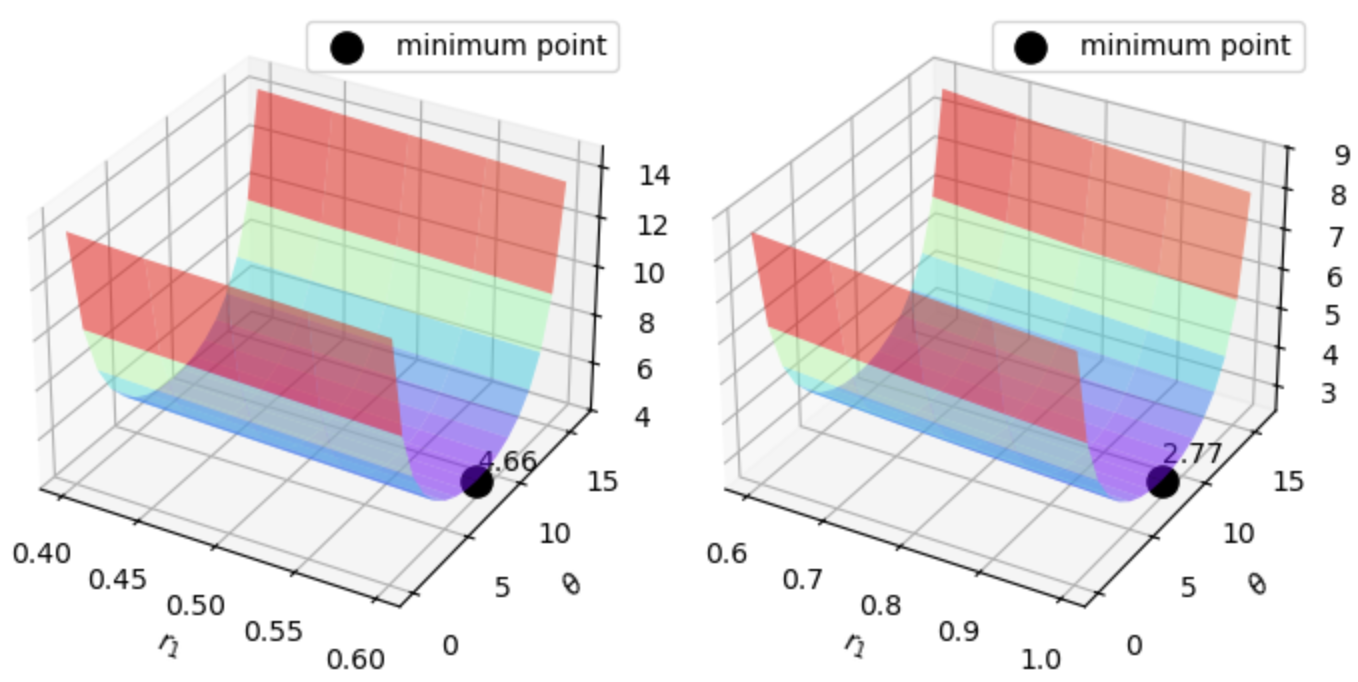}
	\caption{Performance of $J_c$ under non-oblivious policies for $r=0.6$ (left) and $r=1.2$ (right).}
	\label{network2_c_N}
\end{figure} 

We verify the optimality of both the oblivious and non-oblivious policies as stated in Proposition~\ref{pro: b r optimal, Oblivious, C} and Proposition~\ref{thm: b r optimal NS, non-oblivious, c}, respectively, using the results presented in Fig.~\ref{network2_c_O} and Fig.~\ref{network2_c_N}. In these figures:
(i) the $x$-axis shows $r_1$;
(ii) the $y$-axis shows the index $\theta\in[17]$, which determines the value of $b_k^{(1)}$ via the relation $b_k^{(1)} = \frac{b_k}{17}\cdot\theta = \frac{\theta}{17}$ (with $b_1 = b_2 = 1$ fixed);
(iii) the $z$-axis shows the resulting value of $J_c$.
We consider two values for the total rate, $r \in \{0.6, 1.2\}$. 
From Fig.~\ref{network2_c_O}, under oblivious policies, $J_c$ achieves its minimum at $(b_1^{(1)}, b_1^{(2)}, b_2^{(1)}, b_2^{(2)}) = (1/4, 1/4, 1/4, 1/4)$ and $r_1=r_2=r/2$, which aligns with the optimal solution described in Proposition~\ref{pro: b r optimal, Oblivious, C}. From Fig.~\ref{network2_c_N}, under non-oblivious policies, $J_c$ achieves its minimum at $(b_1^{(1)}, b_1^{(2)}, b_2^{(1)}, b_2^{(2)}) = (1/4, 1/4, 1/4, 1/4)$ and $r_1=\min\{r, 1\}, r_2=r-\min\{r, 1\}$, which aligns with the optimal solution described in Proposition~\ref{thm: b r optimal NS, non-oblivious, c}.

\subsection{Larger Networks}\label{subApp: Larger Networks}
In this section, we extend the framework to support larger networks by considering a general F-RAN architecture comprising $M$ CUs, $M$ eRRHs, and $K$ CPs. To maintain analytical tractability under this extended setting, we focus {\it exclusively} on oblivious policies. It is important to note that the recursive formulations for $g_{nk}(t)$ and $h_{mk}(t)$ remain unchanged and are still given by \eqref{eq: recursion of edge node AoI0} and \eqref{eq: recursion of user AoI}, respectively.  For convenience, we define:
\begin{align}\label{eq: zetalnk}
\zeta^l_{nk}=r_k\big(1-(1-b_k^{(n)})^N\big).
\end{align}
\subsubsection{Optimal OSR Strategies}\label{subsec: Optimal Oblivious Stationary Strategies}

Firstly, we derive the closed form of $L_e$ in the following proposition.
\begin{proposition}\label{pro: closed form Je, oblivious, L}
Let $\zeta_{nk}^l$ be in \eqref{eq: zetalnk}.  The closed-form expression for the average AoI of eRRHs is provided by 
\begin{align}\label{eq: closed form Je, oblivious, L}
L_e=\frac{1}{NK}\sum_{n=1}^{N}\sum_{k=1}^{K}\frac{1}{r_k\zeta^l_{nk}}.
\end{align}
\end{proposition}
\begin{proof}
Based on \eqref{eq: recursion of edge node AoI0},  the recursion of $g_{nk}(t)$ in general networks is given by
\begin{align}\label{eq: general g o2s}
	g_{nk}(t+1) = \left\{
	\begin{aligned}
		&1&& \text{w.p. }r_k\zeta^l_{nk}\\
		&g_{nk}(t)+1&&\text{w.p. }1-r_k\zeta^l_{nk},
	\end{aligned}
	\right.
\end{align}

The remaining proof closely follows that of Theorem~\ref{thm: closed form Je} in Appendix~\ref{App: closed form Je}, with one key distincion. Specifically, we replace \eqref{eq: zeta} and \eqref{eq: oblivious recursion g-1} with \eqref{eq: zetalnk} and \eqref{eq: general g o2s}, respectively.
\end{proof}

Next, we aim to obtain an optimal policy $\pi^*$ for $J_c$ in \eqref{eq: CUs average AoI}.
\begin{proposition}\label{pro: b r optimal, Oblivious, L}
Let $\zeta_{nk}^l$ be in \eqref{eq: zetalnk}.	Let $b, r$ be fixed, $(b, r)\in\mathcal{F}$, and $ m,n,k\in[2]$.  Let $\mathcal{K}_j$ denote any $N-j$ elements of $[N]$, i.e., $\mathcal{K}_j \subset[N]$ and $|\mathcal{K}_j| = N-j$. There exists a sequence of thresholds (depending on $r$): $w_{1}(r) > w_{2}(r)\cdots > w_{N-1}(r)$, such that if $b>w_{1}(r)$, an optimal policy $\pi^*$ is given by
	\begin{align}\label{eq: optimal OSR-1, Oblivious, L}
		\pi^* = \{b_{mk}^{(n)}=\frac{b}{MN}, r_k=\frac{r}{K}\};
	\end{align}
	if $w_{j+1}(r) < b\leq w_{j}(r)$, an optimal policy is given by
	\begin{align}\label{eq: optimal OSR-2, Oblivious, L}
		\pi^* = \Big\{&b_{mk}^{(i)}=\frac{b}{M (N-j)},\,\, i\in \mathcal{K}_j,\nonumber\\
		&b_{mk}^{(i)}= 0,\,\, i\in[N]\backslash \mathcal{K}_j,\,\, r_k=\frac{r}{K}
		\Big\};
	\end{align}
and if $b\leq w_{N-1}(r)$, an optimal policy is given by
\begin{align}\label{eq: optimal OSR-3, Oblivious, L}
\pi^* = \Big\{&b_{mk}^{(i)} = \frac{b}{M},\,\, i\in\mathcal{K}_{N-1}\nonumber\\
&b_{mk}^{(i)} = 0,\,\, i\in[N]\backslash\mathcal{K}_{N-1},\,\,r_k=\frac{r}{K}
\Big\}.
\end{align}
\end{proposition}
\begin{remark}
Proposition~\ref{pro: b r optimal, Oblivious, L} extends the insight of Theorem~\ref{thm: b r optimal}, revealing how the optimal request distribution adapts to varying demand and communication resources. When demand is high and communication resources are abundant, the optimal strategy evenly distributes requests among all eRRHs. However, as communication resources or demand decrease, the optimal approach shifts towards consolidating requests to a subset of eRRHs. Moreover, the greater the reduction in resources or demand, the smaller the subset of eRRHs that should be utilized. The values of the thresholds depend on the parameter $r$. Simulations demonstrate that, in some cases, portions of the thresholds may fall outside the range of $b$, i.e., $[0, M]$.
\end{remark}
\begin{proof}
Let $\mathcal{K}_j$ denote any $N-j$ elements of $[N]$, i.e., $\mathcal{K}_j \subset[N]$ and $|\mathcal{K}_j| = N-j$. For any fixed sequence $\{\mathcal{K}_1, \mathcal{K}_2,\cdots,\mathcal{K}_{N-1}\}$, similar to  \eqref{eq: upper bound H1} and \eqref{eq: upper bound H2}, we define the following $N$ stochastic processes,
\begin{align}\label{eq: upper bound H1, Oblivious, L}
	H_{mk, 1}(t+1) = \left\{
	\begin{aligned}
		&1&& b_k r_k\\
		&h_{mk}(t)+1&&1-b_k r_k,
	\end{aligned}
	\right.
\end{align}
and for $2\leq j\leq N$,
\begin{align}\label{eq: upper bound H2, Oblivious, L}
H_{mk, j}(t+1) = \left\{
\begin{aligned}
&1&&b_kr_k\\
&g_{i k}(t)+1&& b_k^{(i)}(1-r_k),i\in \mathcal{K}_{N-j}\\
&h_{mk}(t)+1&&1-b_k+\sum_{i\notin \mathcal{K}_{N-j}}b_k^{(i)}r_k,
	\end{aligned}
	\right.
\end{align}
with $H_{mk, j}(0) =1$, $j\in\{1,2,\cdots, N\}$. Similar to \eqref{eq: hH1H2}, From \eqref{eq: recursion of user AoI}, \eqref{eq: upper bound H1, Oblivious, L}, and \eqref{eq: upper bound H2, Oblivious, L}, 
\begin{align}\label{eq: hH1H2, Oblivious, L}
h_{mk}(t) \overset{d}{=} \min_{j\in\{1,2,\cdots,N\}}H_{mk, j}(t)
\end{align}
where $\overset{d}{=}$ represents equality {\it in distribution}.

The proof follows a structure similar to that of Theorem~\ref{thm: b r optimal}. The key difference lies in the substitution of \eqref{eq: upper bound H1} and \eqref{eq: upper bound H2} are replaced by \eqref{eq: upper bound H1, Oblivious, L} and \eqref{eq: upper bound H2, Oblivious, L}, respectively. Furthermore, the proof framework from Theorem~\ref{thm: b r optimal} must be applied iteratively $N$ times.
\end{proof}
\subsubsection{Simulations}

In this section, we verify our theoretical findings on F-RANs with $M=N=K=3$. In Fig.~\ref{Le_l},  we fix $r_1=0.4$, $r_2=0.3$, $r_3=0.2$, and  $b_k=0.8$, while varying $b_k^{(1)}=b_k^{(2)}$ within the range $(0, 0.4)$ for $k\in\{1,2,3\}$. The closed-form expressions for $L_e$ align perfectly with the simulation results, confirming the accuracy of Proposition~\ref{pro: closed form Je, oblivious, L}. 

\begin{figure}[htbp]
	\centering
	\includegraphics[height=5cm, width=6cm]{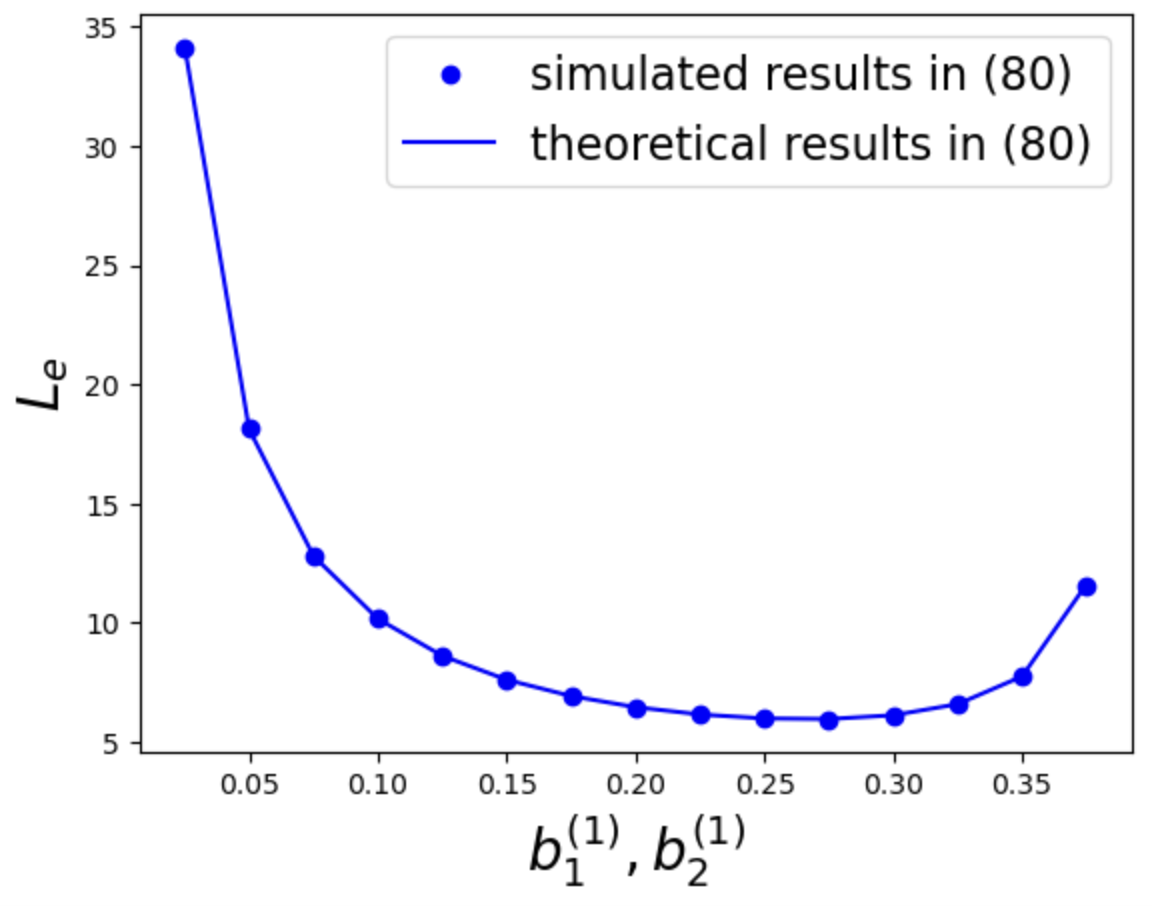}
	\caption{The theoretical and numerical $L_e$ under oblivious policies.}
	\label{Le_l}
\end{figure}

We verify the optimality of the oblivious policies as stated in Proposition~\ref{pro: b r optimal, Oblivious, L}, using the results shown in Fig.~\ref{network3_min1}, Fig.~\ref{network3_min2}, and Fig.~\ref{network3_min3}. In these figures:
(i) The $x$-axis corresponds to $r_1$.
(ii) The $y$-axis represents the index $\theta\in[21]$, which determines the value of $b_k^{(1)}$ via the relation $b_k^{(1)} = \frac{b_k}{20}\cdot\theta$. Since $b_k^{(n)}$ for $n\in\{1,2,3\}$ varies within $[0, b_k]$.  For simplicity, we set $b_k^{(2)}=b_k^{(3)} = \frac{b_k - b_k^{(1)}}{2}$.
(iii) The $z$-axis shows the resulting value of $J_c$.
We consider a fixed $r$ and three values of $b$, i.e., $r=0.9$ and $b\in\{0.3, 1.8, 2.7\}$.

\begin{figure}[t]
	\centering
	\subfigure[$(b, r)=(0.3, 0.9)$]{%
		\includegraphics[width=0.48\linewidth]{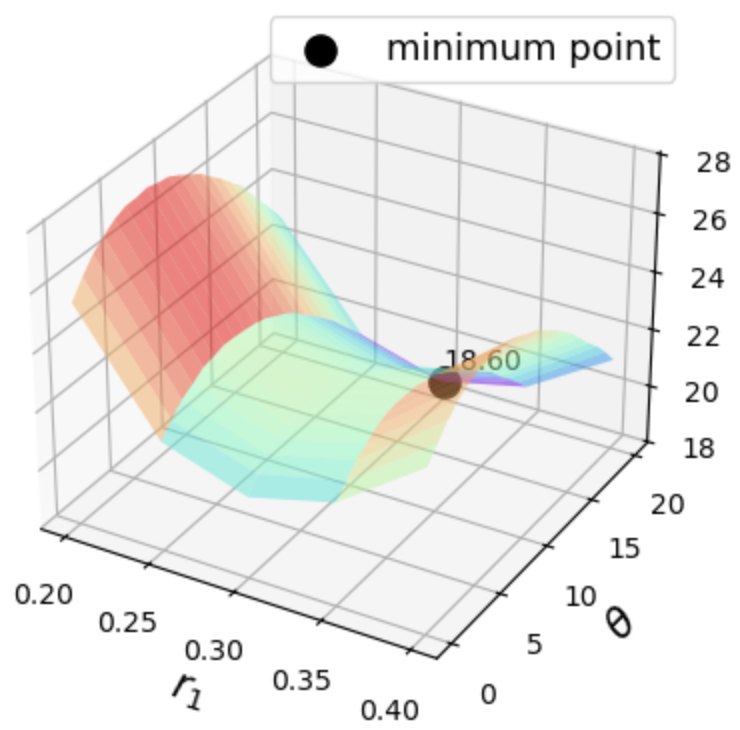}
		\label{network3_min1}}
	\subfigure[$(b, r)=(1.8, 0.9)$]{%
		\includegraphics[width=0.48\linewidth]{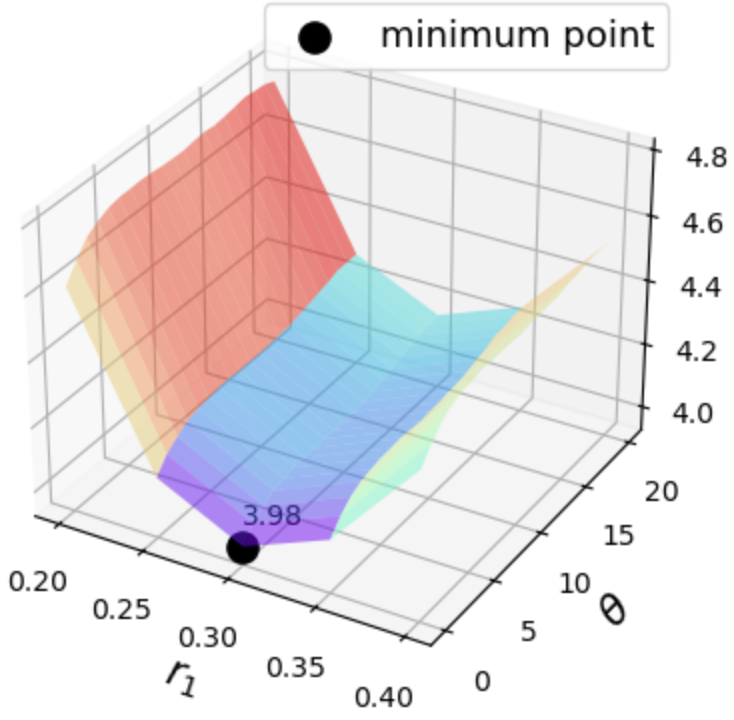}
		\label{network3_min2}}
	\subfigure[$(b, r)=(2.7, 0.9)$]{%
		\includegraphics[width=0.48\linewidth]{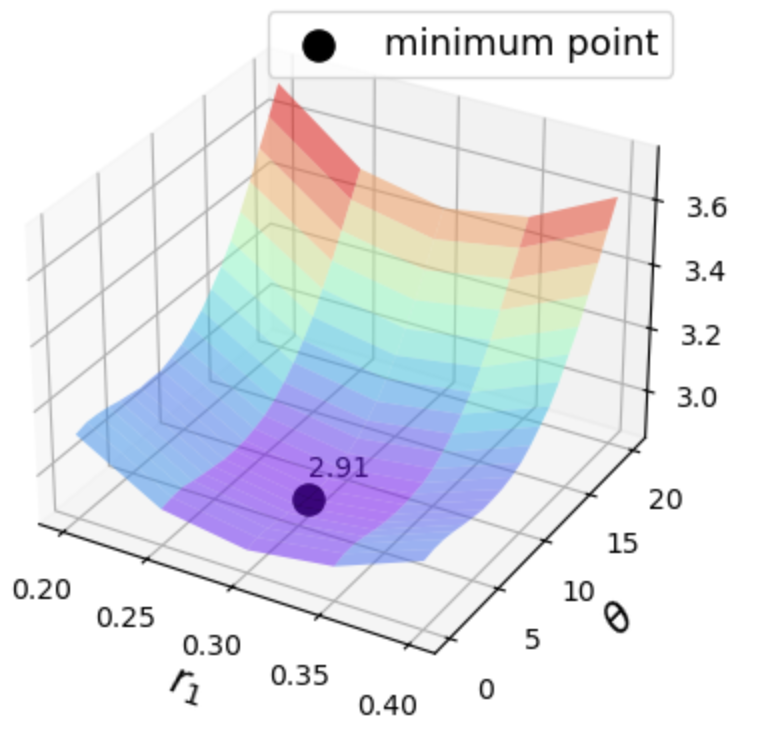}
		\label{network3_min3}}
	\caption{Performance of $J_c$ under oblivious policies with different $(b, r)$.}
	\label{fig:three_subfigs}
\end{figure}

From Fig.~\ref{network3_min1}, when $b$ is relatively small ($=0.3$), $J_c$ achieves its minimum at $(r_1, r_2, r_3, b_1,b_2,b_3) = (0.3, 0.3, 0.3, 0.1, 0.1, 0.1)$ and $(b_k^{(1)}, b_k^{(2)}, b_k^{(3)}) = (0.1, 0, 0)$, which aligns with the optimal solution described in \eqref{eq: optimal OSR-3, Oblivious, L} in Proposition~\ref{pro: b r optimal, Oblivious, L}. From Fig.~\ref{network3_min2}, when $b$ is moderate ($=1.8$), $J_c$ achieves its minimum at $(r_1, r_2, r_3, b_1,b_2,b_3) = (0.3, 0.3, 0.3, 0.6, 0.6, 0.6)$ and $(b_k^{(1)}, b_k^{(2)}, b_k^{(3)}) = (0, 0.3, 0.3)$, which aligns with the optimal solution described in \eqref{eq: optimal OSR-2, Oblivious, L} in Proposition~\ref{pro: b r optimal, Oblivious, L}. From Fig.~\ref{network3_min3}, when $b$ is relatively large ($=2.7$), $J_c$ achieves its minimum at $(r_1, r_2, r_3, b_1,b_2,b_3) = (0.3, 0.3, 0.3, 0.9, 0.9, 0.9)$ and $(b_k^{(1)}, b_k^{(2)}, b_k^{(3)}) = (0.3, 0.3, 0.3)$, which aligns with the optimal solution described in \eqref{eq: optimal OSR-1, Oblivious, L} in Proposition~\ref{pro: b r optimal, Oblivious, L}.

\end{document}